

\documentclass[runningheads,envcountsame,envcountsect,12pt]{llncs}
\usepackage[a4paper,margin=0.75in]{geometry}


\usepackage{graphbox,graphicx}
\usepackage{subcaption}

\usepackage{hyperref,xcolor}

\usepackage{amsmath,amssymb}

\usepackage{mathtools}

\spnewtheorem{observation}[theorem]{Observation}{\bfseries}{\itshape}
\spnewtheorem{myclaim}[theorem]{Claim}{\bfseries}{\itshape}

\newcommand{\fig}{./fig}

\usepackage{color}
\definecolor{darkred}{rgb}{0.7,0,0}

\definecolor{lightblue}{rgb}{.3,.3,1}

\usepackage{tabularx,environ}

\makeatletter
\newcommand{\problemtitle}[1]{\gdef\@problemtitle{#1}}
\newcommand{\probleminput}[1]{\gdef\@probleminput{#1}}
\newcommand{\problemquestiontitle}[1]{\gdef\@problemquestiontitle{#1}}
\newcommand{\problemquestion}[1]{\gdef\@problemquestion{#1}}

\NewEnviron{myproblem}{
  \problemtitle{}\probleminput{}\problemquestion{}
  \BODY
  \par\addvspace{.5\baselineskip}
  \noindent \normalsize
  \begin{tabularx}{\textwidth}{@{\hspace{\parindent}} l X c}
    \multicolumn{2}{@{\hspace{\parindent}}l}{\normalsize\@problemtitle} \\
    \normalsize \textbf{Input:} & \normalsize \  \@probleminput \\
    \normalsize \textbf{\@problemquestiontitle:} & \normalsize \  \@problemquestion
  \end{tabularx}
  \par\addvspace{.5\baselineskip}
}
\makeatother

\newcommand{\alpp}{\textsf{ALPP}}
\newcommand{\falpp}{\textsf{Full}-\alpp}
\newcommand{\xalpp}{\textsf{Extended}-\alpp}
\newcommand{\pcsat}{\textsc{Planar Circuit SAT}}
\newcommand{\bsat}{3-\textsc{Sat}($2, 1$)}

\newcommand{\sapp}{\textsf{SAPP}}
\newcommand{\fsapp}{\textsf{Full}-\sapp}

\newcommand{\pw}{\mathsf{pw}} 
\newcommand{\ns}{\mathsf{ns}} 
\newcommand{\tw}{\mathsf{tw}} 
\newcommand{\cw}{\mathsf{cw}} 

\newcommand{\figref}[1]{\figurename~\ref{#1}}

\renewcommand{\orcidID}[1]{} 

\begin{document}


\title{Parameterized Complexity of $(A,\ell)$-Path Packing\thanks{%
Partially supported
by PRC CNRS JSPS project PARAGA,
by JSPS KAKENHI Grant Numbers 
JP17H01698, 
JP18H04091, 
JP18H05291, 
JP18K11157, 
JP18K11168, 
JP18K11169, 
JP19K21537, 
JP20K11692, 
JP20K19742. 
The authors thank Tatsuya Gima for helpful discussions.
A preliminary version appeared in the proceedings of
the 31st International Workshop on Combinatorial Algorithms (IWOCA 2020),
Lecture Notes in Computer Science 12126 (2020) 43--55.}}
%

\author{%
R\'{e}my Belmonte\inst{1}\and 
Tesshu Hanaka\inst{2}\orcidID{0000-0001-6943-856X} \and
Masaaki Kanzaki\inst{3} \and
Masashi Kiyomi\inst{4}\orcidID{} \and
Yasuaki~Kobayashi\inst{5}\orcidID{} \and 
Yusuke Kobayashi\inst{5}\orcidID{0000-0001-9478-7307} \and 
Michael Lampis\inst{6}\orcidID{0000-0002-5791-0887} \and 
Hirotaka Ono\inst{7}\orcidID{0000-0003-0845-3947} \and
Yota~Otachi\inst{7}\orcidID{0000-0002-0087-853X}
}
\authorrunning{Belmonte et al.}
%

\institute{
The University of Electro-Communications, Chofu, Tokyo, Japan\\
\email{remybelmonte@gmail.com}
\and
Chuo University, Bunkyo-ku, Tokyo, Japan\\
\email{hanaka.91t@g.chuo-u.ac.jp}
\and
Japan Advanced Institute of Science and Technology, Nomi, Japan\\
\email{kanzaki@jaist.ac.jp}
\and
Yokohama City University, Yokohama, Japan\\
\email{masashi@yokohama-cu.ac.jp}
\and
Kyoto University, Kyoto, Japan\\
\email{kobayashi@iip.ist.i.kyoto-u.ac.jp}, \email{yusuke@kurims.kyoto-u.ac.jp}
\and
Universit\'{e} Paris-Dauphine, PSL University, CNRS, LAMSADE, 75016, Paris, France\\
\email{michail.lampis@lamsade.dauphine.fr}
\and
Nagoya University, Nagoya, 464-8601, Japan\\
\email{ono@nagoya-u.jp}, \email{otachi@nagoya-u.jp}
}

\maketitle

\begin{abstract}
Given a graph $G = (V,E)$, $A \subseteq V$, and integers $k$ and $\ell$,
the \textsc{$(A,\ell)$-Path Packing} problem asks to find $k$ vertex-disjoint paths of length $\ell$
that have endpoints in $A$ and internal points in $V \setminus A$.
We study the parameterized complexity of this problem 
with parameters $|A|$, $\ell$, $k$, treewidth, pathwidth,
and their combinations.
We present sharp complexity contrasts with respect to these parameters.
Among other results, we show 
that the problem is polynomial-time solvable when $\ell \le 3$,
while it is NP-complete for constant $\ell \ge 4$.
We also show that the problem is W[1]-hard parameterized by pathwidth${}+|A|$,
while it is fixed-parameter tractable parameterized by treewidth${}+\ell$.

\keywords{$A$-path packing, fixed-parameter tractability, treewidth}
\end{abstract}


\section{Introduction}
\label{sec:intro}

Let $G = (V,E)$ be a graph and $A \subseteq V$.
A path $P$ in $G$ is an \emph{$A$-path} if the first and the last vertices of $P$ belong to $A$
and all other vertices of $P$ belong to $V \setminus A$.
Given $G$ and $A$, \textsc{$A$-Path Packing} is 
the problem of finding the maximum number of vertex-disjoint $A$-paths in $G$.
The \textsc{$A$-Path Packing} problem 
is well studied and even some generalized versions are known to be polynomial-time solvable
(see e.g.,~\cite{Gallai61,Mader78,ChudnovskyGGGLS06,Pap07,ChudnovskyCG08,Pap08}).
Note that \textsc{$A$-Path Packing}
is a generalization of \textsc{Maximum Matching}
since they are equivalent when $A = V$.

In this paper, we study a variant of \textsc{$A$-Path Packing} that also generalizes \textsc{Maximum Matching}.
An $A$-path of length $\ell$ is an \emph{$(A,\ell)$-path},
where the length of a path is the number of edges in the path.
Now our problem is defined as follows:
\begin{myproblem}
  \problemtitle{\textsc{$(A,\ell)$-Path Packing} \ (\alpp)}
  \probleminput{A tuple $(G,A,k,\ell)$, where $G = (V,E)$ is a graph, $A \subseteq V$, 
    and $k$ and $\ell$ are positive integers.}
  \problemquestiontitle{Question}
  \problemquestion{Does $G$ contain $k$ vertex-disjoint $(A,\ell)$-paths?}
\end{myproblem}

To the best of our knowledge, this natural variant of \textsc{$A$-Path Packing} was not studied in the literature.
Our main motivation of studying {\alpp} is to see theoretical differences from the original \textsc{$A$-Path Packing},
but practical motivations of having the length constraint may come from some physical restrictions
or some fairness requirements.
Note that if $\ell = 1$, then {\alpp} is equivalent to \textsc{Maximum Matching}.
Another related problem is \textsc{$\ell$-Path Partition}~\cite{YanCHH97,Steiner03,MonnotT07},
which asks for vertex-disjoint paths of length $\ell$ (without specific endpoints).

In the rest of paper, we assume that $k \le |A|/2$ in every instance 
as otherwise the instance is a trivial no-instance.
The restricted version of the problem where the equality $k = |A|/2$ is forced
is also of our interest as that version corresponds to a ``full'' packing of $A$-paths.
We call this version \textsc{Full $(A,\ell)$-Path Packing} (\falpp, for short).
In this paper,
all our positive results showing tractability of some cases will be on the general {\alpp},
while all our negative (or hardness) results will be on the possibly easier {\falpp}.

We assume that the reader is familiar with terminologies
in the parameterized complexity theory.
See the textbook by Cygan et al.~\cite{CyganFKLMPPS15} for standard definitions.

\subsection*{Our results}
In summary, we show that {\alpp} is intractable even on very restricted inputs,
while it has some nontrivial cases that admit efficient algorithms.
(See \figref{fig:summary}.)

We call $|A|$, $k$, and $\ell$ the \emph{standard parameters} of {\alpp} as they naturally arise
from the definition of the problem.
We determine the complexity of {\alpp} with respect to all standard parameters
and their combinations.
We first observe that {\falpp} is NP-complete for any constant $|A| \ge 2$ (Observation~\ref{obs:fixed-|A|})
and for any constant $\ell \ge 4$ (Observation~\ref{obs:fixed-ell}),
while it is polynomial-time solvable when $\ell \le 3$ (Theorem~\ref{thm:ell<=3}).
On the other hand, {\alpp} is fixed-parameter tractable when parameterized by $k + \ell$
and thus by $|A| + \ell$ as well (Theorem~\ref{thm:k+ell}).
We later strengthen Observation~\ref{obs:fixed-ell}
by showing that NP-complete for every fixed $\ell \ge 4$ even on grid graphs
(Theorem~\ref{thm:grid}).

We then study structural parameters such as treewidth and pathwidth
in combination with the standard parameters.
We first observe that {\alpp} can be solved in time $n^{O(\tw)}$ (Theorem~\ref{thm:tw-XP}), 
where $n$ and $\tw$ are the number of vertices and the treewidth of the input graph, respectively.
Furthermore, we show that {\alpp} parameterized by $\tw + \ell$ is fixed-parameter tractable (Theorem~\ref{thm:tw+ell}).
We finally show that {\falpp} parameterized by $\pw + |A|$ is W[1]-hard (Theorem~\ref{thm:pw+|A|}),
where $\pw$ is the pathwidth of the input graph.

We also study a variant of the problem in which we are allowed to use $A$-paths shorter than $\ell$ as well.
A simple reduction (Lemma~\ref{lem:sapp-to-alpp}) will show that
all the positive results on {\alpp} can be translated to the ones on this ``short $A$-path'' variant.
Although negative results cannot be translated directly,
we can show the hardness of the cases where $|A|$ or $\ell$ is a constant.
We leave the complexity of this variant parameterized by $\tw$ unsettled.

\begin{figure}[bth]
  \centering
  \includegraphics[width=.8\linewidth]{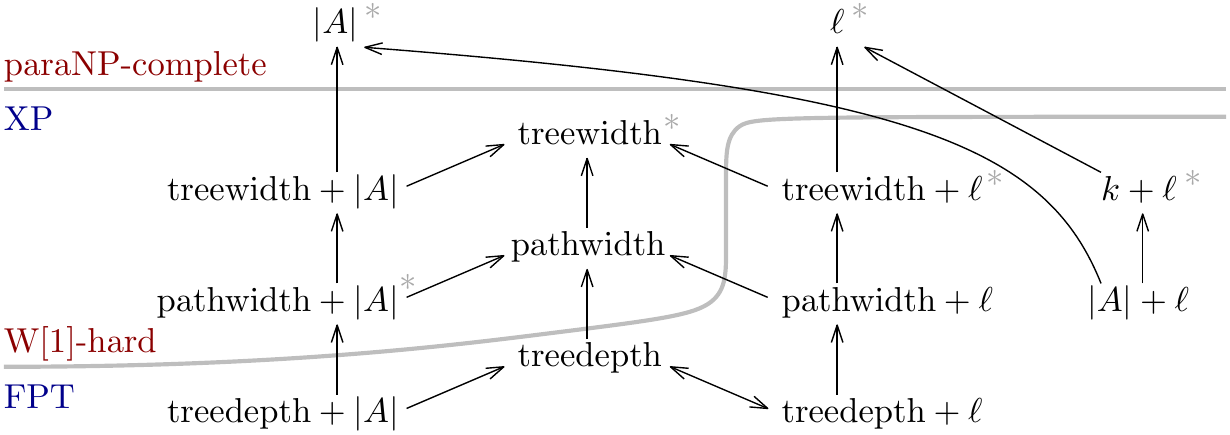}
  \caption{Summary of the results. An arrow $\alpha \rightarrow \beta$ indicates that there is a function $f$ such that $\alpha \ge f(\beta)$ for every instance of {\alpp}.
    Some possible arrows are omitted to keep the figure readable.
    The results on the parameters marked with $\ast$ are explicitly shown in this paper,
    and the other results follow by the hierarchy of the parameters.
    We have a bidirectional arrow $\text{treedepth} \leftrightarrow \text{treedepth} + \ell$
    because the maximum length of a path in a graph is bounded by a function of treedepth~\cite[Section 6.2]{NesetrilDM12}.}
  \label{fig:summary}
\end{figure}

\section{Preliminaries}
\label{sec:pre}

A graph $G = (V,E)$ is a \emph{grid graph} if 
$V$ is a finite subset of $\mathbb{Z}^{2}$
and $E = \{\{(r,c), (r',c')\} \mid |r-r'| + |c-c'| = 1\}$.
From the definition, all grid graphs are planar, bipartite, and of maximum degree at most 4.
To understand the intractability of a graph problem, 
it is preferable to show hardness on a very restricted graph class.
The class of grid graphs is one of such target classes.

A \emph{tree decomposition} of a graph $G = (V,E)$ is a pair $(\{X_{i} \mid i \in I\}, T=(I,F))$,
where $X_{i} \subseteq V$ for each $i$ and $T$ is a tree such that 
\begin{itemize}
  \item for each vertex $v \in V$, there is $i \in I$ with $v \in X_{i}$;
  \item for each edge $\{u,v\} \in E$, there is $i \in I$ with $u, v \in X_{i}$;
  \item for each vertex $v \in V$, the induced subgraph $T[\{i \mid v \in X_{i}\}]$ is connected.
\end{itemize}
The \emph{width} of a tree decomposition $(\{X_{i} \mid i \in I\}, T)$ is $\max_{i \in I} |X_{i}| - 1$,
and the \emph{treewidth} of a graph $G$, denoted $\tw(G)$, is the minimum width over all tree decompositions of $G$.

The \emph{pathwidth} of a graph $G$, denoted $\pw(G)$, is defined by restricting the trees $T$ in tree decompositions to be paths.
We call such decompositions \emph{path decompositions}.

We can show that pathwidth does not change significantly
by subdividing some edges and attaching paths to some vertices.
To this end, we use a characterization of pathwidth by the following search game.
We are given a graph $G = (V,E)$ with all edges \emph{contaminated}.
The goal in this game is to clear all edges.
In each turn, we can place a \emph{searcher} on a vertex or remove a searcher from a vertex.
An edge is \emph{cleared} by having searchers on both endpoints.
A cleared edge is immediately \emph{recontaminated} when a removal of a searcher
results in a path not passing through any searchers from the edge to a contaminated edge.
The minimum number of searchers needed to clear all edges of $G$ is the \emph{node search number},
and we denote it by $\ns(G)$.
It is known that $\ns(G) = \pw(G)+1$ for every graph $G$~\cite{KirousisP85,EllisST94,Bodlaender98}.

\begin{lemma}
  [Folklore]
  \label{lem:pw-subdivision}
  Let $G = (V,E)$ be a graph.
  If $G'$ is a graph obtained from $G$ by subdividing a set of edges $F \subseteq E$ an arbitrary number of times,
  then $\pw(G') \le \pw(G) + 2$.
\end{lemma}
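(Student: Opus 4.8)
The plan is to use the node-search-number characterization, since $\pw(G') \le \pw(G)+2$ is equivalent to $\ns(G') \le \ns(G)+2$. Let $k = \ns(G)$, so there is a winning search strategy for $G$ using $k$ searchers. I would like to simulate this strategy on $G'$ using only two extra searchers, handling the subdivided edges one at a time as they come up in the original strategy.

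First I would recall (or briefly argue) the standard monotonicity result for node searching: there is always a winning strategy in which no edge is ever recontaminated (this is the LaGasse/Bienstock--Seymour type result, and it is what makes the ``one edge at a time'' simulation clean). So fix a monotone winning strategy $\sigma$ for $G$ with $k$ searchers. At each moment of $\sigma$, the set of currently cleared edges only grows, and an edge $\{u,v\}$ gets cleared exactly at the step where the second of $u,v$ receives a searcher while the first still holds one. The key structural point is that when we replace $\{u,v\}$ by a path $u = w_0, w_1, \dots, w_t = v$, we need to clear all the internal vertices $w_1,\dots,w_{t-1}$ too, but we may do this ``in one burst'': at the moment $\sigma$ clears $\{u,v\}$, both $u$ and $v$ carry searchers, so the subdivision path is a pendant-like segment between two occupied vertices, and we can sweep it left-to-right with one additional searcher (place on $w_1$ — now $\{u,w_1\}$ is safely cleared since $u$ is occupied; place the second extra searcher on $w_2$, remove from $w_1$ — $\{w_1,w_2\}$ cleared and $w_1$ can't recontaminate because $w_2$ and everything to its left that still matters is guarded; continue alternating the two extra searchers down the path; finish by clearing $\{w_{t-1}, v\}$ using $v$'s searcher, then remove both extra searchers). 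Throughout this burst the two endpoints keep their searchers, so at most $k+2$ searchers are on the board; before and after the burst we are exactly mimicking $\sigma$, using at most $k \le k+2$ searchers.

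The one thing to check carefully is recontamination during the burst and immediately after. During the burst this is fine because each newly cleared segment edge has an occupied endpoint ``downstream'' (toward $v$) and the upstream side is bounded by $u$, which stays occupied; after the burst, the whole path is cleared and both its endpoints are occupied, so removing the two extra searchers cannot open a contaminated route through the path (and it is exactly the state $\sigma$ would have after clearing $\{u,v\}$, except for the extra cleared interior, which only helps). A subtle point is that $\sigma$ might clear several original edges in consecutive steps without any intervening searcher removal; but we can always insert the bursts one at a time, each time with the corresponding endpoints occupied, since monotonicity guarantees those searchers are present at the clearing step and we are free to do the burst right then. Repeating over all $|F|$ subdivided edges (each replaced by an arbitrarily long path, handled by the same burst of two searchers) yields a winning strategy for $G'$ with $k+2$ searchers, hence $\ns(G') \le \ns(G)+2$ and $\pw(G') \le \pw(G)+2$.

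\textbf{The main obstacle} I expect is the bookkeeping of recontamination in full rigor — in particular, arguing that inserting a burst into a monotone strategy $\sigma$ at the clearing step of $\{u,v\}$ does not retroactively break monotonicity of the already-simulated part, and that the ``downstream guard'' argument for the segment edges is airtight when $t$ is large. An alternative, perhaps cleaner, route that avoids invoking the monotonicity theorem is to work directly with path decompositions: take an optimal path decomposition $(X_1, \dots, X_m)$ of $G$; for a subdivided edge $\{u,v\}$ with subdivision path $w_1,\dots,w_{t-1}$, pick a bag $X_j$ containing both $u$ and $v$, and splice in $t$ new bags between $X_j$ and $X_{j+1}$ of the form $X_j \cup \{w_1\}$, then $\{u,v,w_1,w_2\}$, $\{u,v,w_2,w_3\}$, \dots; this increases the maximum bag size by at most $2$, and doing it simultaneously for all edges in $F$ (each at its own chosen bag, widening that bag and its successor region) still costs only $+2$ overall because at any inserted bag we only ever add at most two new subdivision vertices on top of the at-most-two ``anchor'' vertices $u,v$ that were already in an original bag. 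I would present whichever of these two is shorter; the path-decomposition version is likely the one to write up, with the search-game version mentioned as intuition.
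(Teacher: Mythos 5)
Your main argument is essentially the paper's proof: both pass to the node search number and, at the moment an original edge $\{u,v\}$ is cleared (so both endpoints hold searchers), insert a two-extra-searcher sweep of the corresponding subdivision path, then delete the extra searchers. The explicit appeal to monotonicity and the alternative direct path-decomposition construction are sensible additional safeguards, but the core simulation is the same as the paper's.
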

\begin{proof}
Let $p = \pw(G) + 1$.
Since $\ns(G) \le p$,
there is a sequence $S$ of placements and deletions of searchers to clear all edges of $G$ using at most $p$ searchers.
To clear all edges of $G'$, we extend $S$ as follows.
For each placement of a searcher on a vertex $v \in V$,
we insert, right after this placement,
a subsequence that clears all paths corresponding to the edges between $v$ and its neighbors having searchers on them at this point.
This can be done with two extra searchers that clear the paths one-by-one
The extra searchers are deleted in the end of the subsequence,
and thus we only need two extra searchers in total. 
This implies that $\pw(G') = \ns(G')-1 \le \ns(G)+1 = \pw(G) + 2$. 
\qed
\end{proof}

\begin{lemma}
  [Folklore]
  \label{lem:pw-path-attachment}
  If $G'$ is a graph obtained from a graph $G = (V,E)$
  by attaching a path of arbitrary length to each vertex in a set $U \subseteq V$,
  then $\pw(G') \le \pw(G) + 1$.
\end{lemma}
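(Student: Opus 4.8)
The plan is to use the same node-search argument as in Lemma~\ref{lem:pw-subdivision}, only simpler because attached paths are pendant structures rather than subdivisions of existing edges. Let $p = \pw(G)+1 = \ns(G)$ and fix a winning search sequence $S$ for $G$ using at most $p$ searchers. The key observation is that a path attached at a vertex $u \in U$ is a "dead end": once a searcher sits on $u$, the whole attached path can be cleaned by walking a single additional searcher from $u$ outward along the path to its far endpoint (each edge is cleared when both its endpoints carry searchers, and no recontamination can propagate back through $u$ as long as the searcher on $u$ stays put), after which the extra searcher is removed. So the modification to $S$ is: for each vertex $u \in U$, pick a moment in $S$ at which a searcher occupies $u$ (such a moment exists because every vertex of $G$ must be occupied at some point in any winning strategy — indeed every edge incident to $u$ must be cleared), and right after that moment splice in the subsequence that cleans the pendant path with one extra searcher and then deletes it.

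The main point to check is that one extra searcher suffices and that these inserted subsequences do not interfere with one another or with the correctness of $S$ on the rest of $G'$. Interference is not an issue: each inserted block begins and ends with the same configuration on $V(G)$, uses only the one spare searcher, and that searcher only ever occupies vertices of the single pendant path being cleaned; since the pendant paths attached to distinct vertices of $U$ are vertex-disjoint and are cleaned in separate blocks, no recontamination crosses between them, and the edges of $G$ stay clear throughout each block because the configuration on $V(G)$ is frozen. Hence at every step we use at most $p+1$ searchers, so $\ns(G') \le \ns(G)+1$, which gives $\pw(G') = \ns(G')-1 \le \ns(G) = \pw(G)+1$.

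I expect the only mildly delicate point to be the claim that every vertex $u \in U$ is occupied by a searcher at some point in $S$; but this follows immediately since $U \subseteq V$ and, if $u$ has at least one neighbor in $G$, that incident edge can only be cleared with a searcher on $u$ (and if $u$ is isolated in $G$ we may trivially insert a placement of a searcher on $u$ at the end, still within budget). One should also note the degenerate case where the attached path has length zero (i.e., nothing is attached), for which there is nothing to prove. With these remarks in place the proof is essentially identical in structure to that of Lemma~\ref{lem:pw-subdivision}, just with one spare searcher instead of two. \qed
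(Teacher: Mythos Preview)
Your overall strategy matches the paper's, but the core step---cleaning a pendant path with a single extra searcher while a searcher sits on $u$---is incorrect as you describe it. In node search an edge is cleared only when \emph{both} endpoints are simultaneously occupied. With the searcher fixed on $u$ and one additional searcher walking outward, you can clear $\{u,p_1\}$ by placing the extra on $p_1$, but you cannot then clear $\{p_1,p_2\}$: that would require searchers on $p_1$ and $p_2$ at the same time, and you have only one to spare. If you move the extra from $p_1$ to $p_2$, the edge $\{p_1,p_2\}$ is never cleared, and moreover $\{u,p_1\}$ is immediately recontaminated via the searcher-free vertex $p_1$. So ``walking one searcher outward'' cleans nothing beyond the first edge.

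The paper's fix is small but essential: instead of inserting the cleaning block \emph{after} the placement on $u$, it \emph{replaces} the placement on $u$ by a two-searcher leapfrog starting from the far endpoint $p_q$ and moving inward toward $u$. One of these two searchers is the searcher that $S$ was about to place on $u$ anyway, so only one genuinely extra searcher is used; at the end of the block the extra is removed and a searcher sits on $u$, exactly as $S$ requires. The direction (inward from $p_q$) matters because before $u$ is occupied no edge of $G$ incident to $u$ has yet been cleared in this step, so there is no recontamination hazard on the $G$-side. Your treatment of isolated vertices has the same defect (placing one searcher on an isolated $u$ does not clean the attached path); the paper handles those separately at the start with two searchers, which is within the $p+1$ budget once $E\neq\emptyset$.
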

\begin{proof}
We assume that $E \ne \emptyset$ since otherwise the statement is clearly true.
There is a sequence $S$ of placements and deletions of searchers to clear all edges of $G$ using at most $p = \pw(G) + 1$ searchers.
To clear all edges of $G'$, we extend $S$ as follows.
Using two searchers, we first clear the paths attached to the isolated vertices in $U$ (if such exist).
We then replace each placement of a searcher on a non-isolated vertex $u \in U$
with a subsequence that clears the path $P = (u,p_{1},p_{2},\dots,p_{q})$ attached to $u$.
This can be done with two searchers by first placing a searcher on $p_{q}$, then placing a searcher on $p_{q-1}$,
deleting a searcher on $p_{q}$, placing a searcher on $p_{q-2}$, and so on.
At the end of the subsequence, $u$ has a searcher on it and all other vertices in $P$ do not have searchers.
We need only one extra searcher in total. 
Hence, $\pw(G') = \ns(G')-1 \le \ns(G) = \pw(G) + 1$.
\qed
\end{proof}
In the proofs of Lemmas~\ref{lem:pw-subdivision} and \ref{lem:pw-path-attachment},
we show that search sequences for the original graph can be ``locally'' extended for the new graph
by using one or two temporal searchers. Thus we can have the following combined version of the lemmas
as Corollary~\ref{cor:pw-subdivision-path}.
\begin{corollary}
\label{cor:pw-subdivision-path}
If $G'$ is a graph obtained from a graph $G = (V,E)$
 by subdividing a set of edges $F \subseteq E$ an arbitrary number of times,
and attaching a path of arbitrary length to each vertex in a set $U \subseteq V$,
then $\pw(G') \le \pw(G) + 2$.
\end{corollary}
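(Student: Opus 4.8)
The plan is to run the arguments of Lemmas~\ref{lem:pw-subdivision} and~\ref{lem:pw-path-attachment} simultaneously on a single node-search strategy, exploiting the fact that both of them modify the strategy only \emph{locally}, right after a searcher is placed on an original vertex. Fix a node-search strategy $S$ for $G$ using $p = \pw(G)+1 = \ns(G)$ searchers, which we may assume to be monotone. For the modified graph $G'$ I would build a strategy $S'$ by prepending a block that clears, using two searchers, the paths attached to the isolated vertices of $U$ (as in Lemma~\ref{lem:pw-path-attachment}), and then replacing each move ``place a searcher on $v$'' of $S$ by the concatenation of (i) that move; (ii) if $v \in U$ and $v$ is not isolated, a two-searcher sweep of the path attached to $v$ running from its far endpoint back toward $v$, after which only $v$ carries a searcher on that path; and (iii) a two-searcher sweep, one path at a time, of the subdivided replacements of the edges joining $v$ to those neighbors of $v$ that currently hold a searcher. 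At the end of such a block only the searchers prescribed by $S$ remain in place (including the one just placed on $v$), so the remainder of $S$ is executed verbatim.

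The count is then immediate: every move of $S$ keeps at most $p$ searchers on the board, and blocks (ii) and (iii) are performed one after the other using the \emph{same} two auxiliary searchers, so $S'$ never uses more than $p+2$ searchers; hence $\pw(G') = \ns(G') - 1 \le (p+2)-1 = \pw(G) + 2$.

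The step that needs care is checking that $S'$ is still valid, i.e.\ that no recontamination is introduced. Here one uses that every internal vertex of a subdivided edge and every non-endpoint vertex of an attached path has degree two and lies on a path whose only exits are the original vertex $v$ (and, for a subdivided edge, its other original endpoint $w$): as long as $v$, and $w$, retain their searchers, any already-cleared edge of such an appendage is shielded, so moving the two auxiliary searchers around inside a block cannot recontaminate anything; and once a block finishes, an appendage edge can only be recontaminated when the corresponding original edge would be, which by monotonicity of $S$ never occurs. Since blocks (ii) and (iii) at a given $v$ act on vertex-disjoint parts of $G'$ apart from $v$ itself and are run sequentially, they do not interfere. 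This is exactly the bookkeeping already carried out in the two lemmas, so the verification is routine; the only reason to state the corollary separately is that composing the two lemmas naively would cost $\pw(G)+3$, whereas reusing the two auxiliary searchers brings the bound down to $\pw(G)+2$.
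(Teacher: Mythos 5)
Your proposal is correct and follows essentially the same route as the paper: the paper derives the corollary precisely by observing that the modifications in Lemmas~\ref{lem:pw-subdivision} and~\ref{lem:pw-path-attachment} are local insertions right after a searcher is placed on an original vertex, so they can be interleaved in a single strategy reusing the same two auxiliary searchers. Your explicit block structure and the remark that naive composition would only give $\pw(G)+3$ are exactly the intended content.
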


\section{Standard parameterizations of {\alpp}}
\label{sec:standard-parameters}

In this section, we completely determine the complexity of {\alpp} 
with respect to the standard parameters $|A|$, $k$, $\ell$, and their combinations.
(Recall that $k \le |A|/2$.)
We first observe that using one of them as a parameter does not make the problem tractable.
That is, we show that the problem remains NP-complete 
even if one of $|A|$, $k$, $\ell$ is a constant.
We then show that the problem is tractable when $\ell \le 3$ or
when $k+\ell$ is the parameter.

\subsection{Intractable cases}
The first observation is that {\falpp} is NP-complete even if $|A| = 2$ (and thus $k = 1$).
This can be shown by an easy reduction from \textsc{Hamiltonian Cycle}~\cite{GareyJ79}.
This observation is easily extended to every fixed even $|A|$.
\begin{observation}
\label{obs:fixed-|A|}
For every even constant $\alpha \ge 2$,
{\falpp} with $|A| = \alpha$ is NP-complete on grid graphs.
\end{observation}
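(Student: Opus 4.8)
The plan is a two-step reduction, the first step handling $\alpha=2$ and the second padding it out to arbitrary even $\alpha$.

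\emph{Step 1 ($\alpha=2$).} When $|A|=2$ we have $k=|A|/2=1$, so a {\falpp} instance $(G,\{a,b\},1,\ell)$ asks precisely for an $a$--$b$ path of length exactly $\ell$ in $G$ (the internal vertices of any such path automatically lie outside $\{a,b\}$). I would reduce from \textsc{Hamiltonian Cycle}, which is NP-complete even on grid graphs~\cite{GareyJ79}. Given a grid graph $G$ on $n$ vertices, let $w=(r_0,c_0)$ be the lexicographically smallest element of $V$ (minimum column, and among those, minimum row). Then neither $(r_0,c_0-1)$ nor $(r_0-1,c_0)$ is in $V$, so $\deg_G(w)\le 2$. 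If $\deg_G(w)\le 1$ then no Hamiltonian cycle exists and we output a fixed no-instance. Otherwise $w$ has exactly two neighbors $u_1,u_2$, and since in any Hamiltonian cycle the degree-$2$ vertex $w$ must use both of its incident edges, $G$ has a Hamiltonian cycle if and only if $G-w$ has a Hamiltonian path from $u_1$ to $u_2$, i.e.\ a $u_1$--$u_2$ path of length $(n-1)-1=n-2$. As $G-w$ is again a grid graph, we output the {\falpp} instance $(G-w,\{u_1,u_2\},1,n-2)$; this is a Full instance with $|A|=2$, so {\falpp} with $|A|=2$ is NP-hard on grid graphs.

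\emph{Step 2 (general even $\alpha$).} Starting from any grid-graph {\falpp} instance $(G',\{a,b\},1,\ell)$ with $|A|=2$, build a graph $G''$ consisting of $G'$ together with $\alpha/2-1$ pairwise vertex-disjoint straight horizontal segments, each a path on $\ell+1$ vertices, translated within $\mathbb{Z}^2$ so that no two of the $\alpha/2$ pieces overlap or are adjacent; then $G''$ is a grid graph whose connected components are exactly $G'$ and these segments. Let $A$ contain $a$, $b$, and the two endpoints of each added segment, so $|A|=\alpha$, and set $k=\alpha/2=|A|/2$. Each added segment is a component that is itself a path of length exactly $\ell$ with its two endpoints in $A$ and all internal vertices outside $A$, hence it contains exactly one $(A,\ell)$-path; moreover $A\cap V(G')=\{a,b\}$. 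It follows that $G''$ admits $\alpha/2$ vertex-disjoint $(A,\ell)$-paths if and only if at least one $(A,\ell)$-path lies inside $G'$, which in turn holds if and only if $(G',\{a,b\},1,\ell)$ is a yes-instance. Since membership in NP is clear (the $k$ paths form a polynomial-size certificate), combining the two steps proves NP-completeness of {\falpp} with $|A|=\alpha$ on grid graphs for every even constant $\alpha\ge 2$.

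The argument has no real combinatorial obstacle; the only point requiring care is remaining inside the class of grid graphs, which is handled in Step 1 by deleting the extremal vertex $w$ (a deletion that preserves the grid property) and in Step 2 by placing well-separated disjoint segments, while the degenerate cases ($\deg_G(w)\le 1$, or $n$ very small) are disposed of directly. The sole external ingredient is the known NP-hardness of \textsc{Hamiltonian Cycle} on grid graphs.
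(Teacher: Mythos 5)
Your proof is correct and takes essentially the same route as the paper: both reduce from \textsc{Hamiltonian Cycle} on grid graphs by deleting a degree-2 vertex (forced to exist in a grid graph), asking for a single $(A,|V|-2)$-path between its two neighbors, and padding to larger even $\alpha$ with $\alpha/2-1$ disjoint paths of the same length whose endpoints are added to $A$. Your version merely spells out a few details the paper leaves implicit, such as how to locate the low-degree vertex and how to place the padding segments so the result remains a grid graph.
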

\begin{proof}
Let $G = (V,E)$ be an instance of \textsc{Hamiltonian Cycle} on grid graphs,
which is known to be NP-complete~\cite{ItaiPS82}.
Since {\falpp} is clearly in NP, 
it suffices to construct an equivalent instance of {\falpp} in polynomial time.

Observe that the minimum degree $\delta(G)$ of $G$ is at most 2.
If $\delta(G) < 2$, then $G$ is a no-instance of \textsc{Hamiltonian Cycle}, 
and thus we can construct trivial no-instance of {\falpp}.
Assume that $\delta(G) = 2$, and let $v$ be a vertex of degree 2 in $G$ with the neighbors $u$ and $w$.
Let $G'$ be the graph obtained from $G$ by removing $v$
and then adding $\alpha/2 - 1$ isolated paths of length $|V|-2$ if $\alpha > 2$.
Let $Q$ be the set of endpoints of the new paths.
Then, $(G', \{u,w\} \cup Q, \alpha/2, |V|-2)$ is a yes-instance of {\falpp} if and only if $G$ has a Hamiltonian cycle.
This can be seen by observing that each Hamiltonian cycle of $G$ includes edges $\{u,v\}$ and $\{v,w\}$
and that each $(\{u,w\}, |V|-2)$-path in $G'$ can be extended to a Hamiltonian cycle of $G$ by using $v$ and
the edges $\{u,v\}$ and $\{v,w\}$.
\qed
\end{proof}

The NP-hardness of {\falpp} for fixed $\ell$ can be shown also by an easy reduction from a known NP-hard problem,
but in this case only for $\ell \ge 4$. 
This is actually tight as we see later that the problem is polynomial-time solvable when $\ell \le 3$ (see Theorem~\ref{thm:ell<=3}).
\begin{observation}
\label{obs:fixed-ell}
For every constant $\ell \ge 4$, {\falpp} is NP-complete.
\end{observation}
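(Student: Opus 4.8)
The plan is to reduce from \textsc{$(\ell-2)$-Path Partition}: given a graph $H$ whose vertex set is to be partitioned into paths with $\ell-2$ edges each (equivalently, into copies of $P_{\ell-1}$). This base problem is NP-complete exactly when $\ell-2 \ge 2$, i.e. $\ell \ge 4$ — it is the classical fact that partitioning into fixed-length paths (a $P_m$-factor with $m \ge 3$) is NP-complete, cf. the literature on \textsc{$\ell$-Path Partition} already cited — which is precisely the range in the statement; for $\ell-2 \le 1$ the base problem is trivial or reduces to perfect matching, nicely mirroring Theorem~\ref{thm:ell<=3}. Since \falpp is trivially in NP, it suffices to give a polynomial-time many-one reduction, and I would make it directly produce \falpp instances so that the hardness lands on the "full" version as stated.

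Given such an $H$ — if $(\ell-1) \nmid |V(H)|$ just output a trivial no-instance — I would form $G$ from $H$ by adding a set $A$ of $2|V(H)|/(\ell-1)$ fresh vertices, each joined to every vertex of $H$, and set $k := |V(H)|/(\ell-1) = |A|/2$. The key observation is that $G$ has no edge inside $V(H)$ other than those of $H$, so every $(A,\ell)$-path of $G$ must have its two endpoints in $A$ and its $\ell-1$ interior vertices in $V(H)$; since those interior vertices are pairwise distinct and consecutively adjacent in $G$, they form a copy of $P_{\ell-1}$ in $H$. Conversely any copy of $P_{\ell-1}$ in $H$ can be prolonged at both ends by two unused vertices of $A$ into an $(A,\ell)$-path. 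Hence $G$ has $k$ vertex-disjoint $(A,\ell)$-paths iff $H$ has $k$ vertex-disjoint copies of $P_{\ell-1}$, and because $k(\ell-1) = |V(H)|$ the latter is exactly a partition of $V(H)$ into copies of $P_{\ell-1}$. As $k = |A|/2$, the produced instance is a {\falpp} instance, so the reduction proves NP-completeness of {\falpp} (hence of {\alpp}) for every fixed $\ell \ge 4$.

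I do not expect a genuine obstacle: the reduction is essentially a one-line "biclique attachment" of endpoint vertices, and correctness is immediate from the vertex-partition viewpoint. The only things to get right are (i) invoking the base hardness result with the correct threshold so that the construction works precisely for $\ell \ge 4$, consistent with the polynomial-time boundary of Theorem~\ref{thm:ell<=3}, and (ii) the divisibility bookkeeping, which is already handled by the definition of \textsc{$(\ell-2)$-Path Partition} together with emitting a trivial no-instance when it fails. Polynomiality is clear since $G$ has $O(|V(H)|)$ vertices and $O(|V(H)|^2)$ edges for fixed $\ell$.
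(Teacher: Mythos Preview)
Your proposal is correct and takes essentially the same approach as the paper: both reduce from \textsc{$\lambda$-Path Partition} with $\lambda = \ell - 2$, attach an independent set $A$ of $2k$ universal vertices to the input graph, and observe that the interior of any $(A,\ell)$-path is exactly a $P_{\ell-1}$ in the original graph. The paper's write-up is terser (it omits the divisibility bookkeeping and the explicit verification that interior vertices induce a path in $H$), but the construction and correctness argument are identical.
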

\begin{proof}
Given a graph $G = (V,E)$,
the \textsc{$\lambda$-Path Partition} problem asks
whether $G$ contains $k := |V|/(\lambda+1)$ vertex-disjoint paths of length $\lambda$.
For every fixed $\lambda \ge 2$, \textsc{$\lambda$-Path Partition} is NP-complete~\cite{KirkpatrickH83}.
We construct $G'$ from $G$ by adding a set $A$ of $2k$ new vertices,
where $A$ is an independent set in $G'$
and $G'$ has all possible edges between $A$ and $V$.
We can see that $G$ is a yes-instance of \textsc{$\lambda$-Path Partition}
if and only if
$(G', A, k, \ell)$ is a yes-instance of {\falpp}, where $\ell = \lambda+2 \ge 4$.
\qed
\end{proof}

We can strengthen Observation~\ref{obs:fixed-ell} to hold on grid graphs
by constructing an involved reduction from scratch.
As the proof is long and the theorem does not really fit the theme of this section,
we postpone it to Section~\ref{sec:grid-graphs}.

\subsection{Tractable cases}

\begin{theorem}
\label{thm:ell<=3}
If $\ell \le 3$,
then {\alpp} can be solved in polynomial time.
\end{theorem}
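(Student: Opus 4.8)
The plan is to handle the three cases $\ell = 1, 2, 3$ separately, reducing each to a polynomial-time solvable matching-type problem. The case $\ell = 1$ is trivial: an $(A,1)$-path is just an edge with both endpoints in $A$, so we simply ask whether $G[A]$ has a matching of size $k$, which is \textsc{Maximum Matching}. For $\ell = 2$, an $(A,2)$-path is a path $a$--$x$--$b$ with $a, b \in A$ and $x \in V \setminus A$; since the paths must be vertex-disjoint, each uses a distinct center $x$ and two distinct endpoints in $A$. I would model this as a matching problem: build an auxiliary graph (or hypergraph) and argue that a set of $k$ vertex-disjoint $(A,2)$-paths corresponds to a suitable matching. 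A clean way is to reduce to a $b$-matching / degree-constrained subgraph problem, or to observe that this is essentially finding $k$ disjoint "cherries" with prescribed leaf-set $A$, which can be encoded as matching in a bipartite-like gadget; alternatively, one can reduce directly to \textsc{Maximum Matching} in the graph on vertex set $A \cup (V\setminus A)$ keeping only edges between $A$ and $V\setminus A$, subdividing appropriately — the key point is that each center has degree exactly $2$ in the solution.

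The substantive case is $\ell = 3$: an $(A,3)$-path has the form $a$--$x$--$y$--$b$ with $a,b \in A$ and $x, y \in V \setminus A$, so it uses exactly one edge inside $G[V \setminus A]$ together with one "pendant" edge from each of its two internal vertices to $A$. I would exploit this structure as follows: choose the middle edge $\{x,y\}$ of each path; these $k$ middle edges form a matching $M$ in $G[V \setminus A]$, and then each endpoint of $M$ must be matched to a private vertex of $A$. So the problem becomes: does there exist a matching $M$ of size $k$ in $G[V\setminus A]$ such that, after picking $M$, we can assign to the $2k$ endpoints of $M$ distinct neighbors in $A$ (a system of distinct representatives)? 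I would encode this as a single matching problem in one combined graph $H$: take $G[V \setminus A]$ together with all edges from $V \setminus A$ to $A$, and ask for a subgraph in which exactly the picked $V\setminus A$--$V\setminus A$ edges are "doubled up" with $A$-pendants. Concretely, this is an instance of \textsc{Simple $b$-Matching} (or the \textsc{Degree-Constrained Subgraph} problem), which is polynomial-time solvable: set capacity $1$ on each vertex of $A$, and require each vertex of $V \setminus A$ used to have exactly one incident $V\setminus A$-edge and one incident $A$-edge. A gadget-based reduction to ordinary \textsc{Maximum Matching} (via Tutte's reduction from degree-constrained subgraphs to matchings) then finishes it, and the size-$k$ requirement is read off from the number of $V\setminus A$-edges selected.

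The main obstacle I anticipate is getting the $\ell = 3$ reduction exactly right so that a maximum matching / maximum degree-constrained subgraph in the auxiliary graph corresponds precisely to a maximum packing of $(A,3)$-paths, with no "leftover" edges or fractional obstructions — in particular ensuring that the selected $A$-endpoints are genuinely distinct across all paths (handled by the capacity-$1$ constraint on $A$) and that we are not accidentally allowed to select an $A$--($V\setminus A$)--$A$ path of length $2$ (handled by forcing each used vertex of $V\setminus A$ to have exactly one neighbor in $V\setminus A$ in the solution). Once the correspondence is verified, polynomial-time solvability follows immediately from the known polynomial algorithms for \textsc{Maximum Matching} and \textsc{Degree-Constrained Subgraph}. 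I would also remark that, since we only need $k$ paths rather than a full packing, we simply check whether the maximum is at least $k$.
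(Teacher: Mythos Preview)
Your proposal is correct and follows essentially the same approach as the paper. For $\ell=3$ you identify exactly the structural decomposition the paper uses (middle edge in $G[V\setminus A]$ plus two pendant edges into $A$), and the paper's explicit auxiliary graph---two copies $V_1,V_2$ of $V\setminus A$ with $v_1v_2$ ``opt-out'' edges, the copy $V_2$ carrying $G[V\setminus A]$, and $V_1$ adjacent to $A$---is precisely the Tutte-style gadget you allude to for turning the degree-constrained formulation into an ordinary maximum-matching instance; the paper then asks for a matching of size $k+|V\setminus A|$. The only notable difference is that for $\ell=2$ the paper reduces to the $\ell=3$ case by adding a true twin to every vertex of $V\setminus A$, rather than treating it separately.
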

\begin{proof}
Let $(G,A,k,\ell)$ with $G = (V,E)$ be an instance of {\alpp} with $\ell \le 3$.

If $\ell = 1$, then the problem can be solved by finding a maximum matching in $G[A]$.
Since a maximum matching can be found in polynomial time~\cite{Edmonds65},
this case is polynomial-time solvable.

Consider the case where $\ell = 2$. 
We reduce this case to the case of $\ell = 3$.
We can assume that $G[A]$ and $G[V \setminus A]$ do not contain any edges
as such edges are not included in any $(A,2)$-path.
New instance $(G',A,k,3)$ is constructed
by adding a true twin $v'$ to each vertex $v \in V \setminus A$;
i.e., $V(G') = V \cup \{v' \mid v \in V \setminus A\}$
and $E(G') = E \cup \{\{v,v'\} \mid v \in V \setminus A \} \cup \{\{u,v'\} \mid u \in A, v \in V \setminus A, \{u,v\} \in E\}$.
Clearly, $(G,A,k,2)$ is a yes-instance if and only if so is $(G',A,k,3)$.

\medskip

For the case of $\ell = 3$, we construct an auxiliary graph $G' = (A \cup V_{1} \cup V_{2}, E_{A,1} \cup E_{1,2} \cup E_{2,2})$ as follows
(see \figref{fig:l3}):
\begin{align*}
  V_{i} &= \{v_{i} \mid v \in V \setminus A\} \text{ for } i \in \{1,2\},
  \\
  E_{A,1} &= \{\{u, v_{1}\} \mid u \in A, \; v \in V \setminus A, \; \{u,v\} \in E\},
  \\
  E_{1,2} &= \{\{v_{1}, v_{2}\} \mid v \in V\},
  \\
  E_{2,2} &= \{\{u_{2},v_{2}\} \mid u,v \in V \setminus A, \; \{u,v\} \in E\}.
\end{align*}
We show that $(G,A,k,3)$ is a yes-instance if and only if $G'$ has a matching of size $k + |V \setminus A|$,
which implies that the problem can be solved in polynomial time.

\begin{figure}[bth]
  \centering
  \includegraphics[width=.9\linewidth]{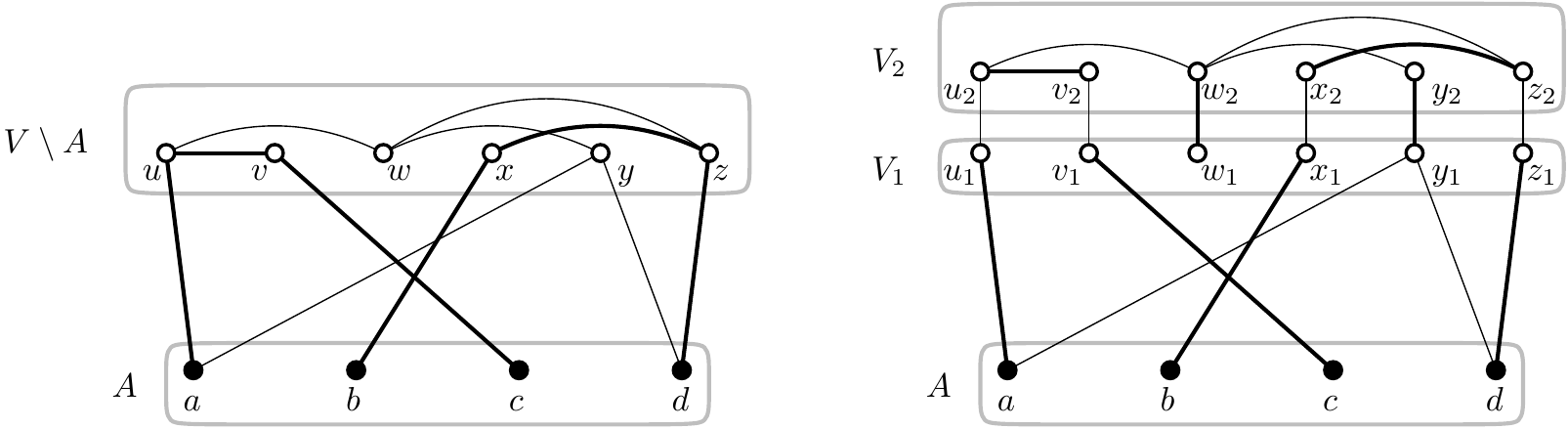}
  \caption{The construction of $G'$ (right) from $G$ (left).}
  \label{fig:l3}
\end{figure}

To prove the only-if direction,
let $P_{1}, \dots, P_{k}$ be $k$ vertex-disjoint $(A,3)$-path in $G$.
We set $M = M_{A,1} \cup M_{1,2} \cup M_{2,2}$, where
\begin{align*}
  M_{A,1} &= \{\{u, v_{1}\} \in E_{A,1} \mid \text{edge } \{u,v\} \text{ appears in some } P_{i}\},
  \\
  M_{1,2} &= \{\{v_{1}, v_{2}\} \in E_{1,2} \mid \text{vertex } v \text{ does not appear in any } P_{i}\},
  \\
  M_{2,2} &= \{\{u_{2},v_{2}\} \in E_{2,2} \mid \text{edge } \{u,v\} \text{ appears in some } P_{i}\}.
\end{align*}
Since the $(A,3)$-paths $P_{1}, \dots, P_{k}$ are pairwise vertex-disjoint,
$M$ is a matching. We can see that $|M| = k + |V \setminus A|$ 
as $|M_{2,2}| = k$ and $|M_{A,1}| + |M_{1,2}| = |V_{1}| = |V \setminus A|$.

\smallskip

To prove the if direction, assume that $G'$ has a matching of size $k+|V \setminus A|$.
Let $M$ be a maximum matching of $G'$ that includes the maximum number of vertices
in $V_{1} \cup V_{2}$ among all maximum matchings of $G'$.
We claim that $M$ actually includes all vertices in $V_{1} \cup V_{2}$.
Suppose to the contrary that $v_{1}$ or $v_{2}$ is not included in $M$ for some $v \in V \setminus A$.
Now, since $M$ is maximum, exactly one of $v_{1}$ and $v_{2}$ is included in $M$.

Case 1: $v_{1} \in V(M)$ and $v_{2} \notin V(M)$.
There is a vertex $u \in A$ such that $\{u, v_{1}\} \in M$.
The set $M - \{u, v_{1}\} + \{v_{1}, v_{2}\}$ is a maximum matching 
that uses more vertices in $V_{1} \cup V_{2}$ than $M$.
This contradicts how $M$ was selected.

Case 2: $v_{1} \notin V(M)$ and $v_{2} \in V(M)$.
There is a vertex $w_{2} \in V_{2}$ such that $\{v_{2}, w_{2}\} \in M$.
The edge set $M' := M - \{v_{2}, w_{2}\} + \{v_{1}, v_{2}\}$ is a maximum matching 
that uses the same number of vertices in $V_{1} \cup V_{2}$ as $M$.
Since $M'$ is maximum and $w_{2}$ is not included in $M'$, 
the vertex $w_{1}$ has to be included in $M'$,
but such a case leads to a contradiction as we saw in Case 1.

Now we construct $k$ vertex-disjoint $(A,3)$-paths from $M$ as follows.
Let $\{u_{2}, v_{2}\} \in M \cap E_{2,2}$.
Since $M$ includes all vertices in $V_{1}$,
it includes edges $\{u_{1}, x\}$ and $\{v_{1}, y\}$ for some $x, y \in A$.
This implies that $G$ has an $(A,3)$-path $(x, u, v, y)$.
Let $(x', u', v', y')$ be the $(A,3)$-path constructed in the same way from a different edge in $M \cap E_{2,2}$.
Since $M$ is a matching, these eight vertices are pairwise distinct,
and thus $(x, u, v, y)$ and $(x', u', v', y')$ are vertex-disjoint $(A,3)$-paths.
Since $|M| \ge k + |V \setminus A|$ and each edge in $E_{A,1} \cup E_{1,2}$ uses one vertex of $V_{1}$,
$M$ includes at least $k$ edges in $E_{2,2}$.
By constructing an $(A,3)$-path for each edge in $M \cap E_{2,2}$,
we obtain a desired set of $k$ vertex-disjoint $(A,3)$-paths.
\qed
\end{proof}

In their celebrated paper on \emph{Color-Coding}~\cite{AlonYZ95},
Alon, Yuster, and Zwick showed the following result.
\begin{proposition}
[{\cite[Theorem 6.3]{AlonYZ95}}]
\label{prop:color-coding}
Let $H$ be a graph on $h$ vertices with treewidth $t$.
Let $G$ be a graph on $n$ vertices. 
A subgraph of $G$ isomorphic to $H$, if one exists, can be found in time
$O(2^{O(h)} \cdot n^{t+1} \log n)$.
\end{proposition}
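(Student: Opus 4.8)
The plan is to prove this via the \emph{color-coding} method, whose two ingredients are (i) a dynamic program that, for a given vertex coloring of $G$ with $h$ colors, decides whether $G$ contains a \emph{colorful} copy of $H$, i.e.\ a copy whose $h$ vertices receive $h$ distinct colors, and (ii) a derandomization that replaces random colorings by a small family of hash functions. First I would argue the probabilistic heart: if one colors $V(G)$ uniformly at random with $h$ colors, then any fixed copy of $H$ in $G$ becomes colorful with probability $h!/h^{h} \ge e^{-h}$, so $e^{h}$ independent colorings suffice to expose a copy with constant probability whenever one exists. To make the algorithm deterministic (and to account for the $\log n$ factor), I would invoke a known construction of an $(n,h)$-perfect hash family of size $2^{O(h)}\log n$ --- a family of functions $[n]\to\{1,\dots,h\}$ such that every $h$-element subset of $[n]$ is mapped injectively by at least one member. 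Running the colorful-copy detector once per member then guarantees correctness, since the $h$ vertices of any copy of $H$ are colored injectively, hence colorfully, by some function in the family.

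Next I would give the detection subroutine for a fixed coloring $c\colon V(G)\to\{1,\dots,h\}$, by dynamic programming over a tree decomposition of $H$ of width $t$. Since $H$ has only $h$ vertices, I may first compute such a decomposition and put it in nice form in time $2^{O(h)}$, which is within budget. Fix a nice tree decomposition $(\{Y_{i}\},T)$ of $H$ with $|Y_{i}|\le t+1$. For each node $i$, each injective map $\phi$ from $Y_{i}$ into $V(G)$ that preserves every edge of $H[Y_{i}]$ and assigns distinct colors, and each color set $S\subseteq\{1,\dots,h\}$, the table stores whether there is an embedding of the subgraph of $H$ induced by the vertices appearing in the subtree below $i$ that (a) extends $\phi$, (b) realizes exactly the color set $S$, and (c) uses each color at most once. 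Crucially, condition (c) forces the embedding to be injective, so a valid entry at the root with $S=\{1,\dots,h\}$ is exactly a colorful copy of $H$; the copy itself is recovered by standard back-pointers. The introduce and forget transitions are routine bookkeeping on $\phi$ and $S$, and a join node combines the two children sharing the same $\phi$ by taking color sets $S_{1},S_{2}$ whose parts outside the colors of $\phi$ are disjoint and setting $S=S_{1}\cup S_{2}$.

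For the running time I would account as follows. At each node the table has $O(n^{t+1}\cdot 2^{h})$ entries, since $\phi$ ranges over at most $n^{t+1}$ images and $S$ over $2^{h}$ sets. Introduce and forget nodes are processed in time linear in the table size times $\mathrm{poly}(h)$. The only delicate step is the join: for a fixed $\phi$, assembling the combined table over all $S$ is a subset convolution over the colors not used by $\phi$, computable in $2^{O(h)}$ time, so the whole join costs $n^{t+1}\cdot 2^{O(h)}$. Summing over the $O(h)$ nodes of the nice decomposition yields $2^{O(h)}\,n^{t+1}$ per coloring, and multiplying by the $2^{O(h)}\log n$ colorings supplied by the hash family gives the claimed $O(2^{O(h)}\cdot n^{t+1}\log n)$.

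I expect the main obstacle to be the join step: a naive combination would range over all pairs $(S_{1},S_{2})$ and blow the dependence on $n$ up to $n^{2(t+1)}$, so the argument must pin the two children to the \emph{same} image $\phi$ of the shared bag and treat only the color coordinates by subset convolution, keeping the $n$-dependence at $n^{t+1}$ while paying just $2^{O(h)}$ for the colors. A secondary point requiring care is verifying that the disjoint-union condition on color sets at joins, together with condition (c), genuinely guarantees that the glued map is injective on all of $H$ rather than merely on each subtree; this is precisely what upgrades a colorful homomorphism into an honest subgraph isomorphic to $H$.
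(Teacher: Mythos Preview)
The paper does not give its own proof of this proposition; it is quoted as Theorem~6.3 of Alon, Yuster, and Zwick and invoked as a black box. Your sketch is precisely the color-coding argument of that reference --- random coloring with $h$ colors, derandomization via an $(n,h)$-perfect hash family of size $2^{O(h)}\log n$, and a colorful-subgraph dynamic program over a width-$t$ nice tree decomposition of $H$ --- and it is correct. One small remark: your worry that a naive join might cost $n^{2(t+1)}$ is somewhat misplaced, since in a nice tree decomposition the two children of a join node carry the \emph{same} bag by definition and are therefore already indexed by the same $\phi$; the only combination needed at a join is over the color coordinates, which costs $2^{O(h)}$ per $\phi$ even without subset convolution (a direct enumeration of compatible pairs $(S_{1},S_{2})$ is at worst $3^{h}=2^{O(h)}$).
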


By using Proposition~\ref{prop:color-coding} as a black box,
we can show that {\alpp} parameterized by $k + \ell$ is fixed-parameter tractable.
\begin{theorem}
\label{thm:k+ell}
{\alpp} on $n$-vertex graphs can be solved in $O(2^{O(k \ell)} n^{6} \log n)$ time.
\end{theorem}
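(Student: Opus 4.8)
The plan is to reduce $(A,\ell)$-Path Packing with parameter $k+\ell$ to a bounded number of subgraph-isomorphism queries, each handled by Proposition~\ref{prop:color-coding}. Observe that a solution consists of $k$ vertex-disjoint $(A,\ell)$-paths, which together span exactly $k(\ell+1)$ vertices, of which exactly $2k$ lie in $A$ and $k(\ell-1)$ lie in $V\setminus A$. The obvious attempt would be to take $H$ to be the disjoint union of $k$ paths on $\ell+1$ vertices and search for a copy of $H$ in $G$; this $H$ has $h = k(\ell+1)$ vertices and treewidth $1$, so Proposition~\ref{prop:color-coding} finds a copy in time $2^{O(k\ell)} n^{2}\log n$. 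The only issue is that a copy of $H$ in $G$ need not respect the partition into $A$ and $V\setminus A$: we need the endpoints of each path to land in $A$ and the internal vertices to land in $V\setminus A$.

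To enforce this, I would modify the host graph rather than the pattern. First, delete from $G$ every edge inside $A$ and every edge inside $V\setminus A$ that would never be used: more precisely, when $\ell \ge 2$, an $(A,\ell)$-path uses no edge within $A$, so we may delete all edges of $G[A]$; edges inside $V\setminus A$ are still needed for the internal portion, so those we keep. The remaining obstruction is that an embedded path of length $\ell$ might use more than two vertices of $A$ (e.g. an $A$-vertex appearing in the interior). To kill this, attach a gadget to each vertex: replace $G$ by a graph $G'$ in which we blow up the role of $A$-membership by pendant structure. Concretely, I would instead take the cleaner route of keeping $H$ fixed but doing a color-coding-style argument by hand is unnecessary — the simplest fix is: subdivide nothing, but define $H_k$ to be $k$ vertex-disjoint copies of $P_{\ell+1}$, and separately guess which $2k$ vertices of the embedding are the endpoints. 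That reintroduces a factor $n^{2k}$, which is too large.

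So the workable approach is to encode the $A$/non-$A$ distinction into the graph via a degree or pendant trick that Proposition~\ref{prop:color-coding} respects. I would do the following. Build $G'$ from $G$ by, for each vertex $v \in A$, attaching a pendant path of length $\ell$ (or any fixed length exceeding $\ell$), and for each vertex $v \in V\setminus A$ leaving it untouched; also delete all edges of $G[A]$. Then let $H$ be the disjoint union of $k$ ``caterpillar'' gadgets, each being a path $u_0 u_1 \cdots u_\ell$ with a pendant path of length $\ell$ attached at $u_0$ and at $u_\ell$ only. The key claim is that a subgraph of $G'$ isomorphic to $H$ exists if and only if $(G,A,k,\ell)$ is a yes-instance: the pendant paths force $u_0$ and $u_\ell$ to be mapped to vertices of $A$ (only those have the required pendant structure available), while the absence of pendant paths at $u_1,\dots,u_{\ell-1}$ together with the fact that every $A$-vertex carries its pendant path means — wait, that does not forbid an interior $u_i$ from landing on an $A$-vertex. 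To close that gap I would instead make the pendant path attached to each $A$-vertex long enough and of a distinctive shape, and argue by a counting/length argument that in any embedding the $2k$ endpoints exhaust the $A$-vertices touched; alternatively, and most robustly, I would run Proposition~\ref{prop:color-coding} not once but over all ways of ``typing'' vertices, which is avoidable. The pattern $H$ still has $h = O(k\ell)$ vertices and treewidth $1$, so the running time is $2^{O(k\ell)} \cdot n^{O(1)} \log n$; tightening the polynomial to $n^6$ presumably comes from a slightly different but standard gadget where $H$ has treewidth at most $5$ or where an auxiliary product construction is used.

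The main obstacle, and the step I would spend the most care on, is exactly this faithfulness of the reduction: ensuring that every subgraph of the constructed host isomorphic to the constructed pattern decodes to $k$ genuine vertex-disjoint $(A,\ell)$-paths, i.e. that no interior vertex of an embedded path secretly sits on an $A$-vertex and that the $k$ gadgets do not share vertices in $G$. Vertex-disjointness of the gadgets is immediate from ``subgraph isomorphic'' meaning vertex-disjoint images. Preventing interior $A$-vertices is the delicate part; I expect the clean fix is to attach to each $A$-vertex a pendant path whose length is exactly $\ell$ and to make the pattern's endpoint-decorations also pendant paths of length exactly $\ell$, then observe that if some interior vertex $u_i$ ($1 \le i \le \ell-1$) of a gadget were mapped to an $A$-vertex $a$, the pendant path hanging off $a$ in $G'$ would be an available length-$\ell$ path in $G'$ not used by $H$'s image, which is harmless — so one genuinely needs a stronger device, e.g. giving $A$-vertices a pendant \emph{star} or a pendant path of a length that the interior of $H$ cannot absorb, and then a short parity/length argument finishes it. Everything else — putting together the running time from Proposition~\ref{prop:color-coding}, and the preprocessing of deleting $G[A]$ — is routine.
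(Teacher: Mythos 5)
You have the right frame---the paper's proof is likewise a reduction to \textsc{Subgraph Isomorphism} with pattern $kP_{\ell+1}$, solved by a single call to Proposition~\ref{prop:color-coding}---but the gadget you propose does not work, and you stop short of producing one that does. The specific failure of the pendant-path device is that a pendant path is not a rigid structure: in a (non-induced) subgraph embedding, the length-$\ell$ tail attached to the pattern endpoint $u_0$ may be realized by \emph{any} path of length $\ell$ in the host hanging off the image of $u_0$, including one formed by ordinary edges of $G$. So your claim that ``only $A$-vertices have the required pendant structure available'' is false, and your construction does not even force the endpoints into $A$. The paper's trick is to make the marker a structure the host provably contains nowhere else: subdivide every edge of $G$ and of $H$ once (so the host $G'$ becomes bipartite, hence triangle-free), then attach a \emph{triangle} to each vertex of $A$ in $G'$ and to each endpoint of each subdivided pattern path. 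Every triangle of the pattern must then map onto an attached triangle of $G''$, and a degree count identifies the unique degree-$3$ vertex of each pattern component with the $A$-vertex carrying that triangle. This also supplies the exponent you could not account for: the subdivided host has $O(n^2)$ vertices and $\tw(H'')=2$, so Proposition~\ref{prop:color-coding} runs in $O\bigl(2^{O(k\ell)}(n^2)^{3}\log n\bigr)=O\bigl(2^{O(k\ell)}n^{6}\log n\bigr)$.

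On the interior-vertex issue: you are right that keeping the internal vertices of each embedded path out of $A$ is the delicate point, but your proposal leaves it unresolved---you explicitly defer to ``a stronger device'' and ``a short parity/length argument'' that you never supply, so the argument is incomplete as written. (For what it is worth, the paper's reverse direction also only verifies that the two degree-$3$ vertices of each component $R_i$ land in $A$ before declaring the decoded length-$\ell$ path an $(A,\ell)$-path; it does not explicitly exclude an interior vertex of the embedded path sitting on an $A$-vertex whose attached triangle is simply unused. So your instinct about where the difficulty lies is sound.) But identifying the difficulty is not resolving it, and the construction you actually give enforces neither the endpoint constraint nor the interior constraint; the proof therefore has a genuine gap.
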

\begin{proof}
Let $(G,A,k,\ell)$ be an instance of {\alpp}.
Observe that the problem {\alpp} can be seen as a variant of the \textsc{Subgraph Isomorphism} problem
as we search for $H = k P_{\ell+1}$ in $G$ as a subgraph with the restriction that each endpoint of $P_{\ell+1}$ in $H$
has to be mapped to a vertex in $A$,
where $P_{\ell+1}$ denotes an $(\ell+1)$-vertex path (which has length $\ell$)
and $k P_{\ell+1}$ denotes the disjoint union of $k$ copies of $P_{\ell+1}$.
We reduce this problem to the standard \textsc{Subgraph Isomorphism} problem~\cite{GareyJ79}.

Let $G'$ and $H'$ be the graphs obtained from $G$ and $H$, respectively, by subdividing each edge once.
The graphs $G'$ and $H' = k P_{2\ell + 1}$ are bipartite.
We then construct $G''$ from $G'$ by attaching a triangle to each vertex in $A$;
that is, for each vertex $u \in A$ we add two new vertices $v, w$ and edges $\{u,v\}$, $\{v,w\}$, and $\{w,u\}$.
Similarly, we construct $H''$ from $H'$ by attaching a triangle to each endpoint of each $P_{2\ell+1}$.
Note that $|V(G'')| \in O(n^{2})$, $|V(H'')| = k(2 \ell + 1)$, and $\tw(H'') = 2$.
Thus, by Proposition~\ref{prop:color-coding}, it suffices to show that
$(G,A,k,\ell)$ is a yes-instance of {\alpp} if and only if $G''$ has a subgraph isomorphic to $H''$.

\smallskip

To show the only-if direction,
assume that $G$ has $k$ vertex-disjoint $(A,\ell)$-paths $P_{1}, \dots, P_{k}$.
In $G''$, for each $P_{i}$, there is a unique path $Q_{i}$ of length $2\ell$ plus triangles attached to the endpoints;
that is, $Q_{i}$ consists of the vertices of $P_{i}$, the new vertices and edges introduced by subdividing the edges
in $P_{i}$, and the triangles attached to the endpoints of the subdivided path.
Furthermore, since the paths $P_{i}$ are pairwise vertex-disjoint,
the subgraphs $Q_{i}$ of $G''$ are pairwise vertex-disjoint.
Thus, $G''$ has a subgraph isomorphic to $H'' = \bigcup_{1 \le i \le k} Q_{i}$.

\smallskip

To prove the if direction,
assume that $G$ has a subgraph $H'$ isomorphic to $H$.
Let $R_{1}, \dots, R_{k}$ be the connected components of $H'$.
Each $R_{i}$ is isomorphic to a path of length $2\ell$ with a triangle attached to each endpoint.
Let $u, v \in V(R_{i})$ be the degree-3 vertices of $R_{i}$.
Since $G''$ is obtained from the triangle-free graph $G'$ by attaching triangles at the vertices in $A$, 
we have $u, v \in A$. 
Since the $u$-$v$ path of length $2\ell$ in $R_{i}$ is obtained from 
a $u$-$v$ path of length $\ell$ in $G$ by subdividing each edge once,
the graph $G[V(R_{i}) \cap V(G)]$ contains an $(A,\ell)$-path.
Since $V(R_{1}), \dots, V(R_{k})$ are pairwise disjoint,
$G$ contains $k$ vertex-disjoint $(A,\ell)$-paths.
\qed
\end{proof}


\section{Structural parameterizations}
\label{sec:structural-parameters}

In this section, we study structural parameterizations of {\alpp}.
First we present XP and FPT algorithms parameterized by $\tw$ and $\tw + \ell$, respectively.

The XP-time algorithm parameterized by $\tw$ is based on 
an efficient algorithm for computing a tree decomposition~\cite{BodlaenderDDFLP16}
and a standard dynamic-programming over \emph{nice tree decompositions}~\cite{Kloks94}.
The FPT algorithm parameterized $\tw + \ell$ is achieved by expressing the problem in 
the \emph{monadic second-order logic} (MSO$_{2}$) of graphs~\cite{ArnborgLS91,Courcelle92,Bodlaender96}.
\begin{theorem}
\label{thm:tw-XP}
{\alpp} can be solved in time $n^{O(\tw)}$.
\end{theorem}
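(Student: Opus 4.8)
The plan is a bottom-up dynamic program over a nice tree decomposition. First I would call the algorithm of Bodlaender et al.~\cite{BodlaenderDDFLP16} to compute, in time $2^{O(\tw)}n$, a tree decomposition of $G$ of width $w=O(\tw)$, and then turn it into a rooted nice tree decomposition~\cite{Kloks94} with $O(\tw\cdot n)$ nodes, with leaf, introduce-vertex, introduce-edge, forget, and join nodes. For a node $i$ let $G_i$ be the subgraph spanned by the vertices and edges introduced at or below $i$, and recall that $X_i$ separates $V(G_i)\setminus X_i$ from the rest of $G$. The intended meaning of a DP cell at $i$ is: over all families $\mathcal{F}$ of vertex-disjoint paths in $G_i$ in which every path is either a complete $(A,\ell)$-path or a \emph{segment} meant to be extended higher up, and which behave on $X_i$ as prescribed by a \emph{signature}, what is the maximum possible number of complete $(A,\ell)$-paths in $\mathcal{F}$?

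A signature on $X_i$ records: (i) for each $v\in X_i$ a role among $\{\textsf{unused},\ \textsf{end},\ \textsf{interior},\ \textsf{blocked}\}$, where $\textsf{end}$ means path-degree $1$ (permanent if $v\in A$, extendable if $v\notin A$), $\textsf{interior}$ means path-degree $2$ (only possible for $v\notin A$), and $\textsf{blocked}$ means $v$ already lies on a completed $(A,\ell)$-path; (ii) a partition of the segments' ends into segments, each segment having two ends, each end being either an $\textsf{end}$-vertex of $X_i$ or a ``closed'' $A$-endpoint whose vertex has already been forgotten; and (iii) for each segment its current length, an integer in $\{0,1,\dots,\ell\}$. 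Every stored segment has an $\textsf{end}$-vertex in $X_i$ (a segment with two closed $A$-ends is either a finished $(A,\ell)$-path, which we count and remove, or of the wrong length, which we discard), so there are at most $w+1$ segments. Hence there are $2^{O(\tw)}$ choices for (i), $\tw^{O(\tw)}$ for (ii), and $(\ell+1)^{O(\tw)}\le n^{O(\tw)}$ for (iii), giving $n^{O(\tw)}$ signatures; the value stored with each is an integer in $\{0,\dots,n\}$ (or $-\infty$).

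The transitions are the standard ones. At an introduce-edge node for $\{u,v\}$ one either ignores the edge or uses it, the latter possibly turning two $\textsf{unused}$ vertices into a length-$1$ segment, lengthening a segment along the edge (an $\textsf{end}$-vertex $u\notin A$ becoming $\textsf{interior}$ while the other endpoint becomes the new $\textsf{end}$), or concatenating two distinct segments at a shared end (their lengths adding, plus $1$); any move giving an $A$-vertex path-degree $2$, pushing a length above $\ell$, touching a $\textsf{blocked}$ vertex, or closing a cycle is forbidden, and whenever a segment comes to have both ends in $A$ we check that its length is exactly $\ell$, in which case it is removed and the value incremented. Forget nodes drop $v$ only from signatures where $v$ is $\textsf{interior}$, $\textsf{blocked}$, or an $\textsf{end}$ with $v\in A$ (the corresponding segment-end then becoming closed); a signature with $v$ an extendable $\textsf{end}$ is discarded. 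The delicate node is the join: given children $j_1,j_2$ with the same bag, one combines a signature of $j_1$ with one of $j_2$ by requiring each vertex to be $\textsf{unused}$ on one side unless it is an $\textsf{end}$ on both (then $v\notin A$ and it becomes $\textsf{interior}$), merging the two segment partitions — which concatenates segments sharing an end, lengths adding with no extra edge — and again rejecting length overflow and cycles and promoting finished $(A,\ell)$-paths; the new value is the sum of the two values plus the number of newly completed paths, maximized over all compatible pairs. Each of the $O(\tw\cdot n)$ nodes is processed in $n^{O(\tw)}$ time (the join iterating over $n^{O(\tw)}\cdot n^{O(\tw)}=n^{O(\tw)}$ pairs of signatures), so the algorithm runs in $n^{O(\tw)}$; the instance is a yes-instance iff the value at the root, whose bag is empty, is at least $k$.

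The main obstacle is precisely the join transition together with the correctness proof: one must pin down which vertices of the shared bag may be ``used'' on both sides, how an $\textsf{end}$ occurring on both sides fuses into an $\textsf{interior}$ vertex, how the segment partitions merge and lengths add, how cycles are excluded, and when a segment becomes a counted $(A,\ell)$-path, and then argue that optimal DP values correspond exactly to optimal partial packings in both directions. I note that the running time is $n^{O(\tw)}$ rather than $2^{O(\tw)}$ entirely because of component (iii): since $\ell$ can be as large as $n$, a segment's length has up to $n$ possible values; when $\ell$ is bounded this slack disappears, which is what opens the door to the FPT result of Theorem~\ref{thm:tw+ell}.
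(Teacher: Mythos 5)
Your proposal is correct and follows essentially the same route as the paper: compute an $O(\tw)$-width nice tree decomposition via Bodlaender et al.\ and run a standard dynamic program whose per-bag signature records the degree/role of each bag vertex, the structure of partial $A$-paths crossing the bag, and their lengths in $\{0,\dots,\ell\}$, with the $(\ell+1)^{O(\tw)}\le n^{O(\tw)}$ length component being exactly what limits the bound to XP rather than FPT. If anything, your description of the transitions (in particular the explicit segment pairing, which is needed to exclude cycles at introduce-edge and join nodes) is more detailed than the paper's, which states the table semantics and asserts the updates are straightforward.
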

\begin{proof}
Let $G$ be an $n$-vertex graph of treewidth at most $\tw$.
We compute the maximum number of vertex-disjoint $(A,\ell)$-paths
by a standard dynamic programming algorithm over a tree decomposition.
To this end, it is helpful to use so called nice tree decompositions~\cite{Kloks94}.
A tree decomposition $(\{X_{i} \mid i \in I\}, T=(I,F))$ is \emph{nice}
if $T$ is a rooted tree, each node of $T$ has at most two children, and
\begin{itemize}
  \item if $i \in I$ has exactly one child $j$,
  then $X_{i} = X_{j} \cup \{u\}$ for some $u \notin X_{j}$ or $X_{i} = X_{j} \setminus \{v\}$ for some $v \in X_{j}$;
  \item if $i \in I$ has exactly two children $j$ and $h$, then $X_{i} = X_{j} = X_{h}$.
\end{itemize}

We compute a tree decomposition of width 
at most $w = 5\tw+4$ in time $2^{O(\tw)} n$~\cite{BodlaenderDDFLP16},
and then convert it in linear time to a nice tree decomposition $(\{X_{i} \mid i \in I\}, T=(I,F))$ 
of the same width having $O(n)$ nodes in the tree $T$~\cite{Kloks94}.

Let $V_{i} = \bigcup_{j} X_{j}$, where the union is taken over all descendants $j$ of $i$ in $T$ (including $i$ itself).
For each $i \in I$, we define the DP table $\textsf{dp}_{i}(\alpha,\lambda,\delta,\kappa) \in \{\textsf{true}, \textsf{false}\}$
with the indices 
$\alpha\colon X_{i} \to \{B \subseteq A \mid |B| \le 2\}$,
$\lambda\colon X_{i} \to \{0,\dots,\ell\}$,
$\delta\colon X_{i} \to \{0, 1, 2\}$,
$\kappa \in \{0,\dots,|A|/2\}$
such that $\textsf{dp}_{i}(\alpha,\lambda,\delta,\kappa) = \textsf{true}$
if and only if there exists a spanning subgraph $H$ of $G[V_{i}]$ such that all the following conditions are satisfied:
\begin{itemize}
  \item all connected components of $H$ are paths, and $\kappa$ of them are $(A,\ell)$-paths;
  \item each vertex in $A \cap V_{i}$ has degree at most 1 in $H$;
  \item for each $v \in X_{i}$, $\alpha(v) = V(C(v)) \cap A$, where $C(v)$ is the connected component of $H$ containing $v$;
  \item $C(v)$ contains exactly $\lambda(v)$ edges for each $v \in X_{i}$; 
  \item each $v \in X_{i}$ has degree $\delta(v)$ in $H$.
\end{itemize}
The size of the table $\textsf{dp}_{i}$ is $O(|A|^{2 (w+1)} \cdot (\ell+1)^{w+1} \cdot 3^{w+1} \cdot |A|/2)$.
Since $|A|$ and $\ell$ are at most $n$, this table size can be bounded by $n^{O(\tw)}$.
If we know all table entries for the root $r$ of $T$,
then we can find the maximum number of vertex-disjoint $(A,\ell)$-paths in $G$ in time $n^{O(\tw)}$,
by just finding the maximum number $\kappa$ such that there exist $\alpha$, $\lambda$, and $\delta$
with $\textsf{dp}_{i}(\alpha,\lambda,\delta,\kappa) = \textsf{true}$.

It is trivial to compute all the entries for a node $i$ with no children in time $n^{O(\tw)}$.
For a node $i$ with one or two children,
if the table entries for the children are already computed,
then it is straightforward to compute  
the table entries for $i$ in time polynomial in the total table size of the children.
This running time is again $n^{O(\tw)}$.
Since there are $O(n)$ nodes in $T$, the total running time is $n^{O(\tw)}$.
\qed
\end{proof}

\begin{theorem}
\label{thm:tw+ell}
{\alpp} parameterized by $\tw + \ell$ is fixed-parameter tractable.
\end{theorem}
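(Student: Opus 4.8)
The plan is to express the $(A,\ell)$-Path Packing problem (in its optimization or decision form) as a formula in monadic second-order logic MSO$_2$ whose size depends only on $\ell$ (and $k$, but $k \le |A|/2$ and we can take $|A|$ as part of the parameter, or alternatively handle $k$ via a counting extension), and then invoke Courcelle's theorem: every MSO$_2$-expressible property can be decided in time $f(\text{\textbar}\varphi\text{\textbar},\tw)\cdot n$ on graphs of treewidth $\tw$ given with a tree decomposition~\cite{ArnborgLS91,Courcelle92,Bodlaender96}. Since a tree decomposition of near-optimal width can be computed in time $2^{O(\tw)}n$~\cite{BodlaenderDDFLP16}, this yields fixed-parameter tractability in $\tw + \ell$. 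The subtlety is that $k$ is not bounded by the parameter, so rather than writing a fixed formula asserting ``there exist $k$ disjoint paths,'' I would use the optimization/counting variant of Courcelle's theorem (Arnborg--Lagergren--Seese): maximize $|A'|$ over sets $A' \subseteq A$ such that there is an edge set $F$ whose components, restricted to the relevant vertices, form exactly the disjoint $(A,\ell)$-paths saturating $A'$; then compare $\tfrac{1}{2}|A'|$ against $k$.

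The heart of the argument is thus the MSO$_2$ encoding. First I would write, as a primitive, a formula $\mathrm{Path}_{\le \ell}(F)$ saying that the edge set $F$ induces a graph in which every component is a simple path of at most $\ell$ edges; being a disjoint union of paths is MSO$_2$-expressible (every vertex has degree $\le 2$ in $F$ and $F$ is acyclic), and ``no path longer than $\ell$'' can be enforced by forbidding a path on $\ell+2$ vertices as a subgraph — a first-order, hence MSO$_2$, condition of size depending only on $\ell$. Next, to pin the length to \emph{exactly} $\ell$ and the endpoint/internal structure, I would quantify over the set $A' \subseteq A$ of path-endpoints and say: (i) each $v\in A'$ has $F$-degree exactly $1$; (ii) each $v \in A \setminus A'$ has $F$-degree $0$; (iii) each $v \notin A$ has $F$-degree $0$ or $2$; (iv) for every edge $\{u,v\}\in F$ with $u \in A'$, the unique $F$-path starting at $u$ reaches another vertex of $A'$ after exactly $\ell$ edges — again expressible by a bounded-size subformula since $\ell$ is the parameter (one literally writes down ``there exist $w_1,\dots,w_{\ell-1}$ forming the path''). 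Conditions (i)--(iii) force every $F$-component to be a path with both ends in $A'$ and interior in $V\setminus A$, and (iv) forces length exactly $\ell$; the number of such components is $|A'|/2$. I would then take the formula $\exists F\, \varphi(F,A')$ and maximize $|A'|$ over it, accepting iff the optimum is $\ge 2k$.

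The main obstacle I anticipate is purely bookkeeping: getting the degree constraints and the ``exactly $\ell$ edges from endpoint to endpoint'' clause written so that together they genuinely characterize disjoint $(A,\ell)$-paths and nothing else — in particular ruling out the degenerate possibility that $F$ contains a long path with several $A'$-vertices in its interior, which is exactly what conditions (i) and (iii) do by making interior $A$-vertices have degree $0$. One must also be a little careful that the formula size is $O(\ell)$ (or at worst $O(\ell^2)$), not exponential in $\ell$; writing out a path of length $\ell$ costs $\Theta(\ell)$ quantifiers, which is fine. Everything else is standard: apply Courcelle's theorem in its extremum/LinEMSOL form~\cite{ArnborgLS91,Courcelle92}, with the tree decomposition supplied by~\cite{BodlaenderDDFLP16}, and the running time is $f(\ell,\tw)\cdot n$ for some computable $f$. \qed
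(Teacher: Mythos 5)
Your proposal is correct and follows essentially the same route as the paper: encode the packing as an MSO$_2$ formula whose length depends only on $\ell$, then invoke the optimization (LinEMSOL) form of Courcelle's theorem to sidestep the unbounded $k$. The only cosmetic difference is that the paper maximizes $|F|$ (the edge set of the packing, each component having exactly $\ell$ edges) rather than the set $A'$ of saturated endpoints, and it pins down component sizes via an explicit $\textsf{component}(C,F)$ subformula instead of your degree-plus-distance conditions; both encodings work.
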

\begin{proof}
To show the fixed-parameter tractability of {\alpp} parameterized by $\tw + \ell$,
we use the \emph{monadic second-order logic} (MSO$_{2}$) of graphs.
In an MSO$_{2}$ formula, we can use
(i) the logical connectives $\lor$, $\land$, $\lnot$, $\Leftrightarrow$, $\Rightarrow$,
(ii) variables for vertices, edges, vertex sets, and edge sets,
(iii) the quantifiers $\forall$ and $\exists$ applicable to these variables, and
(iv) the following binary relations:
\begin{itemize}
  \item $u \in U$ for a vertex variable $u$ and a vertex set variable $U$;
  \item $d \in D$ for an edge variable $d$ and an edge set variable $D$;
  \item $\textsf{inc}(d,u)$ for an edge variable $d$ and a vertex variable $u$,
  where the interpretation is that $d$ is incident with $u$;
  \item equality of variables.
\end{itemize}
In the following expressions, we use some syntax sugars, such as $\ne$ and $\notin$, obviously obtained from the definition of MSO$_{2}$
for ease of presentation.
Also, we follow the convention that in MSO$_{2}$ formulas, the set variables $V$ and $E$ denote the vertex and edges sets of the input graph, respectively.

Let $\varphi$ be a fixed MSO$_{2}$ formula.
It is known that given an $n$-vertex graph of treewidth $w$ and assignments to some free variables of $\varphi$,
one can find in time $O(f(|\varphi| + w) \cdot  n)$, where $f$ is some computable function,
assignments to the rest of free variables that satisfies $\varphi$ and maximizes a given linear function 
in the sizes of the free variables of $\varphi$~\cite{ArnborgLS91,Courcelle92,Bodlaender96}.

We can express a formula $(A,\ell)\textsf{-paths}(F)$ that is true if and only if $F$ is the edge set of a set of $(A,\ell)$-paths as follows:
\begin{align*}
 (A,\ell)\textsf{-paths}(F) := {}&\textsf{paths}(F) \land \ell\textsf{-components}(F) \land (\forall v \in V \; (\textsf{deg}_{= 1}(v,F) \implies v \in A)),
\end{align*}
where $\textsf{paths}(F)$ is true if and only if $F$ is the edge set of a set of paths,
$\ell\textsf{-components}(F)$ is true if and only if $F$ is the edge set of a graph
that only has size-$\ell$ components, and $\textsf{deg}_{= 1}(v,F)$ is true if and only if exactly one edge in $F$ has $v$ as an endpoint.
We can easily express these three subformulas
in such a way that the length of the formula $(A,\ell)\textsf{-paths}(F)$ depends only on $\ell$
as follows.

The following formula $\textsf{deg}_{\le d}(v,D)$ has length depending only on $d$
and is true if and only if at most $d$ edges in $D$ has $v$ as an endpoint.
\[
  \textsf{deg}_{\le d}(v,D) := \nexists e_{1}, \dots, e_{d+1} \in D 
  \left(\bigwedge_{1 \le i < j \le d+1} e_{i} \ne e_{j}\right)
  \land
  \left(\bigwedge_{1 \le i \le d+1} \textsf{inc}(e_{i}, v)\right).
\]
Now it is straightforward to express $\textsf{deg}_{= 1}(v,D)$ and $\textsf{paths}(F)$:
\begin{align*}
  \textsf{deg}_{= 1}(v,D) &:= \textsf{deg}_{\le 1}(v,D) \land \lnot \textsf{deg}_{\le 0}(v,D), \\
  \textsf{paths}(F) &:= \left(\forall v \in V (\textsf{deg}_{\le 2}(v,F))\right) \land \left(\forall C \subseteq F, \exists v \in V (\textsf{deg}_{\le 1}(v,C))\right).
\end{align*}

We can express $\ell\textsf{-components}(F)$ as follows
\[
  \ell\textsf{-components}(F) := \forall C \subseteq F (\textsf{component}(C,F) \implies \textsf{size}_{=\ell}(C)),
\]
where $\textsf{component}(C,F)$ is true if and only if $C$ is the edge set of a connected component
(i.e., an inclusion-wise maximal connected subgraph) of the graph induced by $F$,
and $\textsf{size}_{=\ell}(C))$ is true if and only if $C$ includes exactly $\ell$ edges.

As we allow the expression of $\textsf{size}_{=\ell}(C)$ to have length depending on $\ell$,
it is trivially expressible, e.g., as follows:
\begin{align*}
  \textsf{size}_{=\ell}(C) := {}&\exists e_{1}, \dots, e_{\ell} \in C
  \left(\bigwedge_{1 \le i < j \le \ell} (e_{i} \ne e_{j}) 
    \land \left(\forall e' \left(\bigwedge_{1 \le i \le \ell} (e_{i} \ne e') \implies e' \notin C\right)\right)
  \right).
\end{align*}

Expressing the connectivity of the graph induced by an edge set $C$ is a nice exercise
and well known to have the following solution:
\begin{align*}
  \textsf{connected}(C) 
  &:=
  \exists U \subseteq V 
  (\forall v \in V  (v \in U \iff \exists e \in C (\textsf{inc}(e,v)))) \land{}
  \\
  &\qquad\qquad \quad
  (\forall W \subseteq U (W = U \lor (\exists w \in W, \exists z \notin W,  \textsf{adj}(w,z)))),
\end{align*}
where $\textsf{adj}(w,z) := \exists e \in E (\textsf{inc}(e,w) \land \textsf{inc}(e,z))$.
Using this expression, we can express $\textsf{component}(C, F)$ as follows:
\begin{align*}
  \textsf{component}(C, F) 
  &:= 
  \textsf{connected}(C) 
   \land
  \forall e \in F
  (e \notin C \implies \lnot \textsf{connected}(C \cup \{e\})).
\end{align*}

The formula $(A,\ell)\textsf{-paths}(F)$ has two free variables $A$ and $F$.
We assign (or identify) the terminal vertex set $A$ in the input of {\alpp} to the variable $A$,
and maximize the size of $F$. 
As mentioned above, this can be done in time $O(f(|\varphi| + w) \cdot  n)$
for $n$-vertex graphs of treewidth at most $w$, where $f$ is some computable function.
\qed 
\end{proof}

Now we show that {\falpp} is W[1]-hard parameterized by pathwidth
(and hence also by treewidth), even if we also consider $|A|$ as an additional
parameter. We present a reduction from a W[1]-complete problem $k$-\textsc{Multi-Colored Clique} ($k$-\textsc{MCC})~\cite{FellowsHRV09}, which
goes through an intermediate version of our problem. Specifically, we will
consider a version of {\falpp} with the following modifications: the graph has
(positive integer) edge weights, and the length of a path is the sum of the
weights of its edges; the set $A$ is given to us partitioned into pairs
indicating the endpoints of the sought $A$-paths; for each such pair
the value of $\ell$ may be different.

More formally, {\xalpp} is the following problem: we are given a graph
$G=(V,E)$, a weight function $w \colon E \to \mathbb{Z}^+$, and
$r$ triples $(s_1,t_1,\ell_1)$, $\ldots$, $(s_r,t_r,\ell_r) \in V \times V \times \mathbb{Z}^+$,
where all $s_{i}, t_{i} \in V$ are distinct.
We are asked if there exists a set of $r$ vertex-disjoint paths in $G$ 
such that for each $i \in [r]$\footnote{%
For a positive integer $n$, we denote the set $\{1, 2, \dots, n\}$ by $[n]$.},
the $i$th path in this set has $s_i$ and $t_i$ as its endpoints
and the sum of the weights of its edges is $\ell_{i}$.
We first show that establishing that
this variation of the problem is hard implies also the hardness of {\falpp}.

\begin{lemma}
\label{lem:xlinkage}
There exists an algorithm which, given an instance of {\xalpp} on an
$n$-vertex graph $G$ with $r$ triples and maximum edge weight $W$, constructs
in time polynomial in $n+W$ an equivalent instance $(G',A,|A|/2,\ell)$ of {\falpp} with
the properties: 
(i) $|A|=2r$, (ii) $\pw(G')\le \pw(G)+2$.
\end{lemma}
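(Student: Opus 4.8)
The plan is to remove the two features of {\xalpp} that {\falpp} lacks --- edge weights and a separate target length per triple --- in two stages. First I would eliminate the weights by subdivision: replace each edge $e$ of $G$ by an internally disjoint path on $w(e)$ edges, obtaining a graph $G_1$ with $N := |V(G_1)| = n + \sum_{e\in E}(w(e)-1)$ vertices. Under this operation an $s_i$--$t_i$ path of total weight $\ell_i$ in $G$ corresponds bijectively to an $s_i$--$t_i$ path with exactly $\ell_i$ edges in $G_1$, and every simple path of $G_1$ has at most $N-1$ edges. We may assume each $\ell_i \le N$: if some $\ell_i > N$, then triple $i$ is unsatisfiable and we output a trivial no-instance of {\falpp}.

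Second, I would collapse the $r$ target lengths to one common value $\ell$ and forget the pairing in $A$, compensating with pendant paths. For each $i \in [r]$ attach to $s_i$ a new pendant path of length $\alpha_i := i(N+1)$ ending at a fresh leaf $s_i'$, and attach to $t_i$ a new pendant path of length $\beta_i := \ell - \ell_i - \alpha_i$ ending at a fresh leaf $t_i'$, where $\ell$ is a single value chosen large enough (polynomially bounded in $n+W$; for instance $\ell := 4r(N+1)+N$ works) so that (a) $\beta_i \ge 1$ for all $i$, (b) $\alpha_i + \alpha_j + N < \ell$ for all $i,j$, and (c) $\beta_i + \beta_j > \ell$ for all $i \ne j$. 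Let $G'$ be the resulting graph, set $A := \{s_i', t_i' : i\in[r]\}$ so that $|A| = 2r$, and output $(G', A, r, \ell)$ (note $r = |A|/2$, as {\falpp} requires, and the original $s_i,t_i$ now lie in $V(G')\setminus A$, so they may be used as internal vertices). The intended correspondence is that the path representing triple $i$ runs from $s_i'$ along its pendant to $s_i$ (using $\alpha_i$ edges), then along an $\ell_i$-edge $s_i$--$t_i$ path of $G_1$, then from $t_i$ along the other pendant to $t_i'$ (using $\beta_i$ edges), for a total of $\alpha_i + \ell_i + \beta_i = \ell$ edges.

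Then I would prove equivalence in both directions. The forward direction is routine: $r$ disjoint weight-$\ell_i$ paths of $G$ turn, after subdivision and prepending/appending the pendant stubs, into $r$ disjoint $(A,\ell)$-paths of $G'$. For the converse, take $r$ vertex-disjoint $(A,\ell)$-paths in $G'$; each has both endpoints in the $2r$-element set $A$, and since each pendant is an induced path whose far end is a leaf, any path entering a pendant is forced to run down to its leaf and stop there. Hence the $r$ paths realise a perfect matching on $A$. A block $\{s_i',t_j'\}$ with $i\ne j$ would require a $G_1$-path of $\ell - \alpha_i - \beta_j = \ell_j + (\alpha_j - \alpha_i)$ edges; using $|\alpha_i-\alpha_j|\ge N+1$ and $\ell_j\le N$ this number is $\ge N+1$ if $j>i$ and is negative if $j<i$, so no such path exists. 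A block $\{s_i',s_j'\}$ would need a $G_1$-path of $\ell - \alpha_i - \alpha_j > N$ edges by (b), impossible; and $\{t_i',t_j'\}$ has total length at least $\beta_i + \beta_j > \ell$ by (c), impossible. Therefore the matching blocks are exactly the pairs $\{s_i',t_i'\}$, each path uses precisely its two pendant stubs, and its middle portion is an $\ell_i$-edge $s_i$--$t_i$ path of $G_1$, i.e. a weight-$\ell_i$ $(s_i,t_i)$-path of $G$; disjointness is inherited, so {\xalpp} is a yes-instance.

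For the promised properties: $|A| = 2r$ by construction, and $G'$ is obtained from $G$ by subdividing a set of edges an arbitrary number of times and attaching a pendant path to each vertex of $\{s_1,t_1,\dots,s_r,t_r\}$, so $\pw(G') \le \pw(G) + 2$ by Corollary~\ref{cor:pw-subdivision-path}. All of $N$, the $\alpha_i$, the $\beta_i$, $\ell$, and $|V(G')|$ are polynomially bounded in $n+W$, so the construction runs in time polynomial in $n+W$. The step I expect to be the main obstacle is designing the pendant lengths so that the single global length constraint of {\falpp} leaves no matching on $A$ other than the intended one: the ``well-separated scales'' choice of the $\alpha_i$ together with a single large $\ell$ is precisely what kills all three bad cases ($\{s_i',t_j'\}$ with $i\ne j$, $\{s_i',s_j'\}$, and $\{t_i',t_j'\}$) simultaneously.
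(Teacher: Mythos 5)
Your proposal is correct and follows essentially the same route as the paper's proof: first eliminate weights by subdividing edges, then attach pendant paths to the $s_i$ and $t_i$ whose lengths are chosen at well-separated scales so that a single global target length $\ell$ forces each $(A,\ell)$-path to pair $s_i'$ with $t_i'$ and to use a middle portion of length exactly $\ell_i$, finishing with Corollary~\ref{cor:pw-subdivision-path} for the pathwidth bound. The only difference is cosmetic (the paper uses $\alpha_i=p^2+ip$, $\beta_i=p^2-ip-\ell_i$, $\ell=2p^2$ with $p=|V(G_1)|$, whereas you use $\alpha_i=i(N+1)$ and $\ell=4r(N+1)+N$), and your case analysis ruling out the unwanted pairings is, if anything, slightly more explicit.
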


\begin{proof}

First, we simplify the given instance of {\xalpp} by removing edge weights:
for every edge $e=\{u,v\}\in E(G)$ with $w(e)>1$, we remove this edge and replace
it with a path from $u$ to $v$ with length $w(e)$ going through new vertices
(in other words, $e$ has been subdivided $w(e)-1$ times). It is not hard to see that we
have an equivalent instance of {\xalpp} on the new graph, which we call $G_1$, where the
weight of all edges is $1$ and $|V(G_1)| \le n^2W$. 
We now give a polynomial-time reduction from this new
instance of {\xalpp} to {\falpp}.

Let $p=|V(G_1)|$ and $\ell = 2 p^{2}$. For each $i\in[r]$ we do the following: we construct a new
vertex $s_i'$ and connect it to $s_i$ using a path of length $p^{2} + ip$
going through new vertices; we construct a new vertex $t_i'$ and connect it to
$t_i$ using a path of length $p^{2}-ip -\ell_i$ through new vertices.
(Note that $p^{2}-ip -\ell_{i} > 0$ since $p \ge n \ge 2$, $i \le n/2$, and $\ell_{i} < n$.)
We set $A$ to contain all the vertices $s_i', t_i'$ for $i\in[r]$. This
completes the construction and it is clear that $|A|=2r$, the new graph $G'$ has order at most 
$2p^{3} \le 2 n^{6} \cdot W^{3}$ and
can be constructed in time polynomial in $n+W$. 

We claim that the new graph $G'$ has $|A|/2$ vertex-disjoint $(A,\ell)$-paths if and only if the
{\xalpp} instance of $G_1$ has a positive answer. Indeed, if there exists a
collection of $r$ vertex-disjoint paths in $G_1$ such that the $i$-th path has
endpoints $s_i,t_i$ and length $\ell_i$, we add to this path the paths from $s_i'$ to $s_i$ and from $t_i$ to $t_i'$
and this gives a path of length $\ell = 2p^{2}$ with
endpoints in $A$. Observe that all these paths are vertex-disjoint, so we
obtain a yes-certificate of {\falpp}. For the converse direction, suppose that 
$G'$ has a set $\mathcal{A}$ of $|A|/2$ vertex-disjoint $(A,\ell)$-paths. 
The set $\mathcal{A}$ does not contain a path with endpoints $s_{i}'$ and $s_{j}'$
since such a path has length at least $2p^{2} + (i+j) p + 1 > \ell$.
Furthermore, a path in $\mathcal{A}$ cannot connect some $s_{i}'$ and $t_{j}'$ with $i > j$
since the length of such a path is at least
$2p^{2} + (i-j)p-\ell_j +1 > \ell$.
Since $A = \{s_{i}' \mid i \in [r]\} \cup \{t_{i}' \mid i \in [r]\}$,
we can conclude that each path $P$ in $\mathcal{A}$ connects $s_{i}'$ and $t_{i}'$ for some $i$, and 
the subpath of $P$ connecting $s_{i}$ and $t_{i}$ has length
$2p^{2} - (p^{2} + ip) - (p^{2} - ip - \ell_{i}) = \ell_{i}$.
We therefore obtain a solution to the {\xalpp} instance.

Finally, observe that the only modifications we have done on $G$ is 
to subdivide some edges and to attach paths to some vertices.
By Corollary~\ref{cor:pw-subdivision-path}, the pathwidth is increased only by at most $2$.
\qed
\end{proof}

We can now reduce the $k$-\textsc{MCC} problem to {\xalpp}.

\begin{lemma}\label{lem:xlinkage2} There exists a polynomial-time algorithm
which, given an instance of $k$-\textsc{MCC} on a graph $G$ with $n$ vertices,
produces an equivalent instance of {\xalpp} on a graph $G'$, with $r \in O(k^2)$
triples, $\pw(G') \in O(k^2)$, and maximum edge weight $W \in n^{O(1)}$.  \end{lemma}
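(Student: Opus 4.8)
The plan is to build a standard ``gadget reduction'' from $k$-\textsc{MCC} in which we encode the choice of a vertex in each colour class by the length of a subpath, and encode the choice of an edge between two colour classes in the same way, then use the rigid endpoint/length constraints of {\xalpp} to force consistency between the vertex choices and the edge choices. First I would fix an instance of $k$-\textsc{MCC} with colour classes $V_1,\dots,V_k$, each of size at most $n$, and identify the vertices in $V_i$ with distinct integers from some range, say $\{1,\dots,n\}$ (padding with dummy values if necessary); similarly I would give each edge of $G$ a numeric ``code.'' The graph $G'$ will consist of $k$ \emph{vertex-selection gadgets} (one per colour class) and $\binom{k}{2}$ \emph{edge-selection gadgets} (one per pair of colour classes), plus wiring between them. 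A vertex-selection gadget for colour $i$ is essentially a collection of parallel internally-disjoint paths from a source $s$ to a sink $t$, one path of length encoding each candidate vertex $v\in V_i$; choosing which $s$--$t$ path a solution path uses corresponds to selecting $v$. To make the \emph{same} selected vertex be read by all the edge-gadgets that touch colour $i$, I would route the colour-$i$ selection path so that it must traverse, in series, one ``copy'' of the choice for each of the $k-1$ relevant edge-gadgets, using weighted edges so that the total length of the path is a fixed target value only if all copies encode the same integer; any mismatch changes the total weight and violates the prescribed $\ell_i$ in the corresponding triple. Symmetrically, an edge-selection gadget for the pair $\{i,j\}$ offers one path per edge $e=\{u,v\}$ of $G$ with $u\in V_i$, $v\in V_j$, with length encoding the pair $(u,v)$, and the consistency wiring forces $u$ to match colour $i$'s selection and $v$ to match colour $j$'s selection. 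The triples $(s_q,t_q,\ell_q)$ are exactly the source/sink pairs of these gadgets together with the unique ``correct'' target length, so that $r = k + \binom{k}{2} \in O(k^2)$.

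The arithmetic that makes consistency checking work is the routine but delicate part: I would use a positional/mixed-radix encoding with base larger than $n$, so that a sum of several encoded integers equals the target sum if and only if each summand individually equals the intended value (no carrying or cancellation is possible because the coordinates are bounded by $n$ and the radix exceeds $n$). Concretely, if colour $i$'s path must pass through $k-1$ edge-gadget copies, I arrange that the length contributed by the $m$-th copy is $c\cdot B^{f(i,m)} + (\text{choice})\cdot B^{g(i,m)}$ where $B > n$ and the exponents are chosen so the ``choice'' digits from the vertex side and the edge side land in the same position and must be equal; then the total length is a fixed number exactly when all digits agree. This keeps the maximum edge weight polynomial in $n$ (it is $B^{O(k)}$ with $B\in n^{O(1)}$, but since the number of gadgets is $O(k^2)$ and $k$ is the parameter — wait, for W[1]-hardness $k$ is the parameter, so $W = n^{O(1)}$ requires the exponents to be $O(1)$, not $O(k)$); I would therefore instead give \emph{each} gadget pair its own local consistency check performed by a single edge of weight comparing two adjacent digits, so that each individual weight is $O(n^2)$ and only $O(k^2)$ of them are summed — making $W\in n^{O(1)}$ as required. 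The bookkeeping of which subpaths of which gadget must pass through which ``junction'' vertices is the crux, and getting the exponents/weights so that the honest solution has exactly the target length while every dishonest one misses it is where I expect to spend the most care.

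For the pathwidth bound I would argue that $G'$ has a linear-structure layout: each gadget is essentially a ``bundle'' of parallel paths of bounded combined ``width,'' and we chain the $O(k^2)$ gadgets left to right, so a path decomposition can sweep through the gadgets one at a time keeping only $O(k^2)$ vertices in a bag (the current gadget's active vertices plus the global sources/sinks $s_q,t_q$ that have not yet been matched — but since each source/sink is local to one gadget, we actually only need $O(1)$ per currently-active gadget plus a constant number of shared ``backbone'' vertices). Using the node-search characterisation (as in Lemmas~\ref{lem:pw-subdivision} and \ref{lem:pw-path-attachment}) is probably the cleanest way to make this rigorous: search the backbone first with $O(k)$ searchers to separate the gadgets, then clean each gadget's internal parallel paths one at a time with $O(1)$ extra searchers, giving $\pw(G')\in O(k^2)$ overall (in fact $O(k)$ with more care, but $O(k^2)$ suffices). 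Finally, correctness of the reduction is proved in both directions in the usual way: from a multicoloured clique in $G$, take the honest selection paths through all gadgets — they are pairwise vertex-disjoint by construction and hit the prescribed lengths, so they solve the {\xalpp} instance; conversely, a solution to {\xalpp} must, by the endpoint constraints, use exactly one $s_q$--$t_q$ path per gadget, the length constraints force all vertex and edge choices to be mutually consistent, and consistency of an edge choice with both its endpoint colour choices exactly says the chosen edges form a clique on the chosen vertices. The main obstacle, as noted, is designing the weights so that the ``all digits match'' condition is enforced by length alone while keeping $W$ polynomial and $\pw(G')$ in $O(k^2)$; once the gadget arithmetic is pinned down, the disjointness, pathwidth, and equivalence arguments are routine.
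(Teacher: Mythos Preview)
Your high-level architecture (vertex-selection gadgets plus edge-selection gadgets wired together, $r=k+\binom{k}{2}$ triples) matches the paper, but there is a real gap at exactly the point you flag yourself. The pure-arithmetic consistency mechanism you first describe --- encoding each of the $k-1$ copies of colour $i$'s choice in separate digits of a base-$B$ number --- produces edge weights of order $B^{\Theta(k)}=n^{\Theta(k)}$, violating $W\in n^{O(1)}$. Your proposed fix (``give each gadget pair its own local consistency check performed by a single edge of weight comparing two adjacent digits'') is not an actual mechanism: in {\xalpp} an edge does not \emph{compare} anything, the only constraint available is the total length of each $s_q$--$t_q$ path. As long as colour $i$'s choice must be recorded in $k-1$ places and all consistency checks are done by summing weights along \emph{the same} path, you are back to $\Theta(k)$ independent digits and weights exponential in $k$.

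The idea you are missing, and what the paper exploits, is to enforce vertex--edge consistency through \emph{vertex-disjointness} rather than arithmetic. The paper's vertex-selection gadget for colour $i$ is not a bundle of parallel $s_i$--$t_i$ paths; it is a ladder of $n$ rungs $P_{i,1},\dots,P_{i,n}$, each rung itself a path of length $k$ with internal vertices $x_{i,j,i'}$ (one per other colour $i'$). The $s_i$--$t_i$ path of prescribed length $L_1=(k+1)(n-1)$ is forced, by length and a parity argument, to traverse \emph{all but one} rung. The single skipped rung $P_{i,\sigma(i)}$ is the selection, and its $k-1$ internal vertices are the only vertices of gadget $i$ still available to the edge paths. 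The edge path for $(i_1,i_2)$ then consists of four ``heavy'' edges of weight $\approx L_2/4$ (with $L_2=60n^6$) routed $s_{i_1,i_2}\to x_{i_1,j_1,i_2}\to p_{i_1,i_2}\to x_{i_2,j_2,i_1}\to t_{i_1,i_2}$; disjointness from the vertex paths forces $j_1=\sigma(i_1)$ and $j_2=\sigma(i_2)$ for free. Arithmetic is used only \emph{locally}, inside a single edge gadget, to ensure the first and last heavy edges correspond to the same $G$-edge: their weights are $L_2/4\pm(j_1 n^4+j_2 n^2)$, so the total hits $L_2$ only when both halves encode the same pair $(j_1,j_2)$. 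This keeps every weight in $n^{O(1)}$ independently of $k$. The pathwidth bound is then immediate: deleting the $3\binom{k}{2}$ hub vertices $s_{i_1,i_2},p_{i_1,i_2},t_{i_1,i_2}$ leaves only the ladders, each of pathwidth $O(1)$.
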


\begin{proof}
We are given a graph $G = (V,E)$ with $V$ partitioned into $k$ sets $V_{1}, \dots, V_{k}$,
and are asked for a clique of size $k$ that contains one vertex from each set. 
To ease notation, we will assume that $n$ is odd and $|V_{i}|=n$ for $i \in [k]$ 
(so the graph has $kn$ vertices in total) and that the vertices of $V_{i}$ are numbered $1, \dots, n$.
We define two lengths $L_{1} = (k+1)(n-1)$ and $L_{2} = 60 n^{6}$.

For $i\in[k]$ we construct a vertex-selection gadget as follows (see \figref{fig:vertex-gadget}):
we make $n$ paths of length $k$, call them $P_{i,j}$, where $j \in [n]$.
Let $a_{i,j}$ and $b_{i,j}$ be the first and last vertices of path $P_{i,j}$, respectively.
We label the remaining vertices of the path $P_{i,j}$ as $x_{i, j, i'}$ for
$i' \in \{1, \dots, k\} \setminus \{i\}$ in some arbitrary order.
Then for each $j \in [n-1]$ we connect $a_{i,j}$ to $a_{i,j+1}$ and $b_{i,j}$ to $b_{i,j+1}$.
All edges constructed so far have weight $1$.
We set $s_{i} = a_{i,1}$ and $t_{i} = a_{i,n}$.
We add to the instance the triple $(s_{i},t_{i},L_{1})$.

\begin{figure}[bth]
  \centering
  \includegraphics[scale=1]{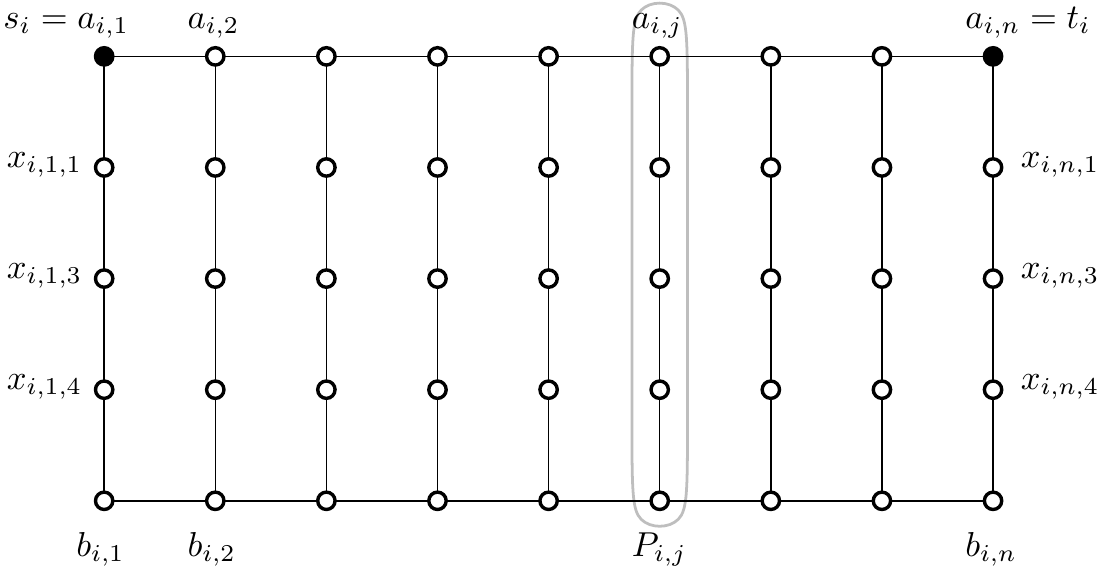}
  \caption{An example of the vertex-selection gadget for $n=9$, $k=4$, and $i=2$.}
  \label{fig:vertex-gadget}
\end{figure}

We now need to construct an edge-verification gadget as follows (see \figref{fig:edge-gadget}): 
for each $i_{1}, i_{2} \in [k]$ with $i_{1} < i_{2}$,
we construct three vertices $s_{i_1,i_2}$, $t_{i_1,i_2}$, $p_{i_1,i_2}$.
For each edge $e$ of $G$ between $V_{i_{1}}$ and $V_{i_{2}}$, we do the following: 
suppose $e$ connects vertex $j_{1}$ of $V_{i_{1}}$ to vertex $j_{2}$ of $V_{i_{2}}$.
We add the following four edges:
\begin{enumerate}

\item An edge from $s_{i_1,i_2}$ to $x_{i_1, j_1, i_2}$. This edge has weight
$L_{2}/4 + j_{1} n^{4} + j_{2} n^{2}$.

\item An edge from $x_{i_1,j_1,i_2}$ to $p_{i_1,i_2}$. This edge has weight
$L_2/4$.

\item An edge from $p_{i_1,i_2}$ to $x_{i_2,j_2,i_1}$. This edge has weight
$L_2/4$.

\item An edge from $x_{i_2,j_2,i_1}$ to $t_{i_1,i_2}$. This edge has weight
$L_{2}/4 - j_{1} n^{4} - j_{2} n^{2}$.

\end{enumerate}

We call the edges constructed in the above step heavy edges, since their weight
is close to $L_2/4$.  We add the $k(k-1)/2$ triples $(s_{i_1,i_2}, t_{i_1,i_2},
L_2)$ to the instance, for all $i_1,i_2\in[k]$, with $i_1< i_2$.

\begin{figure}[bth]
  \centering
  \includegraphics[scale=1]{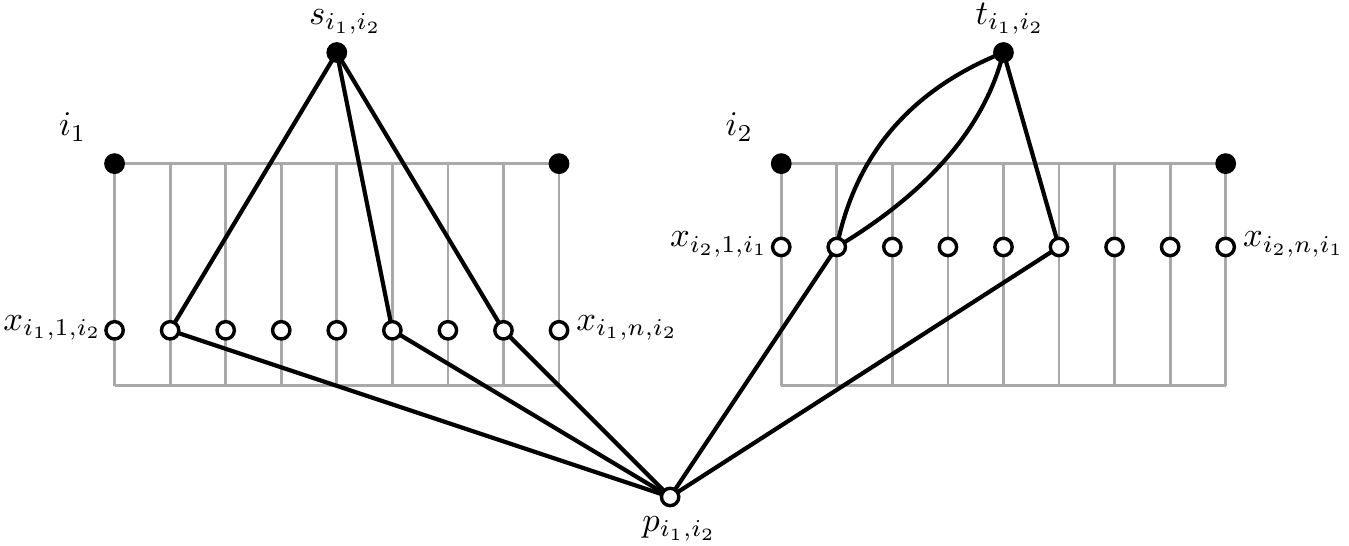}
  \caption{An example of the edge-verification gadget for $V_{i_{1}}$ and $V_{i_{2}}$ ($i_{1} < i_{2}$).
  In this example, there are exactly three edges between $V_{i_{1}}$ and $V_{i_{2}}$.}
  \label{fig:edge-gadget}
\end{figure}

Note that in the above description we have created some parallel edges, for
example from $s_{i_1,i_2}$ to $x_{i_1,2j_1,i_2}$ (if the vertex $j_1$ of
$V_{i_1}$ has several neighbors in $V_{i_2}$). This can be avoided by
subdividing such edges once and assigning weights to the new edges so that the
total weight stays the same. For simplicity we ignore this detail in the
remainder since it does not significantly affect the pathwidth of
the graph (see Corollary~\ref{cor:pw-subdivision-path}).  This completes the construction.

Let us now prove correctness. First assume that we have a $k$-multicolored
clique in $G$, encoded by a function $\sigma\colon [k] \to [n]$, that is, $\sigma(i)$
is the vertex of the clique that belongs in $V_i$. For the $i$-th vertex-selection 
gadget we have the triple $(s_i,t_i,L_1)$.
We construct a path from $s_i$ to $t_i$
by traversing the paths $P_{i,j}$ for $j \in [n] \setminus \{\sigma(i)\}$ 
in the increasing order of $j$
and by appropriately using the ``horizontal'' edges connecting adjacent paths.
See \figref{fig:correctness1}.
The path has length $L_{1}$:
we have traversed $n-1$ paths $P_{i,j}$ with $j \ne \sigma(i)$, each of which has $k$ edges;
we have also traversed $n-1$ horizontal edges connecting adjacent paths.
The total length is therefore, $(n-1)k + n-1 = L_{1}$. In this way we have satisfied all the
$k$ triples $(s_i,t_i,L_1)$ and have not used the vertices
$x_{i,\sigma(i),i'}$ for any $i'\neq i$.

Consider now a triple $(s_{i_1,i_2}, t_{i_1,i_2},L_2)$, for $i_1<i_2$. Because
we have selected a clique, there exists an edge between vertex $\sigma(i_1)$ of
$V_{i_1}$ and $\sigma(i_2)$ of $V_{i_2}$. For this edge we have constructed
four edges in our new instance, linking $s_{i_1,i_2}$ to $t_{i_1,i_2}$ with a
total weight of $L_2$. We use these paths to satisfy the $k\choose 2$ triples
$(s_{i_1,i_2},t_{i_1,i_2},L_2)$. These paths are disjoint from each other: when
$i_1<i_2$, $x_{i_1,\sigma(i_1),i_2}$ is only used in the path from $s_{i_1,i_2}$ to
$t_{i_1,i_2}$ and when $i_1>i_2$, $x_{i_1,\sigma(i_1),i_2}$ is only used in the
path from $s_{i_2,i_1}$ to $t_{i_2,i_1}$.  Furthermore, these paths are disjoint from
the paths in the vertex-selection gadgets, as we observed that
$x_{i,\sigma(i),i'}$ are not used by the path connecting $s_i$ to $t_i$. We
thus have a valid solution.
See \figref{fig:correctness1}.

\begin{figure}[bth]
  \centering
  \includegraphics[scale=1]{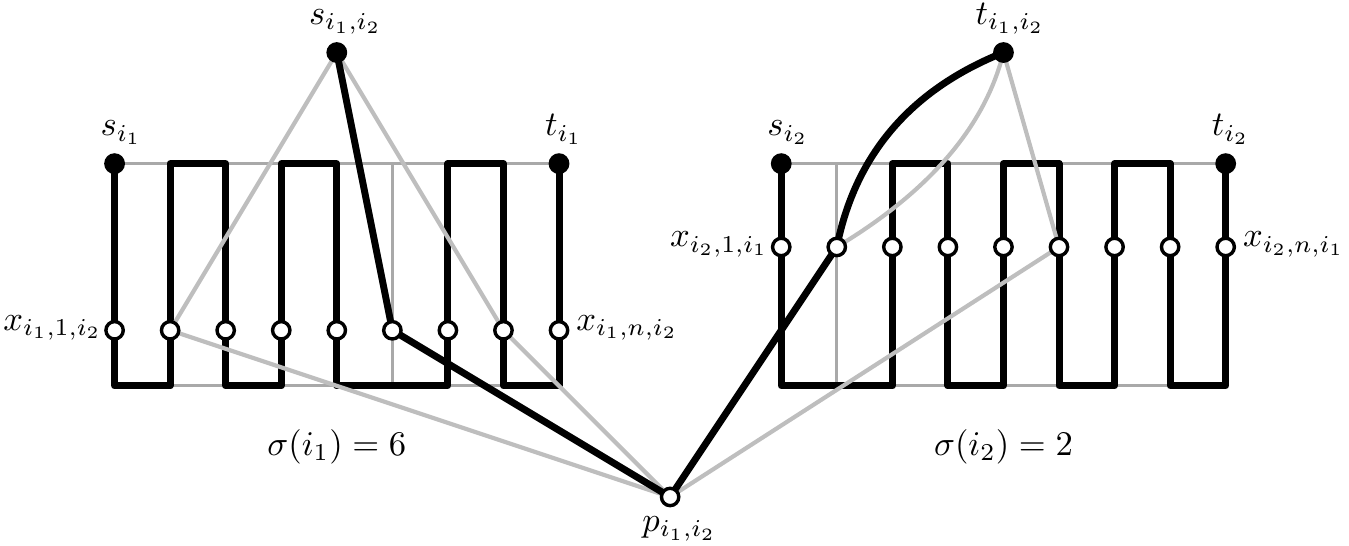}
  \caption{Construction of paths from $\sigma$.}
  \label{fig:correctness1}
\end{figure}

For the converse direction, suppose we have a valid solution for the {\xalpp}
instance. First, consider the path connecting $s_i$ to $t_i$. This path has
length $L_1$, therefore it cannot be using any heavy edges, since these edges
have cost at least $L_2/4-n^{5}-n^{3}>L_{1}$. Inside the vertex-selection gadget, the
path may use either all of the edges of a path $P_{i,j}$ or none. Let us now
see how many $P_{i,j}$ are unused. First, a simple parity argument shows that
the number of paths traversed in the $a_{i,j}\to b_{i,j}$ direction is equal to those
traversed in the opposite direction, so the total number of used paths is even.
Since we have an odd number of paths in total (as $n$ is odd), at least one path is not used.
We conclude that exactly one $P_{i,j}$ is not used, otherwise the path from
$s_i$ to $t_i$ would be too short. Let $\sigma(i)$ be defined as the index $j$
such that the internal vertices of $P_{i,j}$ are not used in the $s_i\to t_i$
path of the solution. We define a clique in $G$ by selecting for each $i$ the
vertex $\sigma(i)$.

Let us argue why this set induces a clique. Let $j_1,j_2$ be the vertices
selected in $V_{i_1},V_{i_2}$ respectively, with $i_1<i_2$, and consider the
triple $(s_{i_1,i_2},t_{i_1,i_2},L_2)$. 
This triple must be satisfied by a path
that uses exactly four heavy edges, since each heavy edge has weight
at least $L_{2}/5 + n^{6}$ and at most $L_{2}/3 - 3n^{6}$,
and all other edges together have weight smaller than $n^{3}$.
Hence, every such path is using at least two internal vertices of
some $P_{i,j}$ because every heavy edge is incident on such a vertex.
By our previous reasoning, the paths that satisfy the $(s_i,t_i,L_1)$ triples have
used all such vertices except for one path $P_{i,j}$ for each $i$. There exist
therefore exactly $k(k-1)$ such vertices available, so each of the $k(k-1)/2$
triples $(s_{i_1,i_2}, t_{i_1,i_2}, L_2)$ has a path using exactly two of these
vertices. Hence, each such path consists of four heavy edges and no other edges.

Such a path must therefore be using one edge incident on $s_{i_1,i_2}$, one
edge incident on $t_{i_1,i_2}$ and two edges incident on $p_{i_1,i_2}$. The
used edge incident on $s_{i_1,i_2}$ must have as other endpoint
$x_{i_1,2j_1,i_2}$, which implies that its weight is $L_2/4+j_{1}n^{4}+j_{2}'n^{2}$,
for some $j_2'$.  Similarly, the edge incident on $t_{i_1,i_2}$ must have
weight $L_2/4-j_{1}'n^{4}-j_{2}n^{2}$, as its other endpoint is necessarily
$x_{i_2,2j_2,i_1}$. We conclude that the only way that the length of this path
is $L_2$ is if $j_1=j_1'$ and $j_2=j_2'$. Therefore, we have an edge between
the two selected vertices, and as a result a $k$-clique.

To conclude we observe that deleting the $3 \cdot \binom{k}{2}$
vertices $s_{i_1,i_2}$, $p_{i_1,i_2}$, $t_{i_1,i_2}$ disconnects the graph into components that correspond
to the vertex gadgets with some paths attached.
By Corollary~\ref{cor:pw-subdivision-path}, each such component has pathwidth at most 4 as it can be seen 
as a graph obtained from a subdivision of the $2 \times n$ grid by attaching paths to some vertices.
As a result the whole graph has pathwidth $3 \cdot \binom{k}{2} + 4$.
\qed
\end{proof}

\begin{theorem}
\label{thm:pw+|A|}
{\falpp} is W[1]-hard parameterized by $\pw+|A|$.
\end{theorem}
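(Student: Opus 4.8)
The plan is to compose the two reductions established in Lemmas~\ref{lem:xlinkage2} and \ref{lem:xlinkage} into a single parameterized reduction from $k$-\textsc{MCC}, which is W[1]-hard when parameterized by $k$~\cite{FellowsHRV09}. Since $k$-\textsc{MCC} parameterized by $k$ is the canonical source problem, it suffices to exhibit a parameterized reduction whose output instance has $\pw + |A|$ bounded by a function of $k$.

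Concretely, I would start from an instance of $k$-\textsc{MCC} on an $n$-vertex graph and first apply Lemma~\ref{lem:xlinkage2} to obtain, in polynomial time, an equivalent instance of {\xalpp} on a graph $G'$ with $r \in O(k^2)$ triples, $\pw(G') \in O(k^2)$, and maximum edge weight $W \in n^{O(1)}$. I would then feed this {\xalpp} instance into the algorithm of Lemma~\ref{lem:xlinkage}, which runs in time polynomial in $(\text{order of }G') + W$ and produces an equivalent instance $(G'',A,|A|/2,\ell)$ of {\falpp} with $|A| = 2r$ and $\pw(G'') \le \pw(G') + 2$. Chaining the two equivalences, the original $k$-\textsc{MCC} instance is a yes-instance if and only if the final {\falpp} instance is.

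It remains to check that this is a genuine FPT-reduction. Because $W \in n^{O(1)}$ by Lemma~\ref{lem:xlinkage2}, the quantity $(\text{order of }G') + W$ is polynomial in $n$, so the second step (and hence the whole construction) runs in polynomial, in particular FPT, time. For the parameter, $r \in O(k^2)$ propagates to $|A| = 2r \in O(k^2)$, and $\pw(G'') \le \pw(G') + 2 \in O(k^2)$; therefore $\pw(G'') + |A| \in O(k^2)$ is bounded by a function of $k$ alone. This is exactly the statement that {\falpp} parameterized by $\pw + |A|$ is W[1]-hard.

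There is essentially no obstacle left at this stage: all the genuinely technical content — the vertex-selection and edge-verification gadgets, the weight choices that force a correct clique, and the pathwidth bounds obtained via Corollary~\ref{cor:pw-subdivision-path} after subdividing and attaching paths — has already been carried out in the two lemmas. The only care needed is bookkeeping, namely invoking $W \in n^{O(1)}$ to turn the "polynomial in $n+W$" running time of Lemma~\ref{lem:xlinkage} into a polynomial-time (hence FPT) reduction, and tracking that $r \in O(k^2)$ bounds both new parameters; I would write the proof as a short two- or three-sentence composition of Lemmas~\ref{lem:xlinkage2} and \ref{lem:xlinkage} making these two points explicit.
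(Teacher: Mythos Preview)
Your proposal is correct and follows exactly the paper's approach: the paper's proof of Theorem~\ref{thm:pw+|A|} is a two-sentence composition of Lemmas~\ref{lem:xlinkage} and~\ref{lem:xlinkage2}, noting that the resulting {\falpp} instance has $n^{O(1)}$ vertices, $|A|=O(k^2)$, and pathwidth $O(k^2)$. Your more explicit bookkeeping (using $W\in n^{O(1)}$ to ensure polynomial running time and tracking $r\in O(k^2)$ through both parameters) is sound and faithful to the intended argument.
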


\begin{proof} We compose the reductions of Lemmas \ref{lem:xlinkage} and
\ref{lem:xlinkage2}.  Starting with an instance of $k$-\textsc{MCC} with $n$
vertices this gives an instance of {\falpp} with $n^{O(1)}$ vertices,
$|A|=O(k^2)$, and pathwidth $O(k^2)$. \qed \end{proof}


\section{Hardness on grid graphs}
\label{sec:grid-graphs}

In this section, we show that for every constant $\ell \ge 4$, {\falpp} is NP-complete on grid graphs.
We first reduce {\pcsat} to {\falpp} on planar bipartite graphs of maximum degree at most 4.
We then modify the instance by subdividing edges and adding terminal vertices in an appropriate way,
and have an equivalent instance on grid graphs.

The input of \textsc{Circuit SAT} is a Boolean circuit with a number of inputs and one output.
The question is whether the circuit can output \textsf{true} by appropriately setting its inputs.
\textsc{Circuit SAT} is NP-complete since \textsc{CNF SAT}~\cite{Cook71} can be seen as a special case. 
When the underlying graph of the circuit is planar, the problem is called {\pcsat}.
Using planar crossover gadgets~\cite{McColl81}, we can show that {\pcsat} is NP-complete.
Furthermore, since NOR gates can replace other gates such as AND, OR, NOT, NAND, and XOR
without introducing any new crossing, we can conclude that 
{\pcsat} having NOR gates only is NP-complete.

Let $I = (G, A, \ell)$ be an instance of {\falpp} with $G = (V,E)$.
Let $\psi$ be a mapping that assigns each $e \in E$ an $(A,\ell)$-path in $G$,
and $\psi(E) = \{\psi(e) \mid e \in E\}$.
We say that $\psi$ is a \emph{guide} to $I$
if every set of $|A|/2$ vertex-disjoint $(A,\ell)$-paths, if any exists, is a subset of $\psi(E)$.
When a guide is given additionally to an instance of {\falpp},
we call the problem \textsf{Guided} {\falpp}.
Observe that a guide to an instance is not a restriction
but just additional information.

\begin{lemma}
\label{lem:planar-bigraph}
For every fixed $\ell \ge 4$, \textsf{Guided} {\falpp} is NP-complete on planar bipartite graphs of maximum degree at most 4.
\end{lemma}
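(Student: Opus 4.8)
The plan is a gadget reduction from {\pcsat} with NOR gates only, which is NP-complete as noted above; membership in NP is immediate, since a claimed family of $|A|/2$ vertex-disjoint $(A,\ell)$-paths is checked directly and the guide is only extra information. Fix $\ell \ge 4$. Given a planar NOR-circuit $C$ with a planar embedding, I would build a plane graph $G$, a terminal set $A$ of even size, and a guide $\psi$, such that $C$ is satisfiable if and only if $G$ has $|A|/2$ vertex-disjoint $(A,\ell)$-paths. The idea of the encoding: a truth value travelling along a wire of $C$ is represented by which of two ``phases'' of a tiling of that wire's terminals by $(A,\ell)$-paths is chosen; each gate gadget forces the phases of its incident wires to obey the NOR relation; and a cap on the output wire admits a valid local tiling only when the output phase is \textsf{true}.

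Next I would design the gadgets. A wire gadget is a path-like strip along which the terminals are spaced so that the only intended $(A,\ell)$-paths inside it join terminals at distance exactly $\ell$; with a suitable number of terminals there are exactly two perfect matchings of the interior terminals by such paths, and they leave complementary boundary conditions at the two ports, which is what carries the bit. I would also construct: bend gadgets to turn wires while keeping maximum degree at most $4$; a copy gadget to fan a signal out to several gates (needed, since a gate output may feed several gates and NOR alone does not avoid fan-out); the NOR gadget, a small plane piece with two input ports and one output port whose intended $(A,\ell)$-paths admit a perfect matching of its terminals exactly when $\mathrm{out} = \lnot(\mathrm{in}_1 \vee \mathrm{in}_2)$; and an output gadget with a valid local tiling only when the output wire is \textsf{true}. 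Every gadget is obtained from a fixed small graph by subdividing edges, using the room that $\ell \ge 4$ provides, so all gadgets are planar, of maximum degree at most $4$, and (by subdividing each edge to even length) bipartite; parity issues between bipartiteness and the ``mod-$\ell$'' spacing of wires are handled by padding wires with extra terminals.

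For the guide I would arrange that every edge of $G$ lies on exactly one intended $(A,\ell)$-path, and set $\psi(e)$ to be that path. That $\psi$ is a guide follows from a local rigidity property of each gadget: within a gadget, any $(A,\ell)$-path that could belong to a family covering all terminals must be one of the listed intended paths, since any other $(A,\ell)$-path strands some terminal of the gadget. Combining the per-gadget rigidity statements along the shared port vertices shows that every full packing of $G$ is contained in $\psi(E)$ and decomposes into globally consistent wire phases satisfying every gate's NOR relation with the output forced \textsf{true}; conversely a satisfying assignment of $C$ yields such a family. Hence $C$ is satisfiable if and only if the constructed instance with guide $\psi$ is a yes-instance.

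The main obstacle is the gadget design, and in particular making it work for every fixed $\ell \ge 4$: when $\ell$ is small (say $\ell = 4$, so each path has only three internal vertices) the NOR and copy gadgets must combine bits entirely through the phase/parity of many short paths while remaining planar, of degree at most $4$, bipartite, and rigid. Verifying the rigidity of the NOR and copy gadgets by case analysis, and reconciling the bipartiteness parity with the spacing required to give each wire exactly two tiling phases, is where the real effort goes; the rest of the argument is bookkeeping.
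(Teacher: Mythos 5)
Your high-level plan coincides with the paper's proof almost exactly: reduce from {\pcsat} restricted to NOR gates, encode a wire's truth value by which of two ``phases'' of $(A,\ell)$-paths is chosen at a shared pair of terminals (the paper calls these connection pairs and phrases the bit as the direction in which the paths proceed), build an input gadget, a fan-out gadget (the paper's split gadget), a NOR gadget, and an output cap that tiles only when its input is \textsf{true}, define the guide $\psi$ by assigning to each edge the unique intended path through it, and argue correctness gadget-by-gadget via local rigidity. Handling general $\ell\ge 4$ by giving explicit gadgets for $\ell=4$ and $\ell=5$ and subdividing edges incident to terminals for larger $\ell$ is also exactly what the paper does, as is the parity argument for bipartiteness (for even $\ell$ all connection-pair vertices get one colour; for odd $\ell$ the two vertices of each pair get opposite colours).

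The genuine gap is that the entire content of this lemma lives in the gadgets themselves, and you do not construct them; you explicitly defer ``the real effort'' of designing and verifying the NOR and copy gadgets. This is not a routine detail one can wave at: the paper's split gadget is three input-type paths of length $3\ell$ synchronized by a cycle of length $10\ell$, and its NOR gadget has three connection pairs each joined internally by a path of length $5\ell$ plus a long self-intersecting cycle entangling the two inputs with the output; establishing that each admits exactly the intended full packings (four for the NOR gadget, one per input combination) requires a concrete case analysis, including the non-obvious observation that certain $(A,\ell)$-paths present in the gadget can never occur in any full packing because using them strands a terminal. Without exhibiting graphs with these properties, the claim that a degree-$4$, planar, bipartite, rigid NOR gadget exists for every $\ell\ge4$ (in particular for $\ell=4$, where each path has only three internal vertices) remains an assertion rather than a proof, so the argument as written does not yet establish the lemma. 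A minor additional remark: the bend gadgets you mention are unnecessary at this stage, since planarity of the circuit is all that is needed here; the rectilinear/grid issues are handled separately in the paper's Theorem~\ref{thm:grid}.
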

\begin{proof}
Given a planar circuit with only NOR gates, 
we construct an equivalent instance of \textsf{Guided} {\falpp} with the fixed $\ell$.
We only need input gadget, output gadget, split gadget, NOR gadget,
and a way to connect the gadgets.
See \figref{fig:circuit-to-alpp} for the high-level idea of the reduction.
\begin{figure}[hbt]
  \centering
  {}
  \hfill
  \includegraphics[scale=.7,align=c]{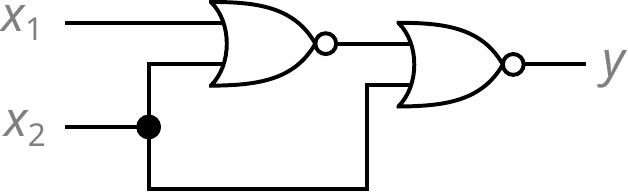}
  \hfill
  \includegraphics[scale=.7,align=c]{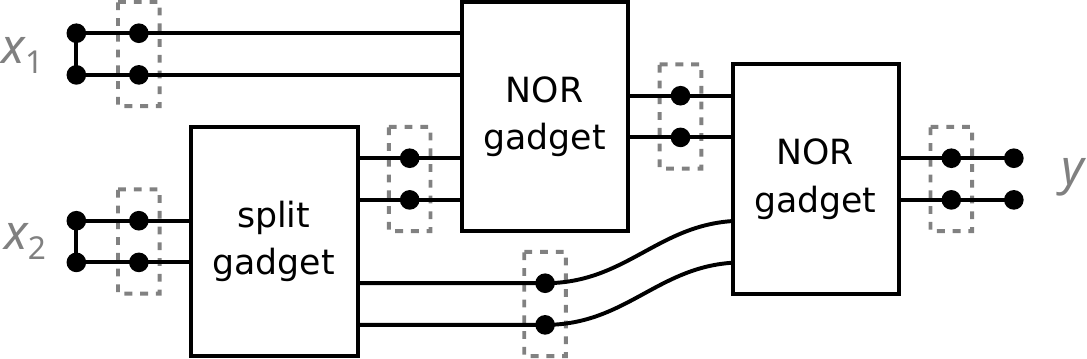}
  \hfill
  {}
  \caption{A planar circuit and the corresponding {\alpp} instance (simplified). 
    The vertices in $V \setminus A$ are omitted. The connection pairs are marked with dashed rectangles.}
  \label{fig:circuit-to-alpp}
\end{figure}

We explicitly present the gadgets for the cases $\ell = 4$ and $\ell = 5$.
For even (resp.\ odd) $\ell > 5$, the gadgets can be obtained from the one for $\ell = 4$ (resp.\ $\ell = 5$)
by subdividing $\lfloor \ell/2 \rfloor - 2$ times each edge incident to a vertex in $A$.

\paragraph{Connections between gadgets.}
We first explain how the gadgets are connected.
Each gadget has one or three pairs of vertices that are shared with other gadgets.
We call them \emph{connection pairs}.
All those vertices belong to the terminal set $A$.
In the figures, we draw each connection pair so that the two vertices are  next to each other vertically 
and mark them with a dashed rectangle.
If the $(A,\ell)$-paths using the vertices of a connection pair are going to the positive direction,
then we interpret it as that a \textsf{true} signal is sent via the connection pair.
If the paths are going to the negative direction,
then the connection pair is carrying a \textsf{false} signal.
(See \figref{fig:connection-pair}.)
Note that our reduction below forces the paths at each connection pair to proceed in the same direction.

\begin{figure}[hbt]
  \def\SCALE{.7}
  \centering
  {}
  \hfill
  \includegraphics[scale=\SCALE,align=c]{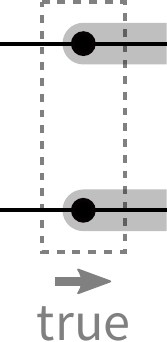}
  \hfill
  \includegraphics[scale=\SCALE,align=c]{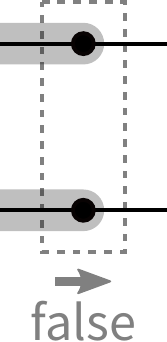}
  \hfill
  {}
  \caption{$(A,\ell)$-paths at a connection pair. We draw an $(A,\ell)$-path as a gray bar.}
  \label{fig:connection-pair}
\end{figure}

\paragraph{Input gadgets.}
The input gadget is simply a path of length $3\ell$, where the endpoints form its unique connection pair.
See \figref{fig:input_gadget}. For a full $(A,\ell)$-path packing, we only have two options.
One corresponds to \textsf{true} input~(\figref{fig:input_t})
and the other to \textsf{false} input~(\figref{fig:input_f}).
\begin{figure}[hbt]
  \def\SCALE{0.7}
  \def\MPSIZE{0.23}
  \centering
  \begin{minipage}[b]{\MPSIZE\linewidth}
    \centering
    \includegraphics[scale=\SCALE]{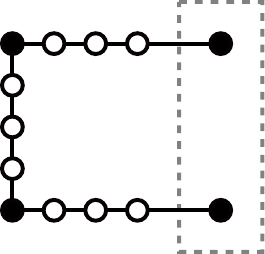}
    \subcaption{$\ell = 4$.}
    \label{fig:in_even}
  \end{minipage}
  \begin{minipage}[b]{\MPSIZE\linewidth}
    \centering
    \includegraphics[scale=\SCALE]{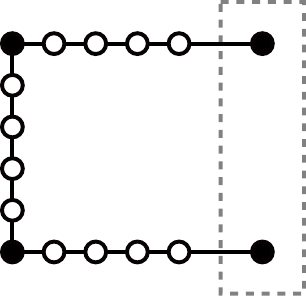}
    \subcaption{$\ell = 5$.}
    \label{fig:in_odd}
  \end{minipage}
  \begin{minipage}[b]{\MPSIZE\linewidth}
    \centering
    \includegraphics[scale=\SCALE]{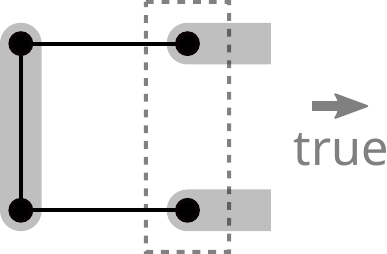}
    \subcaption{When input is \textsf{true}.}
    \label{fig:input_t}
  \end{minipage}
  \begin{minipage}[b]{\MPSIZE\linewidth}
    \centering
    \includegraphics[scale=\SCALE]{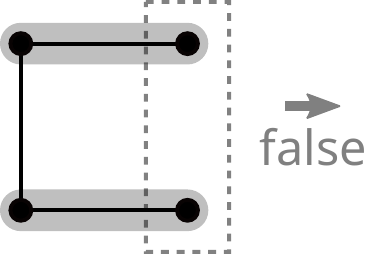}
    \subcaption{When input is \textsf{false}.}
    \label{fig:input_f}
  \end{minipage}
  \caption{The input gadgets. The black vertices belong to $A$ and the white vertices belong to $V \setminus A$.}
  \label{fig:input_gadget}
\end{figure}

\paragraph{Output gadgets.}
The output gadget consists of two paths of length $\ell$, where its unique connection pair includes one endpoint from each path.
See \figref{fig:output_gadget}. To have a full packing, the input to this gadget has to be \textsf{true}.
\begin{figure}[hbt]
  \def\SCALE{0.7}
  \def\MPSIZE{0.23}
  \centering
  \begin{minipage}[b]{\MPSIZE\linewidth}
    \centering
    \includegraphics[scale=\SCALE]{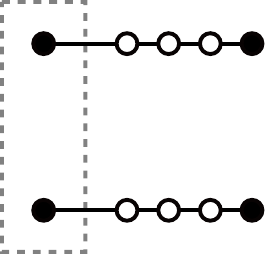}
    \subcaption{$\ell = 4$}
    \label{fig:out_even}
  \end{minipage}
  \begin{minipage}[b]{\MPSIZE\linewidth}
    \centering
    \includegraphics[scale=\SCALE]{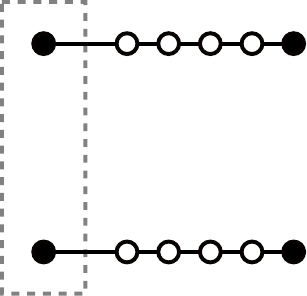}
    \subcaption{$\ell = 5$}
    \label{fig:out_odd}
  \end{minipage}
  \begin{minipage}[b]{\MPSIZE\linewidth}
    \centering
    \includegraphics[scale=\SCALE]{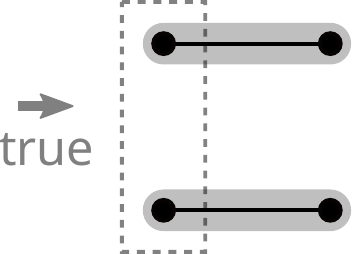}
    \subcaption{When output is \textsf{true}.}
    \label{fig:output_t}
  \end{minipage}
  \begin{minipage}[b]{\MPSIZE\linewidth}
    \centering
    \includegraphics[scale=\SCALE]{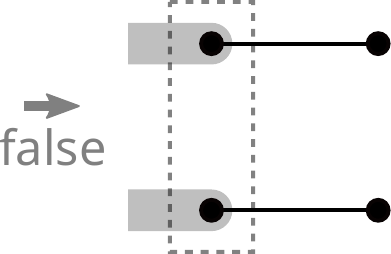}
    \subcaption{When output is \textsf{false}.}
    \label{fig:output_f}
  \end{minipage}
  \caption{The output gadgets.}
  \label{fig:output_gadget}
\end{figure}

\paragraph{Split gadgets.}
To simulate the split of a wire depicted in \figref{fig:split_circuit},
the split gadget consists of three paths of length $3\ell$, each of which is identical to the input gadget,
and a cycle of length $10\ell$ that synchronizes the three paths. See \figref{fig:split_gadget}.
To have a full $(A,\ell)$-path packing, there are only two ways to pack $(A,\ell)$-paths into a split gadget.
\figref{fig:split_gadget_packing} shows the two ways:
one on the left corresponds to a split of a \textsf{true} signal,
and the other a split of a \textsf{false} signal.

\begin{figure}[hbt]
  \centering
  \includegraphics[scale=0.7]{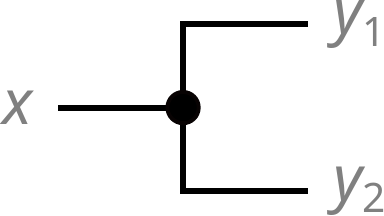}
  \caption{Splitting a wire in a circuit.}
  \label{fig:split_circuit}
\end{figure}

\begin{figure}[hbt]
  \def\SCALE{0.7}
  \def\MPSIZE{0.45}
  \centering
  \begin{minipage}[b]{\MPSIZE\linewidth}
    \centering
    \includegraphics[scale=\SCALE]{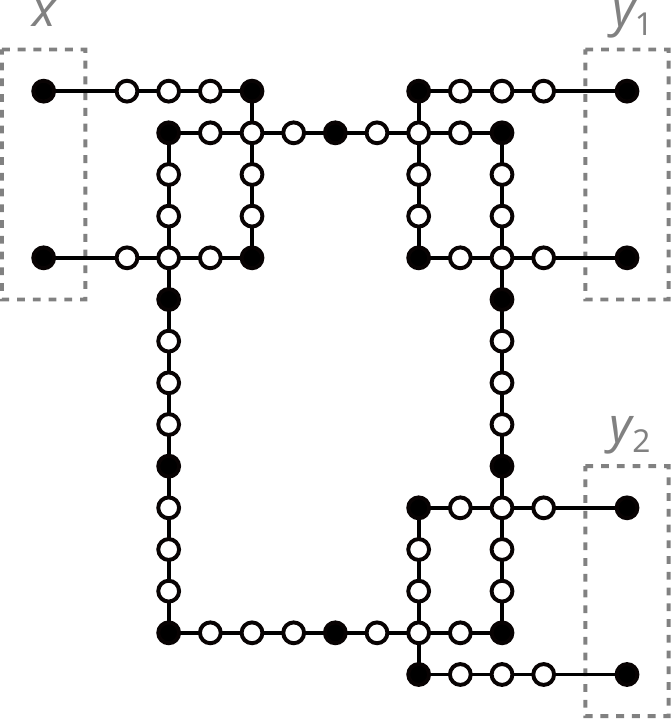}
    \subcaption{$\ell = 4$.}
    \label{fig:split_even}
  \end{minipage}
  \begin{minipage}[b]{\MPSIZE\linewidth}
    \centering
    \includegraphics[scale=\SCALE]{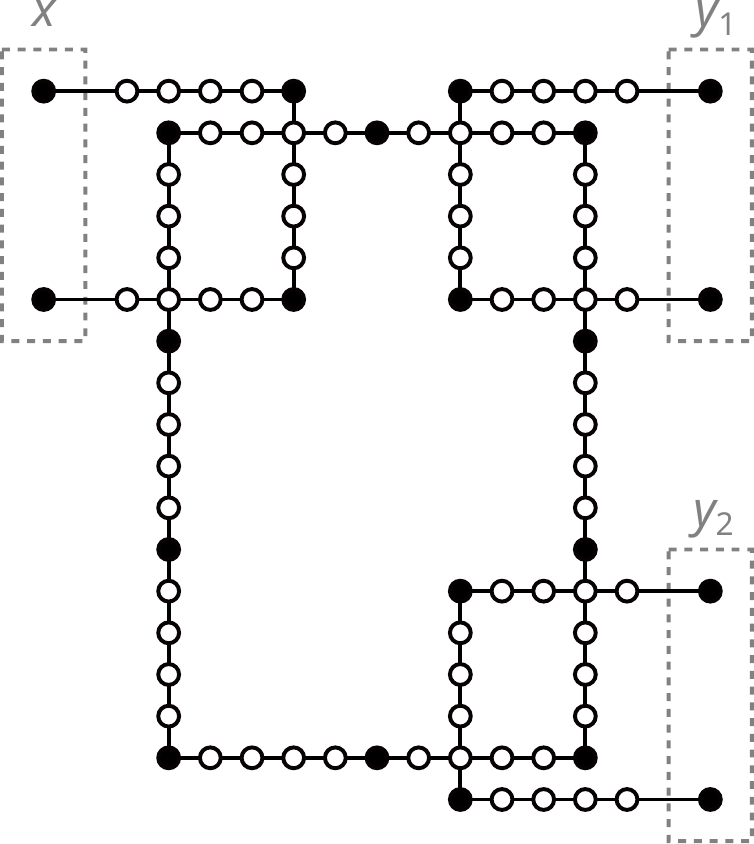}
    \subcaption{$\ell = 5$.}
    \label{fig:split_odd}
  \end{minipage}
  \caption{The split gadgets.}
  \label{fig:split_gadget}
\end{figure}

\begin{figure}[hbt]
  \def\SCALE{0.7}
  \def\MPSIZE{0.45}
  \centering
  \begin{minipage}[b]{\MPSIZE\linewidth}
    \centering
    \includegraphics[scale=\SCALE]{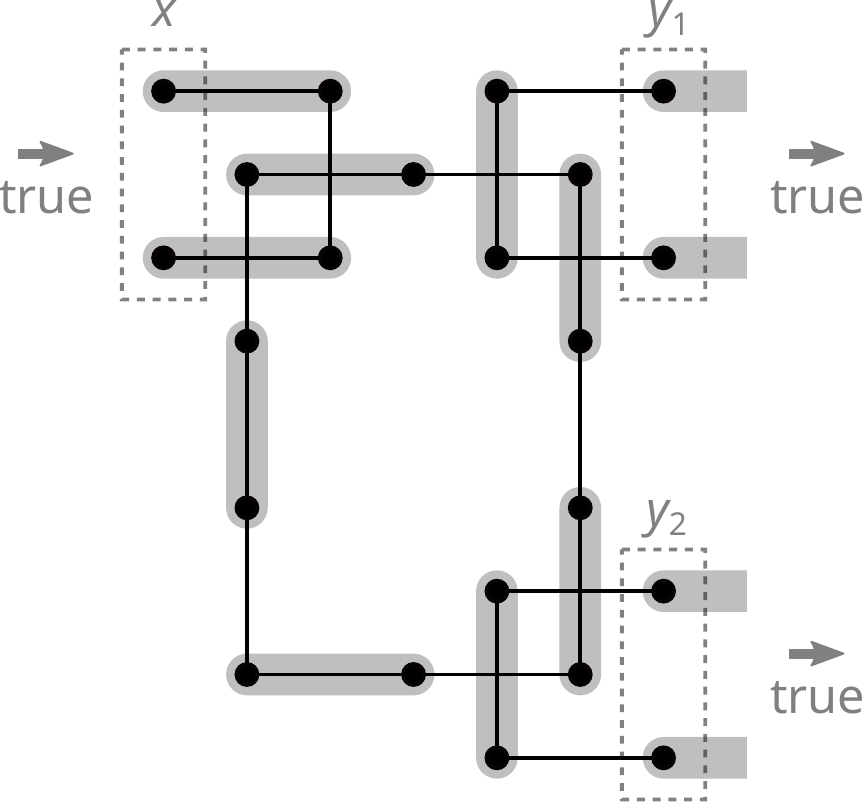}
    \subcaption{Splitting \textsf{true} signal.}
    \label{fig:split_t}
  \end{minipage}
  \begin{minipage}[b]{\MPSIZE\linewidth}
    \centering
    \includegraphics[scale=\SCALE]{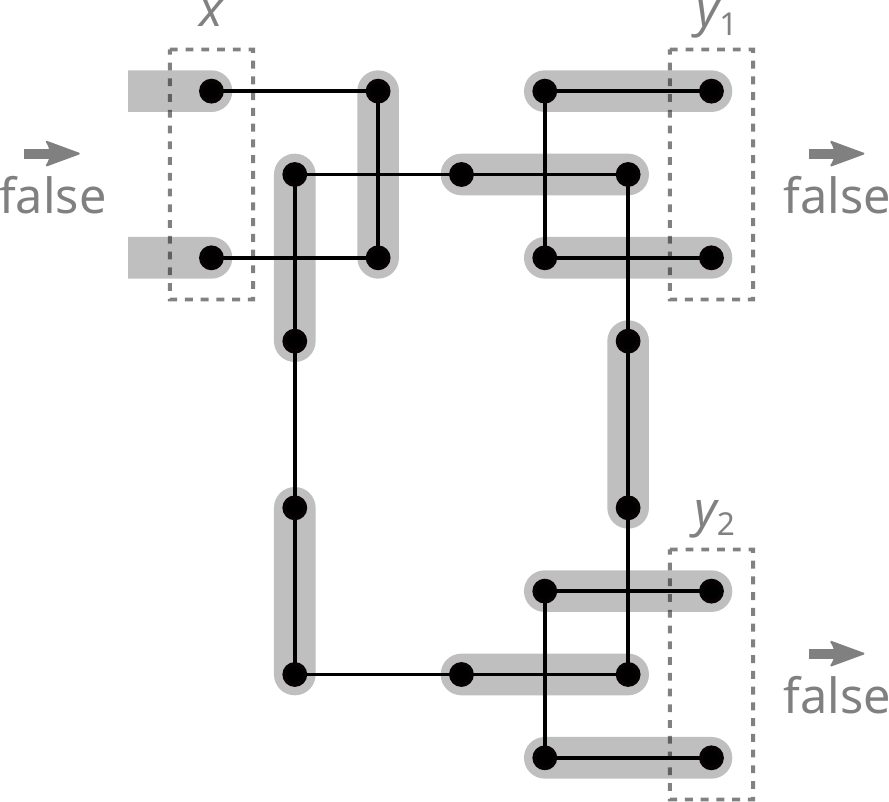}
    \subcaption{Splitting \textsf{false} signal.}
    \label{fig:split_f}
  \end{minipage}
  \caption{The possible $(A,\ell)$-path packings of the split gadget.}
  \label{fig:split_gadget_packing}
\end{figure}

\paragraph{NOR gadgets.}
Recall that NOR stands for ``NOT OR''
and that the output $y$ of a NOR gate is \textsf{true} if and only if both inputs $x_{1}$ and $x_{2}$ are \textsf{false}.
The NOR gadgets are given in \figref{fig:nor_gadget}.
The structure of the gadget is rather involved.
It has three connection pairs, two for the inputs and one for the output, and the endpoints of each pair are connected by a path of length $5\ell$.
Additionally, there is a long self-intersecting cycle that somehow entangles the inputs and the output.
There are only four ways to fully pack $(A,\ell)$-paths into a NOR gadget,
and each packing corresponds to a correct behavior of a NOR gate (see \figref{fig:nor_gadget_packing}).
To see the correctness of \figref{fig:nor_gadget_packing},
it is important to observe that 
in the NOR gadgets for even $\ell$, 
there are some $(A,\ell)$-paths that are never used in a full $(A,\ell)$-path packing.
For example, the $(A,\ell)$-path with endpoints $v_{1}$ and $v_{2}$ in \figref{fig:nor_even} is such a path.
In a full $(A,\ell)$-path packing, $w_{2}$ has to be an endpoint of an $(A,\ell)$-path either with $w_{1}$ or $w_{3}$.
Hence, if we use the $(A,\ell)$-path with endpoints $v_{1}$ and $v_{2}$,
then one of $u_{1}$ and $u_{2}$ cannot belong to any $(A,\ell)$-path in the packing.
\begin{figure}[hbt]
  \def\SCALE{.55}
  \centering
  \begin{minipage}[b]{.48\linewidth}
    \centering
    \includegraphics[scale=\SCALE]{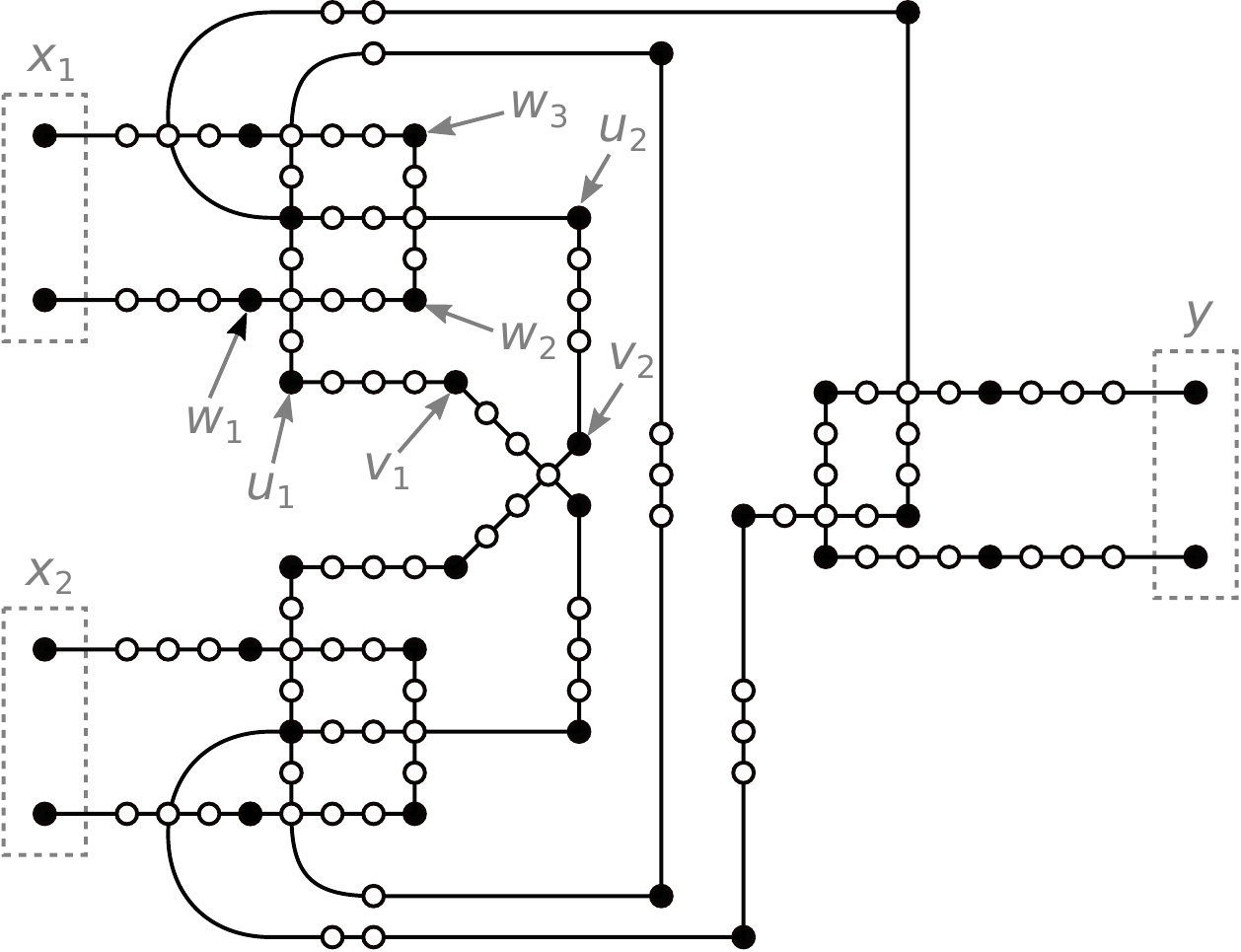}
    \subcaption{$\ell = 4$.}
    \label{fig:nor_even}
  \end{minipage}
  \begin{minipage}[b]{.49\linewidth}
    \centering
    \includegraphics[scale=\SCALE]{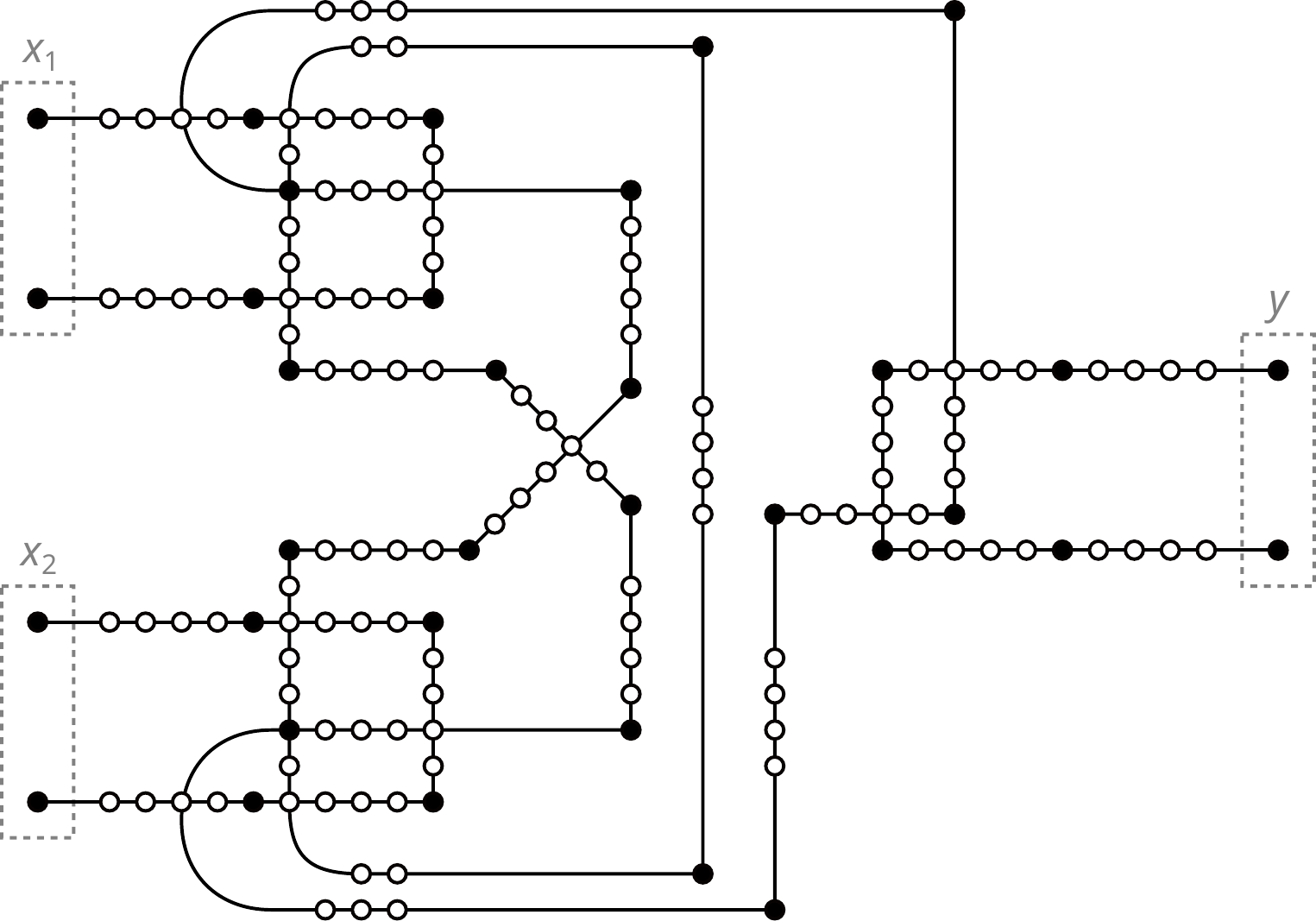}
    \subcaption{$\ell = 5$.}
    \label{fig:nor_odd}
  \end{minipage}
  \caption{The NOR gadgets.}
  \label{fig:nor_gadget}
\end{figure}

\begin{figure}[hbt]
  \def\SCALE{0.5}
  \def\MPSIZE{0.48}
  \centering
  \begin{minipage}[b]{\MPSIZE\linewidth}
    \centering
    \includegraphics[scale=\SCALE]{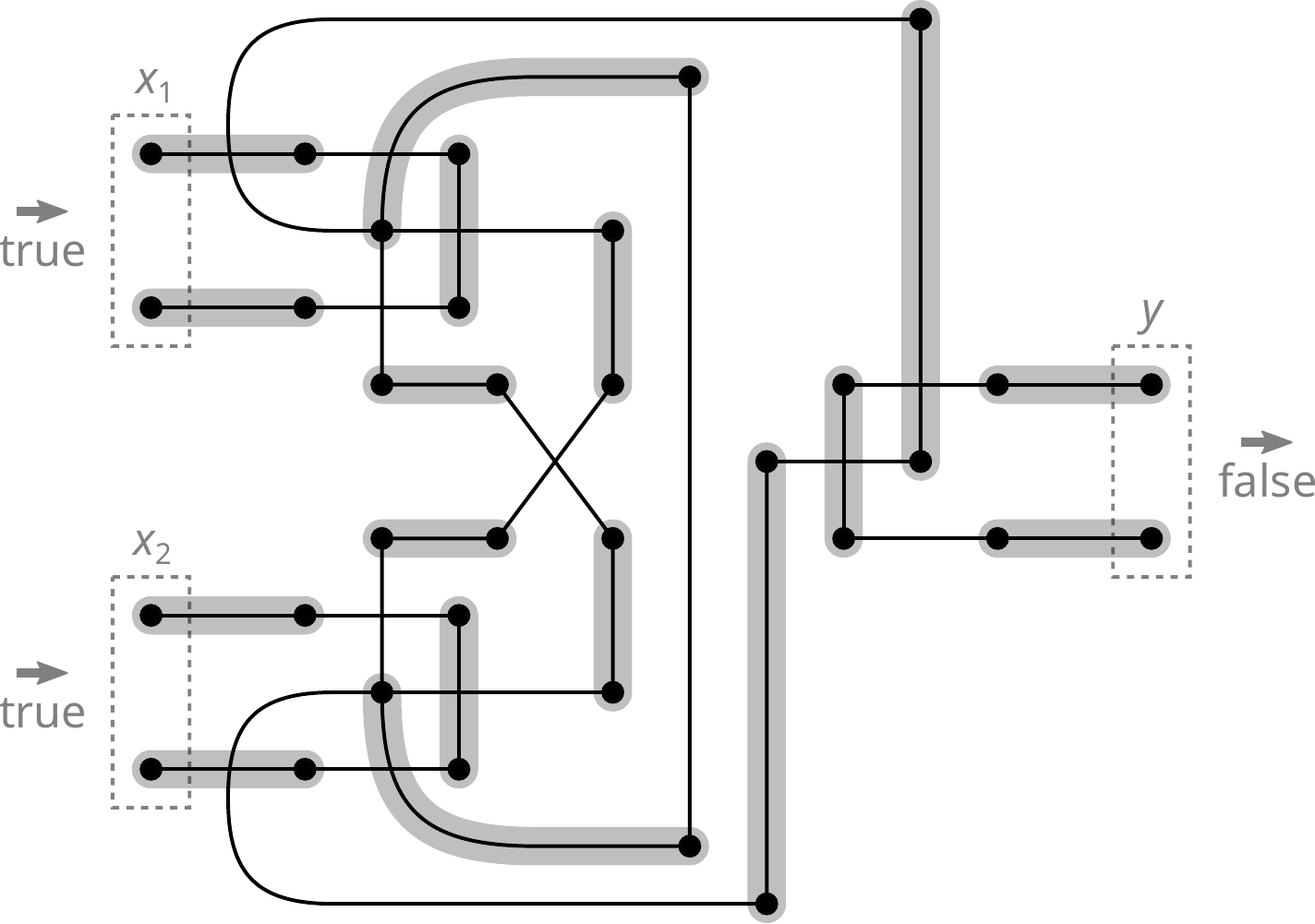}
    \subcaption{When input is (\textsf{true, true}).}
    \label{fig:nor_tt}
  \end{minipage}
  \begin{minipage}[b]{\MPSIZE\linewidth}
    \centering
    \includegraphics[scale=\SCALE]{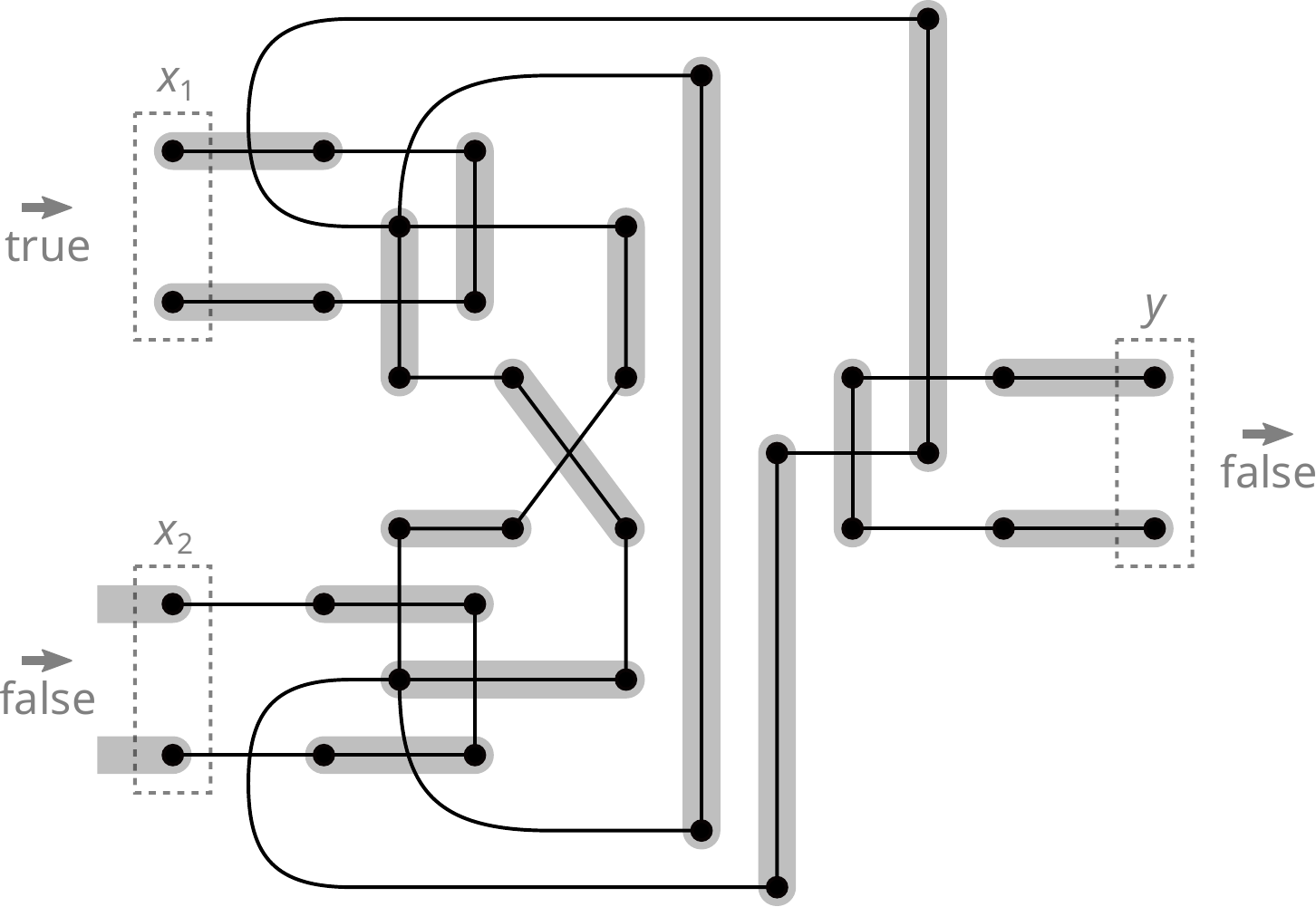}
    \subcaption{When input is (\textsf{true, false}).}
    \label{fig:nor_tf}
  \end{minipage}
  
  \bigskip
  \begin{minipage}[b]{\MPSIZE\linewidth}
    \centering
    \includegraphics[scale=\SCALE]{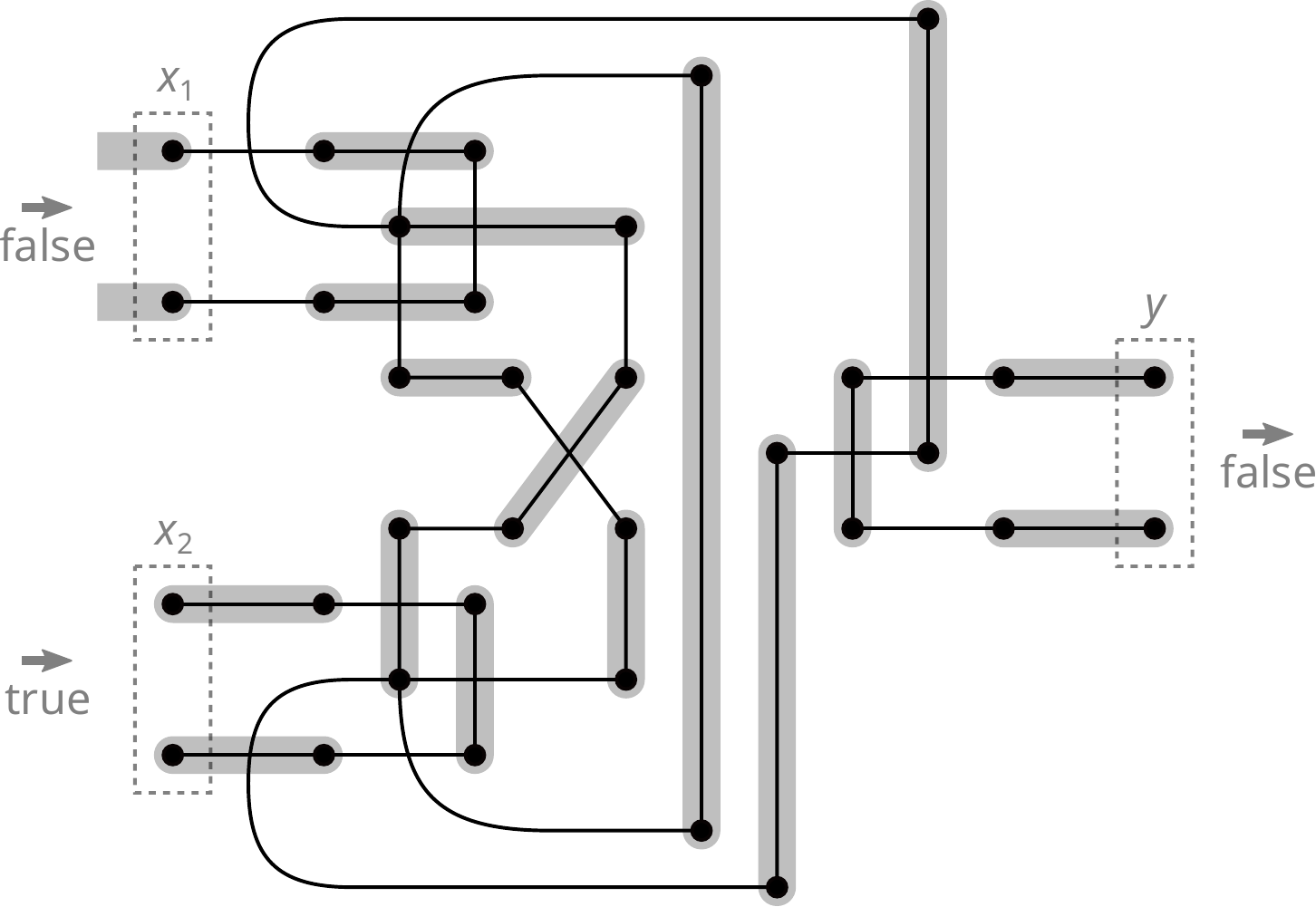}
    \subcaption{When input is (\textsf{false, true}).}
    \label{fig:nor_ft}
  \end{minipage}
  \begin{minipage}[b]{\MPSIZE\linewidth}
    \centering
    \includegraphics[scale=\SCALE]{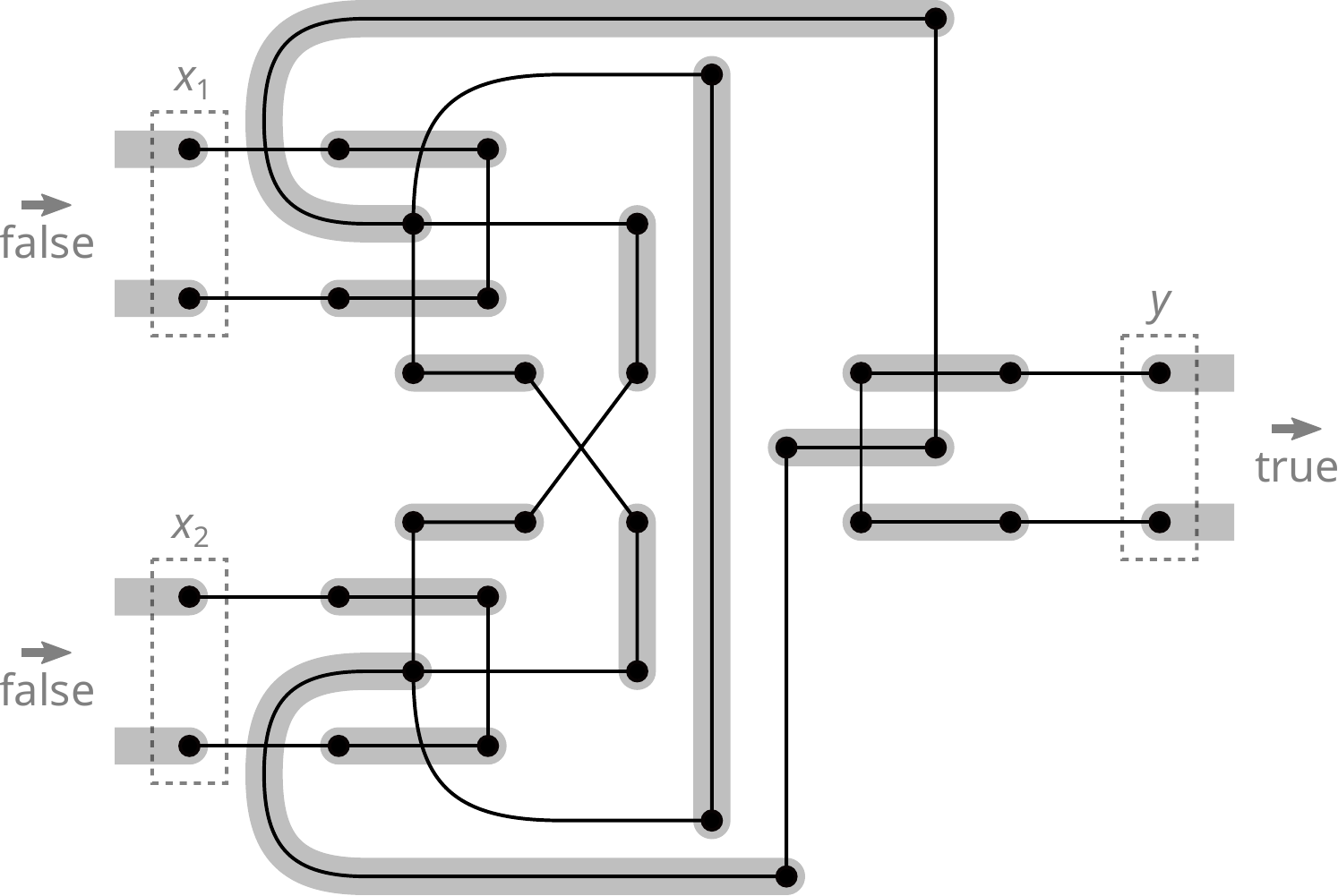}
    \subcaption{When input is (\textsf{false, false}).}
    \label{fig:nor_ff}
  \end{minipage}
	\caption{The possible $(A,\ell)$-path packings of the NOR gadget.}
  \label{fig:nor_gadget_packing}
\end{figure}

\paragraph{Guides.}
The guide $\psi(e)$ for each $e \in E$ can be easily set from 
Figures~\ref{fig:input_t}, \ref{fig:input_f}, \ref{fig:output_t},
\ref{fig:split_t}, \ref{fig:split_f},
\ref{fig:nor_tt}, \ref{fig:nor_tf}, \ref{fig:nor_ft}, and \ref{fig:nor_ff}.
For each edge $e$, the unique gray bar that includes the edge represents the $(A,\ell)$-path $\psi(e)$.

\paragraph{Correctness.}
The correctness of each gadget implies the correctness of the whole reduction.
Thus, it suffices to show that the output of the reduction is planar bipartite graph of maximum degree at most 4.
The resultant graph clearly has maximum degree 4 and is planar.
To see that the graph is bipartite, consider a 2-coloring of a gadget, which is not the output gadget.
If $\ell$ is even, then all vertices in the connection pairs have the same color.
If $\ell$ is odd, then each upper vertex of a connection pair in the figures has the same color,
and the other vertices in the connection pairs have the other color.
Therefore, the entire graph is 2-colorable.
\qed
\end{proof}

Let $(G,A,\ell,\psi)$ be an instance of \textsf{Guided} {\falpp} with $G = (V,E)$.
For $e = \{v,w\} \in E$,
we denote by $d_{e}$ the length of the subpath of $\psi(e)$ starting at $v$, passing $w$, and reaching an endpoint of $\psi(e)$.
Let $G_{e}$ be the graph obtained from $G$ by subdividing $e$, $2\ell$ times.
Let $P_{e}$ be the $v$-$w$ path of length $2\ell+1$ in $G_{e}$ corresponding to $e$.
We set $A_{e} = A \cup \{x_{0}, x_{1}\}$,
where $x_{0}$ and $x_{1}$ are the vertices that have distance $d_{e}$ and $d_{e} + \ell$ from $v$ in $P_{e}$, respectively.
(See \figref{fig:subdiv}.)
\begin{figure}[hbt]
  \centering
  \includegraphics[width=.9\linewidth]{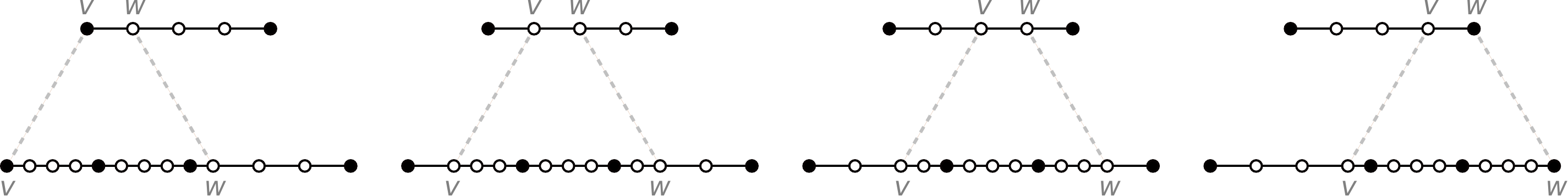}
  \caption{Subdividing $e$ and introducing two new terminals ($\ell = 4$).}
  \label{fig:subdiv}
\end{figure}

For each edge $h$ of $G_{e}$, we set $\psi_{e}(h) = \psi(h)$ if $h$ is not contained in the path $P_{\psi(e)}$ of length $3\ell$
that corresponds to $\psi(e)$. 
If $h$ is contained in $P_{\psi(e)}$, then we set $\psi_{e}(h)$ to the unique $(A,\ell)$-path in $P_{\psi(e)}$ that contains $h$.
Observe that $\psi_{e}$ is a guide to $(G_{e}, A_{e}, \ell)$.
Furthermore, $(G,A,\ell,\psi)$ and $(G_{e},A_{e},\ell, \psi_{e})$ are equivalent
(see \figref{fig:path_equiv}):
if $\psi(e)$ is used in a full $(A,\ell)$-path packing of $G$,
then we use two $(A,\ell)$-paths in $P_{\psi(e)}$;
otherwise we use the middle $(A,\ell)$-path in $P_{\psi(e)}$ connecting two new terminals.
\begin{observation}
\label{obs:subdiv}
$(G,A,\ell,\psi)$ and $(G_{e},A_{e},\ell, \psi_{e})$ are equivalent instances of \textsf{Guided} {\falpp}.
\end{observation}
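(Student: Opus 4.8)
The plan is to verify directly that an edge subdivision together with the insertion of two new terminals does not change the answer, using the guide $\psi_e$ to control which $(A,\ell)$-paths can appear in any full packing. First I would fix an edge $e=\{v,w\}$ and recall the setup: in $G_e$ the edge $e$ has been replaced by a path $P_e$ of length $2\ell+1$, and the new terminals $x_0,x_1$ lie on $P_e$ at distances $d_e$ and $d_e+\ell$ from $v$. The key structural fact is that the original guide path $\psi(e)$ (a path of length $3\ell$ through the edge $e$) is turned into a path $P_{\psi(e)}$ of length $3\ell + 2\ell = 5\ell$ in $G_e$, and the two new terminals split $P_{\psi(e)}$ into exactly three consecutive $(A,\ell)$-subpaths: the one from the $v$-side endpoint of $\psi(e)$ to $x_0$, the middle one from $x_0$ to $x_1$, and the one from $x_1$ to the $w$-side endpoint of $\psi(e)$. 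This is precisely the reason $d_e$ was chosen to be the distance from $v$ to the far endpoint of $\psi(e)$ passing through $w$; a short verification of lengths, which I would do by a direct count, shows the three pieces all have length exactly $\ell$.

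Next I would check that $\psi_e$ really is a guide to $(G_e,A_e,\ell)$. For edges $h$ of $G_e$ not on $P_{\psi(e)}$ the assigned path $\psi_e(h)=\psi(h)$ is unchanged and is still an $(A,\ell)$-path in $G_e$ (it uses no subdivided edge). For edges $h$ on $P_{\psi(e)}$, $h$ lies on exactly one of the three $\ell$-length subpaths described above, and that subpath is an $(A_e,\ell)$-path; we assign it as $\psi_e(h)$. Any full $(A_e,\ell)$-path packing of $G_e$ must in particular cover $x_0$ and $x_1$, and the only $(A_e,\ell)$-paths through a subdivision vertex of $e$ are the three subpaths of $P_{\psi(e)}$; more generally, since $\psi$ was a guide for $G$, every $(A,\ell)$-path of $G$ appearing in a full packing is one of the $\psi(h)$, and the corresponding statement for $G_e$ follows because subdividing only $e$ localizes all the changes to $P_{\psi(e)}$. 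Hence every full packing of $G_e$ is a subset of $\psi_e(E(G_e))$.

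Then I would prove the equivalence of the two instances by exhibiting the bijection between full packings. Given a full $(A,\ell)$-path packing $\mathcal P$ of $G$, it is a subset of $\psi(E)$; if $\psi(e)\in\mathcal P$, replace it by the two outer $\ell$-subpaths of $P_{\psi(e)}$ (covering the $v$- and $w$-side halves plus the stretch up to $x_0$ and from $x_1$); if $\psi(e)\notin\mathcal P$, add the middle subpath from $x_0$ to $x_1$; leave every other path of $\mathcal P$ unchanged. In the first case $|A_e|/2 = |A|/2 + 1$ and we have gained exactly one path, matching the two-for-one replacement; in the second case we have also gained exactly one path. Disjointness is preserved because $P_{\psi(e)}$ is vertex-disjoint from all other paths of $\mathcal P$ in the first case, and in the second case the middle subpath lies strictly inside the interior of (the subdivided) $e$, which no path of $\mathcal P$ could use. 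Conversely, starting from a full packing $\mathcal Q$ of $G_e$: since $\mathcal Q\subseteq\psi_e(E(G_e))$ and $\mathcal Q$ covers $x_0,x_1$, either all three subpaths of $P_{\psi(e)}$ are in $\mathcal Q$ — then drop them and add $\psi(e)$ — or only the middle subpath is in $\mathcal Q$ — then drop it — reversing the map; the cases where $\mathcal Q$ would contain, say, the middle subpath together with one outer subpath are excluded because they would leave a vertex of $P_{\psi(e)}$ uncovered, contradicting fullness.

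The main obstacle I anticipate is not any single hard step but the bookkeeping of lengths and the careful case analysis showing that the three subpaths of $P_{\psi(e)}$ behave like an all-or-nothing / exactly-the-middle dichotomy in any full packing — i.e.\ ruling out the ``mixed'' configurations on $P_{\psi(e)}$. This is exactly where the guide hypothesis does the work, so I would be careful to state precisely why $\psi$ being a guide for $(G,A,\ell)$ transfers to $\psi_e$ being a guide for $(G_e,A_e,\ell)$; everything else is a routine length count (best accompanied by Figures~\ref{fig:subdiv} and~\ref{fig:path_equiv}) that I would not grind through in full.
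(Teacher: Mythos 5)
Your overall strategy is the same as the paper's: the two new terminals cut the subdivided guide path $P_{\psi(e)}$ into three consecutive length-$\ell$ subpaths, and a full packing uses either the two outer pieces (when $\psi(e)$ is used in $G$) or the middle piece (when it is not). However, the execution contains concrete errors. First, the arithmetic: $\psi(e)$ is an $(A,\ell)$-path, so it has length $\ell$, and after subdividing $e$ exactly $2\ell$ times the path $P_{\psi(e)}$ has length $\ell+2\ell=3\ell$, not $3\ell+2\ell=5\ell$ as you write. Your own claim that $x_0$ and $x_1$ split $P_{\psi(e)}$ into three pieces of length exactly $\ell$ is arithmetically impossible for a path of length $5\ell$; it holds precisely because the length is $3\ell$ (writing $a,b$ for the endpoints of $\psi(e)$ on the $v$- and $w$-sides, the pieces $a$-$x_0$, $x_0$-$x_1$, $x_1$-$b$ have lengths $(\ell-d_e)+d_e$, $\ell$, and $(\ell+1-d_e)+(d_e-1)$).

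Second, and more seriously, your converse direction is broken. The three subpaths are pairwise \emph{not} vertex-disjoint: the middle piece shares $x_0$ with one outer piece and $x_1$ with the other, so the case ``all three subpaths of $P_{\psi(e)}$ are in $\mathcal{Q}$'' can never occur; and even if it could, dropping three paths and adding $\psi(e)$ would leave $|A_e|/2-2=|A|/2-1$ paths, one short of a full packing of $G$. The correct dichotomy, mirroring your own forward direction, is: either the two \emph{outer} subpaths lie in $\mathcal{Q}$ (replace them by one $(A,\ell)$-path of $G$ through $e$), or the middle one does (delete it). Your exclusion of mixed configurations also appeals to the wrong principle: a full packing need not cover every vertex of $P_{\psi(e)}$ --- fullness only forces every vertex of $A_e$ to be a path endpoint --- and the real reason is disjointness at $x_0,x_1$ combined with the fact that these degree-2 vertices force any solution path leaving $x_0$ towards $w$ to be exactly the middle subpath. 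Finally, your claim that the only $(A_e,\ell)$-paths through a subdivision vertex of $e$ are the three subpaths of $P_{\psi(e)}$ is false in general (a path may leave $x_0$ towards $v$ and then continue in $G$ off $\psi(e)$), and invoking $\mathcal{Q}\subseteq\psi_e(E(G_e))$ before the equivalence is established is circular: one must first convert $\mathcal{Q}$ into a full packing of $G$ and only then apply the guide property of $\psi$.
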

\begin{figure}[hbt]
  \centering
  \includegraphics[scale=.7]{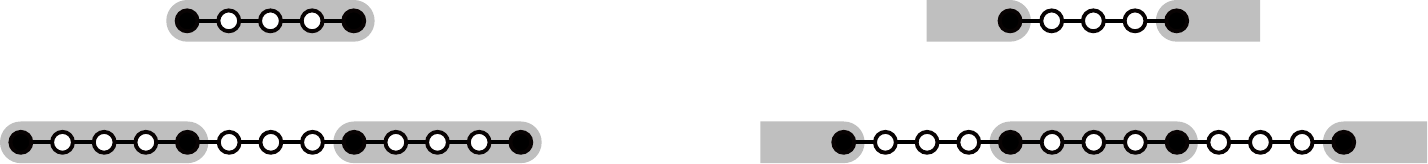}
  \caption{Equivalence of $(G,A,\ell,\psi)$ and $(G_{e},A_{e},\ell, \psi_{e})$.}
  \label{fig:path_equiv}
\end{figure}

Now we are ready to prove the main theorem of this section.
\begin{theorem}
\label{thm:grid}
For every constant $\ell \ge 4$, {\falpp} is NP-complete on grid graphs.
\end{theorem}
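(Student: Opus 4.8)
The plan is to start from the reduction established in Lemma~\ref{lem:planar-bigraph}, which already gives us \textsf{Guided} {\falpp} instances on planar bipartite graphs of maximum degree at most~$4$, and to transform such instances into equivalent instances on grid graphs while keeping $\ell$ fixed. The crucial tool is Observation~\ref{obs:subdiv}: subdividing any edge $e$ exactly $2\ell$ times (and inserting two new terminals $x_0,x_1$ at the appropriate distances $d_e$ and $d_e+\ell$ dictated by the guide $\psi(e)$) yields an equivalent \textsf{Guided} {\falpp} instance. So I would first apply Observation~\ref{obs:subdiv} to \emph{every} edge of the planar bipartite degree-$4$ graph $G$ produced by Lemma~\ref{lem:planar-bigraph}, obtaining an instance in which every original edge has been replaced by a path of length $2\ell+1$ carrying two fresh terminals. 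The point of doing this uniformly is that it gives us a large amount of ``slack'' along every edge, which is exactly what is needed for an orthogonal (grid) drawing: each edge can be bent and routed without colliding with others.

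Next I would invoke a planar orthogonal drawing / grid-embedding result for graphs of maximum degree at most~$4$: such a graph has a planar embedding on the integer grid in which vertices map to grid points and edges map to internally-disjoint rectilinear (axis-parallel) paths, using polynomial area. Each such rectilinear edge-path has some length, and I can make all these lengths congruent modulo $2\ell$ (indeed, I can pad every routed edge so that its grid-length is a fixed large multiple of $2\ell$ plus the original $2\ell+1$) by locally inserting small detours/``staircases'', which by Observation~\ref{obs:subdiv} applied repeatedly only changes the edge by further legal subdivisions. The terminals $x_0,x_1$ that Observation~\ref{obs:subdiv} placed on each subdivided edge simply become designated grid vertices along the corresponding rectilinear path. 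After this, the whole instance lives on a grid graph: the vertex set is a finite subset of $\mathbb{Z}^2$ and the edges are exactly the unit-distance pairs, as required by the definition of grid graph in Section~\ref{sec:pre}. Since {\falpp} is trivially in NP and the construction is polynomial-time, NP-completeness on grid graphs follows.

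The main obstacle is controlling the \emph{lengths} of the routed edges so that, after subdivision, each original edge still corresponds to a path whose length is $\equiv 2\ell+1$ (or more precisely, so that the guide $\psi_e$ and the placement of the new terminals remains consistent with length $\ell$). A rectilinear routing of an edge in a grid drawing has an essentially arbitrary length depending on how many bends it takes; I need every such length to be adjustable in steps of $2\ell$ (or in steps of~$1$ after enough initial subdivision, since subdividing $2\ell$ times and then subdividing each resulting unit edge again keeps everything legal by iterating Observation~\ref{obs:subdiv}). The clean way to handle this is: subdivide every edge of $G$ enough times up front that each edge becomes a long path, draw the resulting (still degree-$\le 4$, still planar) graph on the grid with unit-length grid edges, and then observe that wherever the grid routing made an edge-path longer than needed, those extra grid edges are themselves just further subdivisions of the original edge, which Observation~\ref{obs:subdiv} (applied with the guide tracking along) tells us is harmless. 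The terminals $x_0,x_1$ get reassigned along the lengthened path at distances $d_e$ and $d_e+\ell$ from one end; since we are free to subdivide asymmetrically (the statement of Observation~\ref{obs:subdiv} subdivides $e$ by $2\ell$ but iterating it lets us subdivide any sub-edge further), we can always slide the terminal positions to grid vertices. Parity is not an issue here because $G$ is bipartite and every subdivision by $2\ell$ (an even number, at least when $\ell$ is... actually for all $\ell$ we can arrange the routed lengths to match since we work modulo~$1$ after enough subdivisions). Verifying that all these local modifications genuinely preserve equivalence as \textsf{Guided} instances — i.e.\ that the guide still certifies that any full packing uses only guide-paths — is the part that requires care, but it reduces entirely to repeated application of Observation~\ref{obs:subdiv}, and I would present it as such rather than re-proving it from scratch.
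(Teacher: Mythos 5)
Your proposal follows essentially the same route as the paper: start from Lemma~\ref{lem:planar-bigraph}, draw the planar degree-$4$ bipartite graph rectilinearly on the integer grid, and absorb the extra length of each routed edge as repeated applications of Observation~\ref{obs:subdiv}, so that the final grid graph is an equivalent \textsf{Guided} instance and hence an equivalent plain {\falpp} instance.

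The one place where your write-up would fail as stated is the length/parity control, which you correctly identify as the main obstacle but then dismiss too quickly. Iterating Observation~\ref{obs:subdiv} on an edge $e$ only realizes path lengths congruent to $1$ modulo $2\ell$ (each application adds exactly $2\ell$ unit edges), so you do \emph{not} get to ``work modulo $1$ after enough subdivisions.'' Moreover, on the grid a rectilinear detour changes the length of a routed edge only by an even amount, so the parity of each edge's grid length is fixed by the parities of its endpoints' coordinates and cannot be repaired by staircases afterwards; it must be engineered up front. The paper's fix is concrete: scale the rectilinear embedding by a factor of $2\ell$, so every routed edge has length $2\ell\lambda_e$ (a multiple of $2\ell$, in particular even), and then shift every vertex of one color class of the bipartition by one grid unit, which — since every edge has exactly one endpoint in that class — turns every edge length into exactly $2\ell\lambda_e+1$. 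With that, each edge is subdivided exactly $\lambda_e$ multiples of $2\ell$ times and Observation~\ref{obs:subdiv} applies verbatim, with the guide dictating where the $2\lambda_e$ new terminals go. Your bipartiteness intuition is the right one, but you need this explicit scaling-plus-shift (or an equivalent device) rather than an appeal to adjustability in steps of~$1$.
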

\begin{proof}
We reduce \textsf{Guided} {\falpp} on planar bipartite graphs of maximum degree at most 4
for fixed $\ell \ge 4$ (which is NP-complete by Lemma~\ref{lem:planar-bigraph})
to {\falpp} on grid graphs for the same $\ell$.
Let $(G,A,\ell,\psi)$ be an instance of \textsf{Guided} {\falpp},
where $G = (V,E)$ is a planar bipartite graph of maximum degree at most 4.

A \emph{rectilinear embedding} of a graph is a planar embedding into the $\mathbb{Z}^{2}$ grid such that
\begin{itemize}
  \item each vertex is mapped to a grid point;
  \item each edge $\{u,v\}$ is mapped to a rectilinear path between $u$ and $v$ consisting of vertical and horizontal segments connecting grid points;
  \item the rectilinear paths corresponding to two different edges may intersect only at their endpoints.
\end{itemize}
Every planar graph of maximum degree at most 4 has
a rectilinear embedding,
and a rectilinear embedding of area at most $(n+1)^{2}$ can be computed in linear time~\cite{Liu1998}, where $n$ is the number of vertices.

Let $R_{1}$ be a rectilinear embedding of $G$ with area at most $(n+1)^{2}$.
By multiplying each coordinate in the embedding by $2\ell$,
we obtain an enlarged rectilinear embedding $R_{2}$ of $G$.
Let $U$ be one color class of a 2-coloring of $G$.
Now, for each $v \in U$, we locally modify $R_{2}$ around the grid point $(x_{v}, y_{v})$ corresponding to $v$ as illustrated in \figref{fig:deform}.
We denote by  $R_{3}$ the locally modified embedding.

\begin{figure}[hbt]
  \centering
  \includegraphics[scale=1]{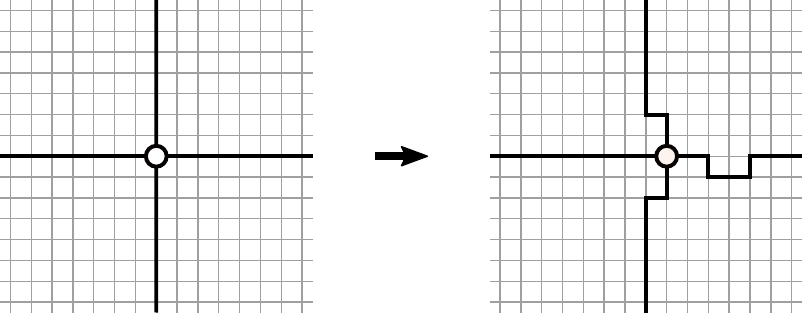}
  \caption{Local modification around $v$. The grid point of $v$ is moved to $(x_{v} + 1, y_{v})$.}
  \label{fig:deform}
\end{figure}

From $R_{3}$, we construct a new graph $G'$ and its rectilinear embedding $R'$ by inserting degree-2 vertices
at each intersection point of a grid point and the inner part of a rectilinear path corresponding to an edge.
Clearly, $G'$ is a grid graph.
Let $e \in E$ and $\lambda_{e}$ be the (geometric) length of the rectilinear path in $R_{1}$ corresponding to $e$.
Then the rectilinear path in $R_{3}$ corresponding to $e$ has length $2 \ell \cdot \lambda_{e} + 1$.
Therefore, $G'$ is the graph obtained from $G$ by subdividing each edge $e$, $2 \ell \cdot \lambda_{e}$ times.
By Observation~\ref{obs:subdiv}, we can easily compute $A'$ and $\psi'$
such that $(G,A,\ell,\psi)$ is equivalent to $(G',A',\ell,\psi')$.
Finally, from the definition of a guide to an instance of {\falpp},
$(G',A',\ell,\psi')$ is equivalent to $(G',A',\ell)$.
As everything in this reduction can be done in time polynomial, the theorem holds.
\qed
\end{proof}


\section{Short $A$-paths}
\label{sec:short}

In this additional section, we consider another variant of \textsc{$A$-Path Packing},
which we call \textsc{Short $A$-Path Packing}.
We call a nontrivial $A$-path of length at most $\ell$ an \emph{$A_{\le \ell}$-path}.
Now the problem considered here is defined as follows:
\begin{myproblem}
  \problemtitle{\textsc{Short $A$-Path Packing} \ (\sapp)}
  \probleminput{A tuple $(G,A,k,\ell)$, where $G = (V,E)$ is a graph, $A \subseteq V$, 
    and $k$ and $\ell$ are positive integers.}
  \problemquestiontitle{Question}
  \problemquestion{Does $G$ contain $k$ vertex-disjoint $A_{\le \ell}$-paths?}
\end{myproblem}

Similarly to {\falpp},
we define {\fsapp} as the restricted version of {\sapp} with $k = |A|/2$.

The positive results on {\alpp} presented so far
can be translated to the ones on {\sapp} by the following lemma.
\begin{lemma}
\label{lem:sapp-to-alpp}
Given an instance $(G,A,k,\ell)$ of {\sapp} where $G$ has $n$ vertices and $m$ edges,
one can compute an equivalent instance $(G',A,k,\ell)$ of {\alpp} in $O(m n^{2})$ time, 
where $G'$ has $O(m n^{2})$ vertices and edges,
and $\tw(G') \le \tw(G) + 1$.
\end{lemma}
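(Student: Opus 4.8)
The plan is to replace each edge of $G$ by a small ``length‑selector'' gadget that offers an $x$–$y$ connection of every length between $1$ and $\ell$, so that any $A_{\le\ell}$‑path of length $t$ can be padded to length exactly $\ell$ by rerouting one single edge of it through the length‑$(1+\ell-t)$ connection and leaving the remaining edges untouched.

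First I would note that we may assume $\ell<n$: no path in an $n$‑vertex graph has more than $n-1$ edges, so for $\ell\ge n-1$ the $A_{\le\ell}$‑paths are exactly the nontrivial $A$‑paths, and we may cap $\ell$ at $n-1$ without changing the instance. Then, for every edge $e=\{x,y\}\in E(G)$, I keep $e$ itself (the length‑$1$ route) and, for each $d\in\{2,\dots,\ell\}$, add a fresh internally‑disjoint $x$–$y$ path of length $d$ whose $d-1$ internal vertices are new vertices placed outside $A$; call the resulting graph $G'$, and note $A\subseteq V(G)\subseteq V(G')$ with $A$ unchanged. Each edge contributes $\sum_{d=2}^{\ell}(d-1)=\binom{\ell}{2}=O(n^2)$ new vertices and $O(n^2)$ new edges, so $G'$ has $O(mn^2)$ vertices and edges and is constructed in $O(mn^2)$ time.

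For the treewidth bound I would start from an optimal tree decomposition $(\{X_i\}_{i\in I},T)$ of $G$ of width $\tw(G)$ and, for each edge $e=\{x,y\}$, fix a bag $X_{i_e}$ containing both $x$ and $y$. For each added route of length $d$ with internal vertices $z_1,\dots,z_{d-1}$ (listed along the route from $x$ to $y$), I attach to $X_{i_e}$ a path of new bags $\{x,y,z_1\},\{y,z_1,z_2\},\{y,z_2,z_3\},\dots,\{y,z_{d-2},z_{d-1}\}$; a routine check (connectivity of $\{X_i\ni v\}$ for each $v$, coverage of every new edge) shows this is a valid tree decomposition of $G'$, and since every new bag has size $3$ we get $\tw(G')\le\max(\tw(G),2)$. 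When $G$ has an edge this is at most $\tw(G)+1$, and when $G$ is edgeless $G'=G$; either way $\tw(G')\le\tw(G)+1$.

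It remains to prove equivalence. In the forward direction, given $k$ vertex‑disjoint $A_{\le\ell}$‑paths $P_1,\dots,P_k$ in $G$ with $P_j$ of length $t_j$ (so $1\le t_j\le\ell$, since each $P_j$ is nontrivial), I choose one edge of each $P_j$ and reroute it through the length‑$(1+\ell-t_j)$ route of its gadget, keeping the other $t_j-1$ edges as they are; this turns $P_j$ into an $(A,\ell)$‑path $P_j'$ in $G'$ with the same endpoints, and since the $P_j$ are vertex‑disjoint (hence edge‑disjoint) the $P_j'$ remain vertex‑disjoint. For the converse, given $k$ vertex‑disjoint $(A,\ell)$‑paths $P_1',\dots,P_k'$ in $G'$, I record for each $P_j'$ the sequence $w_0,w_1,\dots,w_s$ of vertices of $G$ that it visits in order; here $w_0,w_s\in A$ are its endpoints and $w_1,\dots,w_{s-1}\notin A$, since gadget interiors avoid $A$. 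As each gadget's interior is attached to the rest of $G'$ only at the two endpoints of the corresponding edge, the subpath of $P_j'$ strictly between $w_{i-1}$ and $w_i$ lies in a single gadget interior and is a complete route of it, forcing $\{w_{i-1},w_i\}\in E(G)$; moreover every route has length at least $1$, so $s\le\ell$ because $P_j'$ has length $\ell$. Hence $Q_j:=(w_0,\dots,w_s)$ is a (nontrivial) $A_{\le\ell}$‑path in $G$, and since the $P_j'$ are vertex‑disjoint on $V(G)$ the $Q_j$ are vertex‑disjoint. The main obstacle I expect is pinning the treewidth increase down to exactly $+1$ rather than $+2$ — this needs the specific size‑$3$ bag attachment above together with the observation that $\max(\tw(G),2)$ is absorbed by $\tw(G)+1$ once $G$ has an edge — and, secondarily, being careful in the converse that the recorded vertex sequence is a genuine simple path of length at most $\ell$.
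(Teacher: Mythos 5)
Your proof is correct and follows essentially the same route as the paper: the identical length-$1$-through-$\ell$ detour gadget per edge, the same one-edge rerouting in the forward direction and route-contraction in the converse, and a tree decomposition built from size-$3$ bags attached at a bag containing both endpoints (the paper phrases this as attaching triangles and subdividing, but the bags are the same). Your explicit capping of $\ell$ at $n-1$ and the justification that an $(A,\ell)$-path in $G'$ must traverse each gadget interior completely are small points the paper leaves implicit, but they do not change the argument.
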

\begin{proof}
We construct $G' = (V',E')$ from $G = (V,E)$ by replacing each edge $\{u,v\} \in E$ 
with $\ell$ new $u$--$v$ paths of lengths $1, 2, \dots, \ell$. (See \figref{fig:sapp}.)
We call these paths (including the one of length $1$) the \emph{detours} of $\{u,v\}$.
Clearly, $|V'|, |E'| \in O(m n^{2})$,
and $G'$ can be constructed in time linear in $|V'| + |E'|$.

\begin{figure}[bth]
  \centering
  \includegraphics[scale=1]{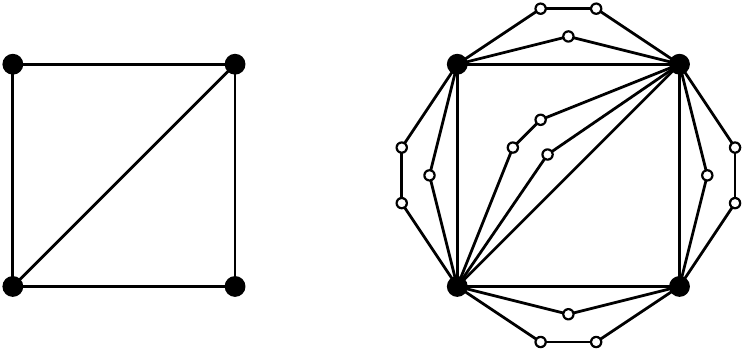}
  \caption{The construction of $G'$ (right) from $G$ (left) when $\ell = 3$.}
  \label{fig:sapp}
\end{figure}

To bound the treewidth of $G'$, 
observe that $G'$ can be seen as a graph obtained from $G$ by attaching triangles to edges,
and then by subdividing some edges.
Such operations preserve the treewidth unless $G$ is a forest.
(If $G$ is a forest, then the treewidth increases by $1$.)
To see this, let $\{u,v\}$ be the target edge of one of such operations.
Let $w$ be the new vertex introduced by the operation.
Every tree decomposition $\mathcal{T}$ of the original graph has a bag $B$ including both $u$ and $v$ 
since $\{u,v\}$ is an edge of the original graph.
We add to $\mathcal{T}$ a new bag $B' = \{u,v,w\}$ adjacent only to $B$.
Clearly, the obtained decomposition is a tree decomposition of the graph obtained by the operation,
and its width is the maximum of $|B'|-1$ and the width of $\mathcal{T}$.

\smallskip

Now assume that $(G,A,k,\ell)$ is a yes instance of {\sapp},
and let $P_{1}, \dots, P_{k}$ be vertex-disjoint $A_{\le \ell}$-paths in $G$.
For each $i \in [k]$, let $\ell_{i}$ be the length of $P_{i}$.
We construct an $(A,\ell)$-path $P_{i}'$ in $G$ from $P_{i}$
by replacing an arbitrary edge $\{u,v\}$ in $P_{i}$ with its detour of length $\ell - \ell_{i} + 1$.
Since $P_{1}, \dots, P_{k}$ are vertex-disjoint and the detours are internally vertex-disjoint,
the $(A,\ell)$-paths $P_{1}', \dots, P_{k}'$ are vertex-disjoint.

Conversely, assume that $(G',A,k,\ell)$ is a yes instance of {\alpp},
and let $P_{1}', \dots, P_{k}'$ be vertex-disjoint $(A,\ell)$-paths in $G'$.
Note that each $P_{i}'$ is a concatenation of some detours.
We obtain an $A_{\le \ell}$-path $P_{i}$ in $G$ by replacing all detours in $P_{i}'$ with the original edges in $G$.
Since $P_{1}', \dots, P_{k}'$ are vertex-disjoint and $V(P_{i}) \subseteq V(P_{i}')$ for each $i \in [k]$,
the $A_{\le \ell}$-paths $P_{1}, \dots, P_{k}$ are vertex-disjoint.
\qed
\end{proof}

By Lemma~\ref{lem:sapp-to-alpp},
Theorems~\ref{thm:ell<=3}, \ref{thm:k+ell}, \ref{thm:tw-XP}, \ref{thm:tw+ell} imply
the following positive results on {\sapp}.

\begin{corollary}
If $\ell \le 3$, then {\sapp} can be solved in polynomial time. 
\end{corollary}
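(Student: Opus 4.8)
The plan is to compose two results already in hand: the parameter-preserving reduction of Lemma~\ref{lem:sapp-to-alpp} and the polynomial-time algorithm of Theorem~\ref{thm:ell<=3}. Given an instance $(G,A,k,\ell)$ of {\sapp} with $\ell \le 3$, where $G$ has $n$ vertices and $m$ edges, I would first invoke Lemma~\ref{lem:sapp-to-alpp} to build, in $O(mn^{2})$ time, an equivalent instance $(G',A,k,\ell)$ of {\alpp}. The crucial observation is that this reduction leaves $\ell$, $k$, and $A$ untouched and only enlarges the graph to size $O(mn^{2})$; in particular the new instance still satisfies $\ell \le 3$.

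Next I would run the algorithm of Theorem~\ref{thm:ell<=3} on $(G',A,k,\ell)$. Since $\ell \le 3$, it decides this instance in time polynomial in $|V(G')| + |E(G')| = O(mn^{2})$, hence in time polynomial in $n + m$. By the equivalence guaranteed by Lemma~\ref{lem:sapp-to-alpp}, the tuple $(G,A,k,\ell)$ is a yes-instance of {\sapp} if and only if $(G',A,k,\ell)$ is a yes-instance of {\alpp}, so returning the answer of the {\alpp} algorithm decides the original {\sapp} instance correctly.

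There is essentially no obstacle: the whole argument is the composition of a polynomial-time many-one reduction with a polynomial-time decision procedure, which is again polynomial-time. The only point worth spelling out is that Lemma~\ref{lem:sapp-to-alpp} is $\ell$-preserving, so the hypothesis $\ell \le 3$ transfers verbatim to the {\alpp} instance; everything else is routine bookkeeping, and in fact this corollary is explicitly anticipated by the remark preceding it that ``Theorems~\ref{thm:ell<=3}, \ref{thm:k+ell}, \ref{thm:tw-XP}, \ref{thm:tw+ell} imply the following positive results on {\sapp}.''
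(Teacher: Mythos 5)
Your proposal is correct and follows exactly the paper's route: the paper obtains this corollary by composing the $\ell$-preserving reduction of Lemma~\ref{lem:sapp-to-alpp} with the polynomial-time algorithm of Theorem~\ref{thm:ell<=3}, precisely as you describe. No gaps.
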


\begin{corollary}
{\sapp} parameterized by $k + \ell$ is fixed-parameter tractable. 
\end{corollary}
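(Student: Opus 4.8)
The plan is to obtain the algorithm by composing the reduction of Lemma~\ref{lem:sapp-to-alpp} with the fixed-parameter algorithm of Theorem~\ref{thm:k+ell}. The feature of Lemma~\ref{lem:sapp-to-alpp} that makes this work is that it produces an equivalent instance of {\alpp} with \emph{exactly the same} values of $k$ and $\ell$; only the host graph grows, and it grows by a factor polynomial in $n$ alone. Hence any algorithm for {\alpp} that is FPT in the parameter $k+\ell$ transfers verbatim to {\sapp} with the same parameterization.

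Concretely, given an instance $(G,A,k,\ell)$ of {\sapp} with $|V(G)| = n$ and $|E(G)| = m$, I would first note that we may assume $\ell \le n-1$, since no nontrivial $A$-path has length exceeding $n-1$ and so capping $\ell$ at $n-1$ does not change the set of $A_{\le\ell}$-paths. Then I would invoke Lemma~\ref{lem:sapp-to-alpp} to compute, in $O(m n^{2})$ time, an equivalent instance $(G',A,k,\ell)$ of {\alpp} where $G'$ has $N \in O(m n^{2}) \subseteq O(n^{4})$ vertices and edges; by the correctness statement of that lemma, the answers to the two instances coincide.

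Finally I would run the algorithm of Theorem~\ref{thm:k+ell} on $(G',A,k,\ell)$, which decides {\alpp} on an $N$-vertex graph in time $O(2^{O(k\ell)} N^{6} \log N)$. Substituting $N \in O(n^{4})$, the overall running time is $O(2^{O(k\ell)} n^{24} \log n)$, which is of the form $f(k+\ell)\cdot n^{O(1)}$, proving fixed-parameter tractability of {\sapp} parameterized by $k+\ell$.

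There is essentially no hard step: the only point worth double-checking is that the blow-up of $G'$ in Lemma~\ref{lem:sapp-to-alpp} is bounded by a polynomial in $n$ that is independent of the parameter — this holds because each edge of $G$ is replaced by $\ell \le n$ detour paths, each of length at most $\ell \le n$, contributing only $O(n^{2})$ new vertices per edge. Consequently the parameter $k+\ell$ is untouched by the reduction, and the composition is immediately FPT.
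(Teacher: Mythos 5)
Your proposal is correct and matches the paper's own (implicit) proof exactly: the paper derives this corollary by composing Lemma~\ref{lem:sapp-to-alpp} with Theorem~\ref{thm:k+ell}, precisely as you do. Your additional remarks --- that the reduction preserves $k$ and $\ell$, that one may cap $\ell$ at $n-1$ so the blow-up is polynomial in $n$ alone, and the explicit $O(2^{O(k\ell)} n^{24}\log n)$ bound --- are sound elaborations of details the paper leaves unstated.
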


\begin{corollary}
{\sapp} can be solved in time $n^{O(\tw)}$. 
\end{corollary}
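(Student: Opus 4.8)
The plan is to obtain this corollary as an immediate consequence of the reduction in Lemma~\ref{lem:sapp-to-alpp} combined with the XP algorithm of Theorem~\ref{thm:tw-XP}. Concretely, given an instance $(G,A,k,\ell)$ of {\sapp} with $n$ vertices and $m$ edges, I would first invoke Lemma~\ref{lem:sapp-to-alpp} to build, in time $O(m n^{2})$, an equivalent instance $(G',A,k,\ell)$ of {\alpp} whose host graph $G'$ has $N \in O(m n^{2})$ vertices and edges and satisfies $\tw(G') \le \tw(G) + 1$. Since the two instances are equivalent (the lemma proves both directions), it suffices to decide the {\alpp} instance.

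Next I would feed $(G',A,k,\ell)$ to the algorithm of Theorem~\ref{thm:tw-XP}, which runs in time $N^{O(\tw(G'))}$. Using $N \in O(m n^{2}) \subseteq n^{O(1)}$ and $\tw(G') \le \tw(G)+1$, this running time is $\bigl(n^{O(1)}\bigr)^{O(\tw(G)+1)} = n^{O(\tw(G))}$, which dominates the $O(m n^{2})$ construction time. Hence {\sapp} is solvable in time $n^{O(\tw)}$, as claimed.

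There is essentially no hard step here: the only point that needs a moment's care is that the polynomial blow-up in the instance size incurred by Lemma~\ref{lem:sapp-to-alpp} merely multiplies the exponent of the XP bound by a constant factor, so it is harmless once $\tw$ already appears in the exponent. I would state the argument in two or three sentences, citing Lemma~\ref{lem:sapp-to-alpp} for correctness and the treewidth bound and Theorem~\ref{thm:tw-XP} for the running time, and close with \qed.
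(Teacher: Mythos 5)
Your proposal is correct and matches the paper's intended argument exactly: the paper derives this corollary (together with the other three) directly from Lemma~\ref{lem:sapp-to-alpp} combined with Theorem~\ref{thm:tw-XP}, relying on the same two facts you highlight, namely that the constructed instance has $n^{O(1)}$ vertices and treewidth at most $\tw(G)+1$, so the exponent only changes by a constant factor. Nothing is missing.
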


\begin{corollary}
{\sapp} parameterized by $\tw + \ell$ is fixed-parameter tractable. 
\end{corollary}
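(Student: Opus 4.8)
The plan is to obtain the corollary as a direct composition of the reduction in Lemma~\ref{lem:sapp-to-alpp} with the fixed-parameter algorithm of Theorem~\ref{thm:tw+ell}. Given an instance $(G,A,k,\ell)$ of {\sapp} with $G$ having $n$ vertices and $m$ edges, I would first invoke Lemma~\ref{lem:sapp-to-alpp} to construct, in time $O(mn^{2})$, an equivalent instance $(G',A,k,\ell)$ of {\alpp} in which $\ell$ is unchanged, $G'$ has $O(mn^{2})$ vertices and edges, and $\tw(G')\le \tw(G)+1$.

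Next I would simply run the MSO$_{2}$-based algorithm underlying Theorem~\ref{thm:tw+ell} on $(G',A,k,\ell)$. Its running time is $g(\tw(G')+\ell)\cdot |V(G')|^{O(1)}$ for some computable function $g$; since $\tw(G')+\ell \le \tw(G)+\ell+1$ and $|V(G')|\in O(mn^{2})$ is polynomial in the size of the original instance, the total running time is $g(\tw(G)+\ell+1)\cdot (mn^{2})^{O(1)}$, which is fixed-parameter tractable in $\tw(G)+\ell$. Correctness is immediate since the two instances are equivalent.

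The only thing that needs to be verified — and it is essentially routine — is that the composition is genuinely an FPT reduction: the transformation of Lemma~\ref{lem:sapp-to-alpp} runs in polynomial time, blows up the instance only polynomially, and increases the combined parameter $\tw+\ell$ by at most an additive constant. Thus I do not expect any real obstacle; the statement follows formally from Lemma~\ref{lem:sapp-to-alpp} and Theorem~\ref{thm:tw+ell}, in exactly the same way that the three preceding corollaries follow from the same lemma together with Theorems~\ref{thm:ell<=3}, \ref{thm:k+ell}, and \ref{thm:tw-XP}.
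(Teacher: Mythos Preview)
Your proposal is correct and matches the paper's approach exactly: the paper derives this corollary (together with the three preceding ones) directly from Lemma~\ref{lem:sapp-to-alpp} combined with Theorem~\ref{thm:tw+ell}, without even spelling out the details you supply. There is nothing to add.
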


On the other hand, the negative results on {\alpp} cannot be directly translated to the one on {\sapp}.
We here prove the hardness of the cases with constant $|A| \ge 4$ or with constant $\ell \ge 4$.
Note that {\sapp} with $k=1$ (or $|A| = 2$) is polynomial-time solvable because
it reduces to the all-pairs shortest path problem.
We leave the complexity of {\sapp} parameterized by $\tw$ unsettled.
\begin{theorem}
For every even constant $\alpha \ge 4$,
{\fsapp} with $|A| = \alpha$ is NP-complete.
\end{theorem}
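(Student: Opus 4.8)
The plan is to reduce from the \emph{bounded-length two disjoint paths} problem: given a graph $G$, four distinct vertices $s_1,t_1,s_2,t_2$, and an integer $L$, decide whether $G$ contains two vertex-disjoint paths, one joining $s_1$ to $t_1$ and one joining $s_2$ to $t_2$, each of length at most $L$. This is the natural source here: the contrast with the $|A|=2$ case (where {\sapp} is just a shortest-path query) shows that the hardness of {\fsapp} must come entirely from requiring the packed short $A$-paths to be \emph{vertex-disjoint}, and the ``at most $L$'' semantics of the source problem matches {\sapp} verbatim, so---unlike in the {\falpp} hardness proofs---no mechanism for forcing a path to have a \emph{prescribed} length is needed. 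The bounded-length two disjoint paths problem is NP-hard (it is the decision version of the classical min--max two-disjoint-paths problem of Li, McCormick, and Simchi-Levi). First I would produce an instance of {\fsapp} with $|A|=4$; the general even $\alpha\ge 4$ then follows exactly as in the proof of Observation~\ref{obs:fixed-|A|}, by adding $(\alpha-4)/2$ pairwise disjoint new edges and putting their $\alpha-4$ endpoints into $A$, so that, since $k=|A|/2$, these dummy pairs are forced into any solution as length-$1$ $A_{\le \ell}$-paths and interfere with nothing else.

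For $|A|=4$ I would take $G'=G$, set $A=\{s_1,t_1,s_2,t_2\}$, $k=2$, and $\ell=L$. One direction is immediate: two vertex-disjoint $s_i$-$t_i$ paths of length at most $L$ have their four endpoints exactly at $A$, hence are $A_{\le \ell}$-paths, as their interiors avoid $A$ automatically. Conversely, a {\fsapp} solution is a pair of vertex-disjoint $A_{\le \ell}$-paths whose endpoints split $A$ into two pairs; if the split is $\{\{s_1,t_1\},\{s_2,t_2\}\}$ we recover a solution of the source instance, so what remains is to exclude the two ``swapped'' pairings $\{\{s_1,s_2\},\{t_1,t_2\}\}$ and $\{\{s_1,t_2\},\{t_1,s_2\}\}$.

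Excluding the swapped pairings is the step I expect to be the main obstacle. It cannot be achieved by enlarging $L$ or by appending pendant paths to the terminals, since any additive modification leaves some ``crossed'' pair no longer than some ``correct'' pair. Instead of using the source hardness result as a black box, I would re-engineer its construction so that the swapped pairings become structurally impossible. The cleanest route I see is to build the instance so that some vertex $b$ lies on \emph{every} $s_1$-$s_2$ path and on \emph{every} $t_1$-$t_2$ path, and some vertex $b'$ lies on every $s_1$-$t_2$ path and every $t_1$-$s_2$ path, while $s_1$-$t_1$ and $s_2$-$t_2$ paths can avoid both $b$ and $b'$; then each swapped pairing is infeasible, since its two paths would be forced to share $b$ (respectively $b'$), whereas the intended pairing is unaffected. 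Realizing such bottlenecks while still encoding an NP-hard instance---equivalently, designing from scratch a reduction from \textsc{3-Sat} or {\pcsat} in which one path encodes a truth assignment, a second path verifies the clauses through vertices it must share with the first, and the two endpoint pairs are ``locked'' by such bottleneck vertices---is where I expect most of the work to go. Once the reduction is in place, membership in NP is clear (a candidate family of $k$ vertex-disjoint $A_{\le \ell}$-paths is checked in polynomial time), completing the NP-completeness proof.
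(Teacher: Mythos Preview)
Your proposal has a real gap: the hard part---forcing the pairing $\{\{s_1,t_1\},\{s_2,t_2\}\}$---is deferred to a vague plan to ``re-engineer'' an NP-hardness reduction so that two bottleneck vertices $b,b'$ kill the crossed pairings. You give no construction, and this is precisely where all the difficulty lies; as written it is not a proof but a programme.

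More importantly, your dismissal of pendant paths is too hasty, and it is exactly where the paper's argument diverges from yours. Your no-go argument (``any additive modification leaves some crossed pair no longer than some correct pair'') is valid for the \emph{symmetric} source problem you chose, where both paths must have length $\le L$: summing the four crossed constraints $a+c,a+d,b+c,b+d>\ell$ and the two correct constraints $a+b,c+d\le\ell$ gives a contradiction. But the paper sidesteps this by reducing from the \emph{asymmetric} problem 2D1SP of Eilam-Tzoreff: find vertex-disjoint $P_1,P_2$ where $P_1$ is a \emph{shortest} $s_1$--$t_1$ path and $P_2$ is any $s_2$--$t_2$ path. Knowing the exact target length $\ell_1$ of $P_1$ in advance lets one attach pendants of lengths $3n$, $2n-\ell_1-1$, $2n$, $2n$ to $s_1,t_1,s_2,t_2$ and set $\ell=5n-1$. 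Then $s_1'$ is at distance $>5n-1$ from both $s_2'$ and $t_2'$ (each such path costs at least $3n+1+2n$), so $s_1'$ is forced to pair with $t_1'$, and the remaining budget on that side forces $P_1$ to be shortest; the $s_2'$--$t_2'$ side has slack $n-1$, enough for any $P_2$. The extension to general even $\alpha\ge4$ by dummy components is as you describe.

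So the missing idea is not an elaborate bottleneck gadget but the choice of source problem: pick one whose asymmetry lets asymmetric pendant lengths defeat your averaging obstruction.
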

\begin{proof}
Since the problem is clearly in NP, we present a reduction from the following NP-complete problem
2D1SP~\cite{Eilam-Tzoreff98}.
Given a graph $G = (V,E)$ and two terminal pairs $(s_{1},t_{1})$ and $(s_{2},t_{2})$,
the problem 2D1SP asks whether there exist two vertex-disjoint paths 
$P_{1}$ from $s_{1}$ to $t_{1}$ and $P_{2}$ from $s_{2}$ to $t_{2}$,
where $P_{1}$ is asked to be a shortest $s_{1}$--$t_{1}$ path in $G$.
We reduce 2D1SP to {\fsapp} with $|A| = 4$.
(We can extend this result to any even $\alpha = |A|$
by adding dummy components as we did in the proof of Observation~\ref{obs:fixed-|A|}.)

Let $n = |V|$ and $\ell = 5n-1$.
Let $\ell_{1}$ be the shortest path distance between $s_{1}$ and $t_{1}$ in $G$.
We add four vertices $s_{1}'$, $t_{1}'$, $s_{2}'$, and $t_{2}'$ to $G$.
We add 
a path of length $3n$ between $s_{1}$ and $s_{1}'$,
a path of length $2n-\ell_{1}-1$ between $t_{1}$ and $t_{1}'$,
a path of length $2n$ between $s_{2}$ and $s_{2}'$, and
a path of length $2n$ between $t_{2}$ and $t_{2}'$.
We call the obtained graph $G'$
and set $A = \{s_{1}', t_{1}', s_{2}', t_{2}'\}$.

Assume that $G$ has a shortest $s_{1}$--$t_{1}$ path $P_{1}$ 
and a (not necessarily shortest) $s_{2}$--$t_{2}$ path $P_{2}$ vertex-disjoint from $P_{1}$.
For each $i \in \{1,2\}$, we extend $P_{i}$ to a path $P_{i}'$ between $s_{i}'$ and $t_{i}'$
by adding the unique paths between $s_{i}'$ to $s_{i}$ an $t_{i}$ to $t_{i}'$.
The length of $P_{1}'$ is $3n + \ell_{1} + (2n-\ell_{1}-1) = \ell$ and 
the length of $P_{2}'$ is $2n + \|P_{2}\| + 2n \le \ell$,
where $\|P_{2}\|$ is the length of $P_{2}$.

Conversely, assume that $G'$ has two vertex-disjoint $A_{\le \ell}$-paths $P_{1}'$ and $P_{2}'$.
Without loss of generality, we can assume that one of the endpoints of $P_{1}'$ is $s_{1}'$.
Then we can see that the other endpoint of $P_{1}'$ is $t_{1}'$
since the distance between $s_{1}'$ and the vertices $s_{2}'$ and $t_{2}'$
is at least $3n + 2n > \ell$.
Let $P_{1}$ be the subpath of $P_{1}'$ that connects $s_{1}$ and $t_{1}$.
Now the length of the $A_{\le \ell}$-path $P_{1}'$ is $3n + \|P_{1}\| + (2n-\ell_{1}-1) \le \ell$,
and thus $\|P_{1}\| \le \ell_{1}$. Hence $P_{1}$ is a shortest path between $s_{1}$ and $t_{1}$ in $G$.
Let $P_{2}$ be the subpath of $P_{2}'$ that connects $s_{2}$ and $t_{2}$.
Since $P_{1}'$ and $P_{2}'$ are vertex-disjoint, so are $P_{1}$ and $P_{2}$.
\qed
\end{proof}

\begin{theorem}
For every constant $\ell \ge 4$, {\fsapp} is NP-complete. 
\end{theorem}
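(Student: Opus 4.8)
The plan is to reduce from {\falpp} restricted to the instances constructed in Section~\ref{sec:grid-graphs} for the same fixed $\ell\ge 4$ --- namely the graphs produced in the proof of Theorem~\ref{thm:grid} (or already the planar bipartite bounded-degree gadget graphs of Lemma~\ref{lem:planar-bigraph} after the subdivision step). The whole reduction rests on one structural observation: in such an instance $(G^\ast,A^\ast,\ell)$ \emph{every} $A^\ast$-path has length exactly $\ell$. Granting this, an $A^\ast_{\le\ell}$-path is by definition exactly an $(A^\ast,\ell)$-path, so $(G^\ast,A^\ast,|A^\ast|/2,\ell)$ is a yes-instance of {\fsapp} if and only if it is a yes-instance of {\falpp}, and the NP-hardness proved in Theorem~\ref{thm:grid} transfers verbatim; membership in NP is trivial. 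Since this works for every fixed $\ell\ge 4$, the theorem follows (and, as a bonus, one even gets hardness on grid graphs).

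To establish the structural observation I would use that these instances are ``spread out''. The grid instance is obtained from a gadget graph by subdividing every edge $e$ at least $2\ell$ times, so any two distinct vertices of the gadget graph lie at distance more than $\ell$ in the subdivided graph; moreover, by Observation~\ref{obs:subdiv} the new terminals $x_0,x_1$ on a subdivided edge are placed at mutual distance exactly $\ell$ and --- taking as the reference endpoint of $e$ the endpoint lying in $A$ whenever one exists, which pins the offset $d_e$ to $\ell$ because every edge with an endpoint in $A$ lies on an $(A,\ell)$-path ending there (and no edge has both endpoints in $A$, else its guide could not have length $\ell$) --- at distance $\ge\ell$ from the endpoints of $e$. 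Hence the only way two terminals of $G^\ast$ could be closer than $\ell$ is through a shared endpoint $u$ of two subdivided edges $P_e,P_f$; choosing the amount of subdivision of each edge to be a large enough multiple of $\ell$ so that every terminal on a subdivided edge is at distance $\ge\ell$ from $u$ when $u\in A^\ast$ and $\ge\ell/2$ from $u$ when $u\notin A^\ast$ keeps all terminal pairs at distance $\ge\ell$. Re-deriving the equivalence of Observation~\ref{obs:subdiv} under this slightly more generous spacing is routine.

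The main obstacle is precisely this ``no short $A$-path'' bookkeeping around the degree-$3$ and degree-$4$ vertices of the gadget graph (for instance the self-crossing points of the synchronization cycles in the split and NOR gadgets): two terminals sitting on two edges incident to a common non-terminal vertex $u$ can in principle be as close as $2$, and one must check that the chosen subdivision pattern together with the reference-endpoint rule prevents the two ``inward'' offsets at $u$ from summing to less than $\ell$. This is an entirely local argument using only the bounded degree and the fact that the guide $\psi$ puts every edge on a length-$\ell$ $A$-path, but it is the step that needs care. (An alternative that avoids it is to build a bespoke length-$\ell$ terminal gadget all of whose $A$-paths have length exactly $\ell$, wire $\ell-1$ free internal vertices through it, and reduce from a disjoint path-packing problem such as $\lambda$-Path Partition as in Observation~\ref{obs:fixed-ell}; but ensuring such gadgets create no unintended short $A$-paths is the same difficulty in disguise.)
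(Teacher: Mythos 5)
Your reduction hinges entirely on the claim that in the instances of Theorem~\ref{thm:grid} every $A$-path has length exactly $\ell$, and that claim is both unproved and, for the construction as it actually stands, false. The subdivision step of Observation~\ref{obs:subdiv} places the new terminal $x_{1}$ at distance $d_{e}+\ell$ from $v$ on a path $P_{e}$ of length $2\ell+1$, hence at distance $\ell+1-d_{e}$ from the other endpoint $w$; whenever $d_{e}\ge 2$ and $w$ lies within a short distance of another terminal (which happens, e.g., at connection pairs, whose two vertices both belong to $A$ and sit next to each other, and around the degree-$3$ and degree-$4$ vertices of the split and NOR gadgets), this yields an $A$-path of length strictly less than $\ell$, so an $A_{\le\ell}$-path packing can cheat in ways an $(A,\ell)$-path packing cannot. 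You are aware of this --- you explicitly name it as ``the main obstacle'' and ``the step that needs care'' --- but you do not carry it out: the proposed fixes (a reference-endpoint rule pinning $d_{e}$, re-spacing the subdivisions, ``re-deriving the equivalence \ldots is routine'') are exactly the content that would constitute the proof, and they depend on figure-level details of the gadgets that you cannot verify and that were never designed with short $A$-paths in mind. As written, this is a gap, not a proof.

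For comparison, the paper does not try to recycle the {\falpp} hardness at all. It gives a self-contained reduction from {\bsat} (each variable occurring twice positively and once negatively) using variable gadgets made of three ``vertical'' and three ``slanted'' length-$\ell$ paths and clause gadgets made of internally disjoint length-$\ell$ paths, and then proves correctness by directly computing, for every terminal, which other terminals lie within distance $\ell$ of it; this pins down the complete list of available $A_{\le\ell}$-paths and shows they are exactly the intended ones. That explicit distance bookkeeping is precisely the work your proposal defers. If you want to salvage your route, you would need to redo the gadget and subdivision design of Section~\ref{sec:grid-graphs} with the additional invariant ``no two terminals at distance less than $\ell$ except along intended paths'' and re-prove Lemma~\ref{lem:planar-bigraph} and Observation~\ref{obs:subdiv} under it --- at which point you have done at least as much work as the paper's direct reduction, without the benefit of its simpler gadgets.
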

\begin{proof}
We show the NP-hardness of {\fsapp} with constant $\ell \ge 4$
by a reduction from a variant of 3-\textsc{Sat} with the following restrictions:
(1) each clause is a disjunction of two or three literals, and 
(2) each variable occurs exactly twice as a positive literal and exactly once as a negative literal.
We call this variant {\bsat}. It is known  that {\bsat} is NP-complete~\cite{FellowsKMP95}.

Let $(U, \mathcal{C})$ be an instance of {\bsat} with
the variables $U = \{u_{1}, \dots, u_{n}\}$ and the clauses $\mathcal{C} = \{C_{1}, \dots, C_{m}\}$.
If the positive literal of $u_{i}$ appears in $C_{p}$ and $C_{q}$ with $p < q$,
then we say that the first occurrence of $u_{i}$ is in $C_{p}$ and the second is in $C_{q}$.

For each $i \in [n]$, we construct the variable gadget for $u_{i}$ as follows (see \figref{fig:sapp4} (left)).
Take three paths of length $\ell$ from $s_{i}^{1}$ to $t_{i}^{1}$, from $s_{i}^{2}$ to $t_{i}^{2}$,
and from $\overline{s}_{i}$ to $\overline{t}_{i}$. We call these paths \emph{vertical}.
Add three paths of length $\ell-1$
from $s_{i}^{1}$ to the neighbor of $\overline{t}_{1}$,
from $s_{i}^{2}$ to the neighbor of $t_{i}^{1}$, and
from $\overline{s}_{i}$ to the neighbor of $t_{i}^{2}$.
Now these paths together with the edges incident to $t_{i}^{1}$, $t_{i}^{2}$, and $\overline{t}_{i}$
form three paths of length $\ell$ from $s_{i}^{1}$ to $\overline{t}_{1}$,
from $s_{i}^{2}$ to $t_{i}^{1}$, and from $\overline{s}_{i}$ to $t_{i}^{2}$.
We call these paths \emph{slanted}.
For $j \in \{1,2\}$, we call the vertex of distance $2$ from $t_{i}^{j}$ on the corresponding vertical path $x_{i}^{j}$.
We call the vertex of distance $2$ from $\overline{t}_{i}$ on the corresponding slanted path $\overline{x}_{i}$.

For each $C \in \mathcal{C}$, we construct the clause gadget for $C$ as follows (see \figref{fig:sapp4} (right)).
Assume that $C$ includes $c$ literals.
Take two vertices $s_{C}$ and $t_{C}$ and then
add $c$ internally disjoint paths of length $\ell$ between $s_{C}$ and $t_{C}$.
We bijectively map the neighbors of $t_{C}$ to the literals in $C$.
For each neighbor $v$ of $t_{C}$, 
if $v$ is mapped to the $j$th positive occurrence of $u_{i}$,
then we identify $v$ with $x_{i}^{j}$ in the variable gadget for $u_{i}$.
Similarly, if $v$ is mapped to the negative occurrence of $u_{i}$, then we identify $v$ with $\overline{x}_{i}$.

We call the constructed graph $G$ and
set $A = \{s_{i}^{1}, s_{i}^{2}, \overline{s}_{i}, t_{i}^{1}, t_{i}^{2}, \overline{t}_{i} \mid i \in [n]\}
\cup \{s_{C}, t_{C} \mid C \in \mathcal{C}\}$. This completes the construction.
We show that $(U, \mathcal{C})$ is a yes instance of {\bsat}
if and only if $(G,A,|A|/2, \ell)$ is a yes instance of {\fsapp}.

\begin{figure}[bth]
  \centering
  \includegraphics[scale=1]{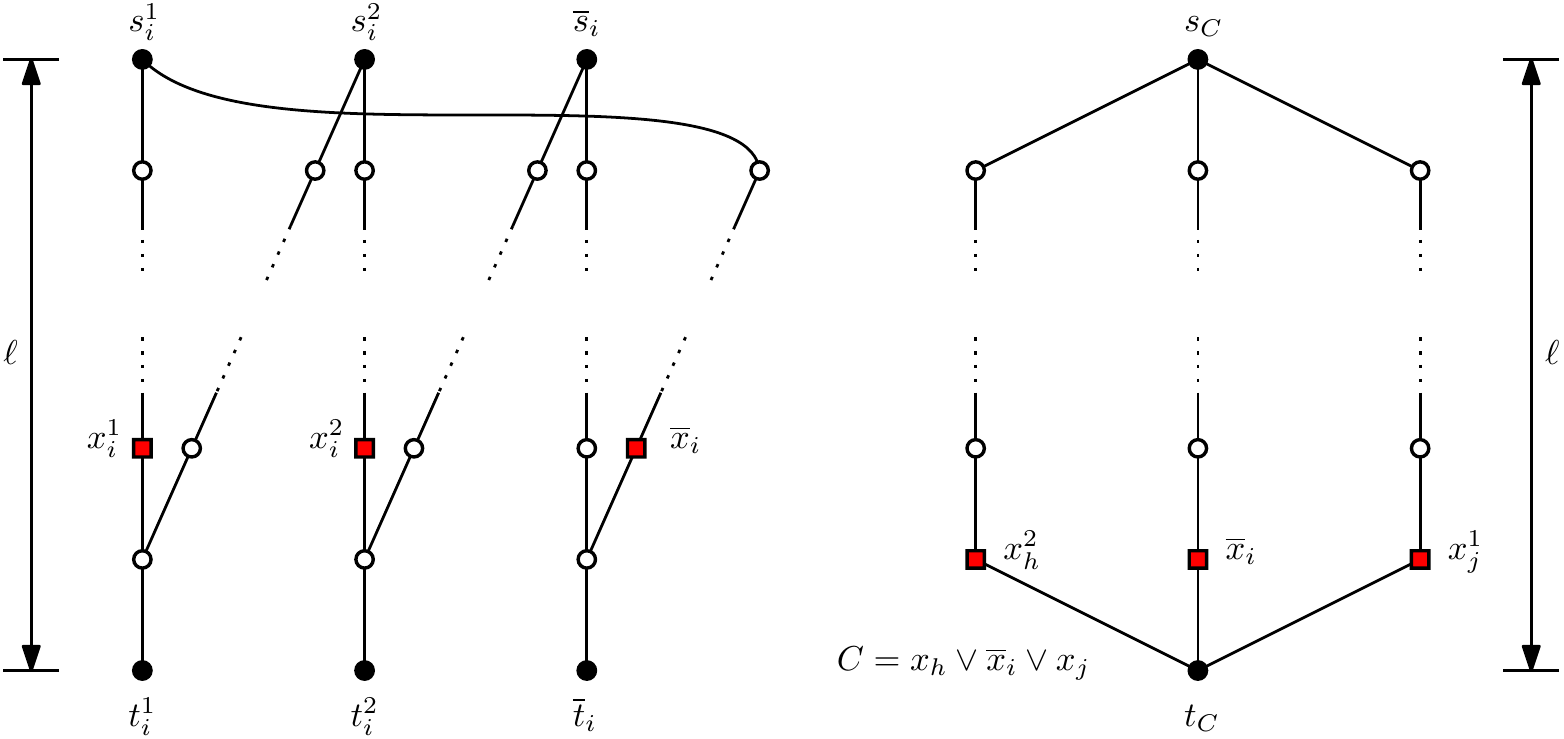}
  \caption{The variable gadget (left) and the clause gadget (right).}
  \label{fig:sapp4}
\end{figure}

To show the only-if direction,
assume that there is a truth assignment to the variables in $U$ that satisfies $\mathcal{C}$.
For each variable $u_{i} \in U$, if $u_{i}$ is set to be true,
then we take the slanted paths in the variable gadget for $u_{i}$;
otherwise, we take the vertical paths.
Since the variable gadgets are vertex-disjoint, the paths taken so far are vertex-disjoint.
Then for each clause $C \in \mathcal{C}$, let $l$ be a literal in $C$ that set to be true.
Observe that the neighbor, say $v_{l}$, of $t_{C}$ mapped to $l$ is not used 
in the paths selected in the variable gadgets.
Thus we can take the $s_{C}$--$t_{C}$ path passing through $v_{l}$.
Since all paths selected have length $\ell$
and all vertices in $A$ are used as endpoints of the selected path,
$(G,A,|A|/2, \ell)$ is a yes instance of {\fsapp}.

To prove the if direction,
assume that there is a set of $|A|/2$ vertex-disjoint $A_{\ell}$-paths $\mathcal{P}$ in $G$.
First observe that for each $C \in \mathcal{C}$, $t_{C}$ is the only vertex distance at most $\ell$ from $s_{C}$:
for each neighbor $v$ of $t_{C}$, the distance from $s_{C}$ to $v$ is $\ell - 1$,
and the distance from $v$ to any vertex in $A \setminus \{s_{C}, t_{C}\}$ is at least $\min\{2,\ell-2\} \ge 2$.
Thus, for each $C \in \mathcal{C}$, there is a path in $\mathcal{P}$ that has $s_{C}$ and $t_{C}$ as its endpoints.
Next we claim that for each variable $u_{i} \in U$,
either all vertical paths or all slanted paths in the variable gadget for $u_{i}$ are selected into $\mathcal{P}$.
To see this, observe that $\{\overline{t}_{i}, t_{i}^{1}\}$
(resp.~$\{t_{i}^{1}, t_{i}^{2}\}$, $\{t_{i}^{2}, \overline{t}_{i}\}$)
is the set of vertices in $A \setminus \{s_{C}, t_{C} \mid C \in \mathcal{C}\}$
that are distance at most $\ell$ from $s_{i}^{1}$ (resp.~$s_{i}^{2}$, $\overline{s}_{i}$).
Therefore, if we pick a vertical (resp.~slanted) path in a variable gadget,
then we have to take all vertical (resp.~slanted) paths in that variable gadget.
We now construct a truth assignment to $U$ by setting $u_{i}$ true if and only if
$\mathcal{P}$ includes the slanted paths in the variable gadget for $u_{i}$.
For $C \in \mathcal{C}$, let $P \in \mathcal{P}$ be the path connecting $s_{C}$ and $t_{C}$.
Let $l$ be the literal in $C$ corresponding to the neighbor of $t_{C}$ on $P$.
If $l$ is a positive literal of a variable $u_{i}$,
then $\mathcal{P}$ includes the slanted paths in the variable gadget for $u_{i}$,
and thus $u_{i}$ is set to be true.
If $l$ is a negative literal of $u_{i}$,
then $\mathcal{P}$ includes the vertical paths in the variable gadget for $u_{i}$,
and thus $u_{i}$ is set to be false.
In both cases, $l$ is true and $C$ is satisfied.
\qed
\end{proof}


\section{Concluding remarks}
\label{sec:conclusion}

In this paper, we have introduced a new problem \textsc{$(A,\ell)$-Path Packing} (\alpp)
and showed tight complexity results.
One possible future direction would be the parameterization by clique-width $\cw$,
a generalization of treewidth~(see \cite{HlinenyOSG08}).
In particular, we ask the following two questions.
\begin{itemize}
  \item Does {\alpp} admit an algorithm of running time $O(n^{\cw})$?
  \item Is {\alpp} fixed-parameter tractable parameterized by $\cw + \ell$?
\end{itemize}

We also considered a variant of the problem which we call \textsc{$A_{\le \ell}$-Path Packing} (\sapp).
We showed that results similar to the ones on {\alpp} hold also on {\sapp},
but we were not able to determine the complexity parameterized by treewidth. 
We left the following question on {\sapp}
\begin{itemize}
  \item Is {\sapp} W[1]-hard parameterized by $\tw$, $\tw + k$, or $\tw + |A|$?
\end{itemize}


\bibliographystyle{plainurl}
\bibliography{lap}

\end{document}